\documentclass[12pt]{article}

\usepackage{graphicx}
\usepackage{float}
\usepackage{amsmath}
\usepackage{amssymb}
\usepackage{amsthm}
\usepackage{comment}
\usepackage{xcolor}
\usepackage{hyperref}
\hypersetup{colorlinks = false}
\usepackage[numbers,sort&compress]{natbib}
\usepackage[margin=0.75in]{geometry}

\newcommand{\be}{\begin{equation}}
\newcommand{\ee}{\end{equation}}
\newcommand{\ba}{\begin{array}}
\newcommand{\ea}{\end{array}}
\newcommand{\bea}{\begin{eqnarray}}
\newcommand{\eea}{\end{eqnarray}}

\DeclareMathOperator{\tr}{Tr}
\newcommand{\ra}{\rangle}
\newcommand{\la}{\langle}

\newcommand{\calH}{{\cal H }}

\newcommand{\calE}{{\cal E }}

\newcommand{\calS}{{\cal S }}

\newcommand{\calZ}{{\cal Z }}

\newcommand{\CC}{\mathbb{C}}
\newcommand{\DD}{\mathbb{D}}
\newcommand{\RR}{\mathbb{R}}

\newcommand{\ket}[1]{|#1\rangle}
\newcommand{\bra}[1]{\langle #1|}

\newcommand{\eps}{\epsilon}
\newcommand{\bits}[1]{\{0,1\}^{#1}}

\newcommand{\odisk}[1]{{\mathbb{D}_{{#1}}}}

\newcommand{\multi}[1]{\mathbf{{#1}}}
\newcommand{\rr}{\multi{r}}

\newtheorem{dfn}{Definition}

\newtheorem{lemma}{Lemma}
\newtheorem{fact}{Fact}

\newtheorem{corol}{Corollary}
\newtheorem{conj}{Conjecture}
\newtheorem{theorem}{Theorem}

\newtheorem{numerics}{Numerical Study}

\newcommand{\footremember}[2]{%
    \footnote{#2}
    \newcounter{#1}
    \setcounter{#1}{\value{footnote}}%
}
\newcommand{\footrecall}[1]{%
    \footnotemark[\value{#1}]%
}

\title{Lee-Yang tensors and Hamiltonian complexity}

\author{Benjamin Wong\footremember{iqc}{Department of Combinatorics and Optimization and Institute for Quantum Computing, University of Waterloo}
\and
Sergey Bravyi\footremember{ibm}{IBM Quantum, IBM T.J. Watson Research Center, Yorktown Heights NY 10598}
\and
David Gosset\footrecall{iqc} \footremember{perimeter}{Perimeter Institute for Theoretical Physics}
\and
Yinchen Liu\footrecall{iqc} 
}

\begin{document}
\maketitle
\begin{abstract}

A complex tensor with $n$ binary indices can be identified with a multilinear polynomial in $n$ complex variables. We say it is a \textit{Lee-Yang tensor} with radius $r$ if the polynomial is nonzero whenever all variables lie in the open disk of radius $r$. In this work we study quantum states and observables which are Lee-Yang tensors when expressed in the computational basis.  

 We first review their basic properties, including closure under tensor contraction and certain quantum operations. We show that quantum states with Lee-Yang radius $r>1$ can be prepared by quasipolynomial-sized circuits. We also show that every Hermitian operator with Lee-Yang radius $r>1$ has a unique principal eigenvector. These results suggest that $r=1$ is a key threshold for quantum states and observables. 

Finally, we consider  a family of two-local Hamiltonians where every interaction term
energetically favors a deformed EPR state $|00\rangle+s|11\rangle$ for some $0\leq s\leq 1$. 
We numerically investigate this model and find that on all graphs considered the Lee-Yang radius  of the ground state is at least $r=1/\sqrt{s}$
while the spectral gap between the  two smallest eigenvalues is at least $1-s^2$. We conjecture that these lower bounds hold more generally; in particular, this would provide an efficient quantum adiabatic algorithm for the quantum Max-Cut problem on uniformly weighted bipartite graphs.

\end{abstract}

\section{Motivation\label{sec:motivation}}

Tensor networks are widely used to model many-body quantum systems. They provide a succinct description  of quantum states living in an exponentially large Hilbert space and serve as a powerful computational tool. By imposing a proper algebraic or geometric structure on the underlying tensors, one can often engineer states with desired physical properties, such as topological order~\cite{orus2019tensor,aguado2008entanglement,schuch2012topological,csahinouglu2021characterizing}, criticality~\cite{vidal2008class}, or even toy models of quantum gravity~\cite{pastawski2015holographic}. The structure of the tensors also plays a central role in the computational complexity of tensor network contraction. For example, tensor networks associated with stabilizer states~\cite{aaronson2004improved} or with matchgate tensors on planar graphs~\cite{knill2001fermionic,terhal2002classical,bravyi2009contraction} can be contracted in polynomial time, whereas contracting networks built from generic, unstructured tensors is \#P-hard~\cite{schuch2007computational,haferkamp2020contracting}.

In this work we revisit the class of Lee-Yang tensors, which has been studied extensively in the mathematics literature but has received comparatively little attention in the quantum computing community. These tensors exhibit a striking algebraic structure characterized by the zero sets of certain multivariate polynomials, as discussed below. Most importantly, the class of Lee-Yang tensors is closed under tensor products and under contraction of indices. 
This closure property is reminiscent of exactly solvable models in many-body physics. 
As a consequence, the corresponding family of $n$-qubit operators forms a semigroup, and the associated family of $n$-qubit states is closed under postselected Pauli measurements (in the $X$ or $Y$ basis). These features point to a rich underlying mathematical structure and suggest potential algorithmic implications that deserve further investigation.

Our starting point is the seminal 1952 paper of Lee and Yang \cite{lee1952}
who investigated the partition function of the ferromagnetic Ising model in the presence of a uniform complex-valued magnetic field $\mu\in \mathbb{C}$, establishing that all of its zeros lie on the imaginary axis $\mathrm{Re}(\mu)=0$ in the complex plane \cite{lee1952}. A powerful generalization due to Suzuki and Fisher describes a family of two-local \textit{quantum} Hamiltonians with this property \cite{sf71}. 

To make things concrete, let us consider the spin-$1/2$ Heisenberg model on a graph $G=(V,E)$, which describes $n=|V|$ qubits that interact via the Hamiltonian
\begin{equation}
H=\sum_{\{i,j\}\in E} J_{ij}(X_i X_j+Y_i Y_j+Z_i Z_j)
\label{eq:heis}
\end{equation}
Here $X,Y,Z$ are the Pauli matrices and $\{J_{ij}\}$ are arbitrary nonnegative coefficients. The problem of estimating the ground energy of this Hamiltonian has 
also been called ``quantum Max-Cut"; it is known that computing the ground energy to inverse polynomial precision is QMA-hard~\cite{piddock2015complexity}, and moreover it remains NP-hard even if we only ask for an  approximation to within a certain constant factor \cite{piddock2025quantum}. It is therefore extremely unlikely that efficient approximation algorithms exist for the ground energy of this Hamiltonian in the worst case.

Now let us consider the special case of  Eq.~\eqref{eq:heis} where the graph $G$ is bipartite. This restriction could in principle make the quantum Max-Cut problem easier. Indeed, whether or not the ground energy of Eq.~\eqref{eq:heis} for bipartite $G$ can be efficiently approximated is an open question \cite{gharibian2024guest} that has attracted considerable attention in the last few years~\cite{king2023improved, marwaha2024performance, tao2025refined, ju2025improved,apte20250,gribling2025improved}. 
If $V=AB$ is a bipartition of $G$ and we conjugate Eq.~\eqref{eq:heis} by $Y_A=\bigotimes_{i\in A} Y_i$ then we get a Hamiltonian
\[
H'=\sum_{\{i,j\}\in E} J_{ij} (-X_iX_j+Y_iY_j-Z_iZ_j)
\]
composed of two-qubit interactions 
$-X\otimes X + Y\otimes Y - Z\otimes Z$ which energetically favour the EPR state $(|00\rangle+|11\rangle)/\sqrt{2}$.
The partition function of this system\footnote{For simplicity, here
the partition function is evaluated at the inverse temperature $\beta=1$. This does not restrict the generality since one can absorb $\beta$
in the coefficients $J_{ij}$.}  in the presence
of a complex-valued magnetic field $\mu\in \CC$
is 
$\calZ(\mu)=\mathrm{Tr}\left(e^{-H'-\mu \sum_{i=1}^{n} Z_i}\right)$.
Suzuki and Fisher's extension of the Lee-Yang theorem \cite{sf71} shows that $\calZ(\mu)\ne 0$
whenever $\mathrm{Re}(\mu)\neq 0$.
 Their proof also elucidates a very special feature of the Gibbs state of this model, establishing that it is a \textit{Lee-Yang tensor} (with radius $1$). This means that the multivariate polynomial
\[
f(z_1,z_2,\ldots, z_{2n})=\sum_{x,y\in \{0,1\}^n} \langle x|e^{-H'}|y\rangle z_1^{x_1}\ldots z_n^{x_n} z_{n+1}^{y_1}\ldots z_{2n}^{y_n}
\]
with complex variables $z_1,z_2,\ldots,z_{2n}\in \CC$
does not vanish whenever $z_1,z_2,\ldots, z_{2n}$ are all inside the unit circle in the complex plane.

The result of Ref.~\cite{sf71} goes well beyond this example, and shows that this curious structural feature of the Gibbs state is shared by a broad class of quantum many-body systems, see Section \ref{sec:qmb} for details. A beautiful paper of Harrow, Mehraban, and Soleimanifar \cite{harrow2020classical} investigated whether this kind of zero-freeness guarantee can be leveraged to develop fast classical algorithms based on polynomial interpolation techniques. They ask:

\vspace{0.5cm}
\textit{Are quantum many-body systems that satisfy a Lee-Yang theorem computationally easy? }
\vspace{0.5cm}

Answering this question in the affirmative would of course resolve the complexity of the bipartite quantum Max-Cut problem. It would also extend the known families of quantum many-body systems which admit fast classical algorithms in arbitrary geometries; this includes the ferromagnetic Ising model \cite{jerrum1993polynomial} with a transverse magnetic field \cite{bravyi2014monte}, as well as the XY ferromagnet\cite{bravyi2016polynomial}.

In this work we pick up this thread and investigate Lee-Yang tensors through the lens of quantum complexity theory. Though we are not able to answer the above question, we establish several results concerning states and observables that are Lee-Yang tensors.

We note that Lee-Yang tensors, the associated multivariate stable polynomials and their generalizations have been studied fairly extensively in mathematics, see e.g. \cite{wagner2011multivariate} for a review. They have played a central role in celebrated results including the characterization of stability preserving operations due to Borcea and Br{\"a}nd{\'e}n \cite{borcea1,borcea2} and the resolution of the Kadison-Singer problem due to Marcus, Spielman, and Srivastava ~\cite{marcus2015interlacing}. In this work we focus specifically on the intersection of these topics with quantum information science.

This paper is structured as follows. In Section \ref{sec:lyt} we provide definitions of Lee-Yang tensors and review their basic properties including closure under tensor contraction and certain quantum operations. Then in Section \ref{sec:herm} we establish properties of Hermitian operators which are Lee-Yang tensors; among other things, we prove an analogue of the Perron-Frobenius theorem. We then consider the circuit complexity of quantum states which are Lee-Yang tensors with radius $r>1$; we show in Section \ref{sec:state-prep} that any such state can be prepared by a quasipolynomial-sized quantum circuit.

Finally, in Section \ref{sec:qmb} we return to the topic of quantum many-body systems. We review the work of Suzuki and Fisher which establishes that certain Gibbs states are Lee-Yang tensors with radius $r=1$. Then we consider a subclass of  Suzuki-Fisher
Hamiltonians which can be viewed as a deformation of the Heisenberg model Eq.~\eqref{eq:heis} when restricted to a bipartite graph. Each term in the Hamiltonian is proportional to a projector onto a deformed EPR state $|00\rangle+s|11\rangle$, where $0\leq s\leq 1$.  
We numerically investigate this model and find that on all graphs considered the Lee-Yang radius 
of the ground state is at least $1/\sqrt{s}$
while the spectral gap between the smallest and second smallest eigenvalues is at least $1-s^2$. We formulate a plausible conjecture regarding the Lee-Yang radius of the Gibbs state and discuss implications for the quantum Hamiltonian complexity theory. 
We conclude with a discussion of some open questions in Section~\ref{sec:open}.

\section{Lee-Yang tensors \label{sec:lyt}}
Suppose $\psi \in (\CC^2)^{\otimes n}$ is  a complex tensor with $n$ binary indices
and entries $\psi_x$ labelled by bit strings $x\in \{0,1\}^n$. 
We shall be interested in tensors describing multi-qubit state vectors, density matrices,
operators, and quantum channels. 
Define a generating polynomial $f_\psi \, : \, \CC^n \to \CC$ as 
\be\label{eq:lee-yang-poly}
f_\psi(z) = \sum_{x \in \{0,1\}^n} \psi_x \prod_{a\in \mathrm{supp}(x)} z_a.
\ee
Here $z=(z_1,\ldots,z_n)$ is a vector of complex variables and $\mathrm{supp}(x)$ denotes the support of a bit string $x$, that is,
the set of all indices $i\in \{1,2,\ldots,n\}$ such that $x_i=1$.
Equivalently, $f_\psi(z)$ is the inner product between an $n$-qubit state $|\psi\ra=\sum_{x\in \{0,1\}^n}\; \psi_x |x\ra$ and a tensor product of single-qubit states $|0\ra + z_a^*|1\ra$
with $a=1,\ldots,n$. In the special case $n=0$ the tensor $\psi \in \CC$ is a scalar and we set $f_\psi=\psi\in \CC$.
Suppose  $r>0$ is a real number. Let
\[
\odisk{r} = \{z \in \CC\, : \, |z|<r\} 
\]
be the open zero-centered disk of radius $r$.
 If $\rr =(r_1,\ldots,r_n)\in \RR_+^n$ is a tuple of radii, let
\[
\odisk{\rr} = \odisk{r_1} \times \cdots \times  \odisk{r_n}
\]
be the corresponding multi-disk in $\CC^n$.
\begin{dfn}
We say that $\psi \in (\CC^2)^{\otimes n}$ is a Lee-Yang tensor with 
a radius   $\rr \in \RR_+^n$ if the generating polynomial of $\psi$ has no zeros
in the open zero-centered  multi-disk  of radius $\rr$, that is,
\be
f_\psi(z)\ne 0 \quad \mbox{for all $z\in \odisk{\rr}$}.
\ee
A scalar $\psi\in\CC$ is a Lee-Yang tensor iff $\psi\ne 0$. 
\end{dfn}
Let $LY_n(\rr)$ be the set of all Lee-Yang tensors $\psi\in (\CC^2)^{\otimes n}$  with the  radius $\rr$.  
Given a real number $r>0$, let
$LY(r)$ be the set of  all Lee-Yang tensors  with  the radius at least $r$
for each variable:
\be
LY(r)=\bigcup_{n\ge 0} \; LY_n((r,r,\ldots,r)).
\ee

The following lemma is a rephrasing of results from~\cite{asano1970theorems,ruelle2010characterization}.
\begin{lemma}[\bf Tensor contraction]
\label{lemma:contraction}
Consider any tensor $\psi \in LY_n(\rr)$ with $n\ge 2$. Let $\phi\in (\CC^2)^{\otimes (n-2)}$ be a tensor obtained from $\psi$ by contracting some pair of indices $i<j$
such that $r_i r_j>1$. Then $\phi \in LY_{n-2}(\rr')$ where $\rr'$ is obtained from $\rr$ by deleting the components $r_i$ and $r_j$. 
Furthermore, if $r_i r_j=1$ then $\phi$ is either identically zero or $\phi \in LY_{n-2}(\rr')$.
\end{lemma}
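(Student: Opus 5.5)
Contracting indices $i$ and $j$ means setting $\phi_y=\sum_{b\in\{0,1\}}\psi_{y,x_i=b,x_j=b}$. In terms of generating polynomials, this is the Asano contraction. Write
\[
f_\psi(z)=A+Bz_i+Cz_j+Dz_iz_j,
\]
where $A,B,C,D$ are polynomials in the remaining variables $w=(z_a)_{a\ne i,j}$. Then $f_\phi(w)=A(w)+D(w)$. Fix any $w\in\odisk{\rr'}$. The plan is to show $A(w)+D(w)\ne 0$ using only the fact that the bilinear polynomial $g(z_i,z_j)=A+Bz_i+Cz_j+Dz_iz_j$ has no zeros in $\odisk{r_i}\times\odisk{r_j}$. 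This reduces the lemma to a two-variable statement, and everything is pointwise in $w$.

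\textbf{Two-variable step.} For a bilinear $g$ zero-free on $\odisk{r_i}\times\odisk{r_j}$ I will show $|D|\,r_ir_j\le|A|$ whenever the polynomial is not degenerate. Rescaling $z_i\to r_iz_i$ and $z_j\to r_jz_j$ reduces this to the unit bidisk with $D$ replaced by $Dr_ir_j$. The claim is then $|D|\le|A|$, with $A\ne0$ unless $g\equiv0$. Note $A=g(0,0)\ne0$ automatically from zero-freeness.

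For the unit-bidisk claim, suppose $|D|>|A|$. The zero set $z_j=-(A+Bz_i)/(C+Dz_i)$ is a M\"obius image of $z_i$, so I examine it near $z_i$ small. Consider the product of the roots of the quadratic obtained by setting $z_i=z_j=z$. More cleanly, restrict to the diagonal $z_i=z_j=\zeta$ after a phase choice: the substitution $z_j=e^{i\theta}z_i$ gives the quadratic $A+(B+Ce^{i\theta})\zeta+De^{i\theta}\zeta^2$. Its two roots have product $A e^{-i\theta}/D$, of modulus $|A|/|D|<1$. Hence some root has modulus $<1$, giving a zero with $|z_i|=|z_j|<1$ and a contradiction. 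The degenerate case $D=0$ is trivial. So $|D|\le|A|$, i.e.\ $|D|r_ir_j\le|A|$ in the original scaling.

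\textbf{Conclusion.} If $r_ir_j>1$, then $|D|\le|A|/(r_ir_j)<|A|$ because $A\ne0$. Hence $A+D\ne0$ for every $w\in\odisk{\rr'}$, so $\phi\in LY_{n-2}(\rr')$.

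If $r_ir_j=1$, we only get $|D(w)|\le|A(w)|$, and $A+D$ could vanish at some point. Here I invoke Hurwitz's theorem. Consider $h(w)=A(w)+D(w)$ on the connected domain $\odisk{\rr'}$. It is the limit as $t\to1^-$ of $A+tD$, and each $A+tD$ is zero-free by the strict inequality $|tD|<|A|$. A locally uniform limit of zero-free holomorphic functions on a connected domain is either zero-free or identically zero. Multivariable Hurwitz follows from the one-variable version applied on complex lines. Thus $\phi$ is either identically zero or lies in $LY_{n-2}(\rr')$.

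The main obstacle I expect is the two-variable inequality, specifically making the root-product argument airtight. In particular, the case where the quadratic degenerates, $De^{i\theta}=0$ or leading coefficients vanish, must be handled; this is covered by $D=0$. The Hurwitz step at the boundary case $r_ir_j=1$ is the other delicate point.
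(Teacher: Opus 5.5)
Your proof is correct, and it takes a different route from the paper's. The paper argues by contradiction. It fixes $z'$ with $g_{00}+g_{11}=0$ and rewrites $f_\psi$ as $(1-z_1z_2)g_{00}+z_1g_{10}+z_2g_{01}$. It then invokes the Asano--Ruelle lemma (Lemma~\ref{lemma:Asano}, stated for arbitrary closed sets $K_1,K_2\not\ni 0$) as a black box, which yields the contradiction at $z=1$. For $r_ir_j=1$, the paper deforms the tensor itself, setting $\psi_k=(|0\ra\la 0|+s_k|1\ra\la 1|)^{\otimes n}\psi$, applies the strict case, and passes to the limit with Corollary~\ref{corol:seq}.

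You instead prove the disk case of Asano's lemma directly. Restricting to the diagonal and using Vieta gives $|D(w)|\,r_ir_j\le |A(w)|$, which is equivalent to $A(w)+D(w)z\neq 0$ for $|z|<r_ir_j$. The phase $e^{i\theta}$ is unnecessary, since $\theta=0$ already works. Your hedge ``whenever the polynomial is not degenerate'' is also unnecessary: zero-freeness forces $A\neq0$, and $D=0$ is trivial.

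What your route gains is an argument that is self-contained, elementary and quantitative. It gives $|f_\phi(w)|\ge (1-1/(r_ir_j))\,|A(w)|$ on $\odisk{\rr'}$, which could feed into robustness estimates like Lemma~\ref{lemma:robustness}. It also makes the boundary case nearly immediate, because $A+tD$ is zero-free for $t<1$. You could even avoid Hurwitz there: $D/A$ is holomorphic on $\odisk{\rr'}$ with modulus at most $1$. If $A+D$ vanished at an interior point, the maximum modulus principle would force $D/A\equiv -1$, so $\phi=0$.

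What the paper's route gains is generality. Asano--Ruelle applies to arbitrary closed zero-excluding regions (half-planes, sectors, and so on), not only disks. The same contraction principle therefore covers other stability notions without new work.
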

For completeness, we provide a proof in the Appendix~\ref{app:A}. It is sometimes convenient to consider a sequence of tensors $\{\psi_j\}_{j\geq 1}$ such that $\psi_j\in LY(r)$ for each $j\geq 1$. If the sequence converges to a nonzero tensor $\psi$, then $\psi\in LY(r)$ as well. As we now explain, this is a simple consequence of Hurwitz's theorem (see, e.g., \cite{choe2004homogeneous}).

\begin{theorem}[\textbf{Hurwitz's Theorem}]
Suppose $n$ is a positive integer, $D$ is a domain in $\mathbb{C}^{n}$,  and $\{f_k\}_{k\in\mathbb{N}}$ is a sequence of non-vanishing analytic functions on $D$ that converges to $f$ uniformly on compact subsets of $D$. Then $f$ is either non-vanishing or identically zero on $D$.
\end{theorem}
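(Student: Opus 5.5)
The plan is the classical route through the argument principle, proceeding one variable at a time. First I would reduce to the case where the limit function has a zero strictly inside the domain. Suppose $f$ is not identically zero on $D$ but $f(a)=0$ for some $a\in D$. Since $f$ is a locally uniform limit of analytic functions, it is analytic: Cauchy's integral formula passes to uniform limits on polydisks. The aim is then to show that some $f_k$ must vanish near $a$, which is a contradiction.

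To reduce to one variable, I would pick a direction along which $f$ restricts nontrivially. Because $f\not\equiv 0$ and $D$ is connected, the identity theorem shows that $f$ does not vanish identically on any open neighbourhood of $a$. So there is a vector $v\in\CC^n$ for which the one-variable function $g(\lambda)=f(a+\lambda v)$ is not identically zero near $\lambda=0$. If it vanished for every $v$, then $f$ would vanish on a ball around $a$. Now $g(0)=0$ and the zero is isolated, so I can choose $\rho>0$ with two properties: the closed disk $\{a+\lambda v:|\lambda|\le\rho\}$ lies in $D$, and $g$ has no zeros on the circle $|\lambda|=\rho$.

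Set $g_k(\lambda)=f_k(a+\lambda v)$. The closed disk is compact, so $g_k\to g$ uniformly there. Let $m=\min_{|\lambda|=\rho}|g(\lambda)|>0$. For large $k$ we have $|g_k-g|<m\le|g|$ on the circle. By Rouché's theorem, $g_k$ and $g$ then have the same number of zeros inside the disk, and that number is at least one. Equivalently, the argument-principle integrals $\frac{1}{2\pi i}\oint g_k'/g_k\,d\lambda$ converge to the corresponding integral for $g$, and this is a nonzero integer. Either way, $f_k$ has a zero in $D$, which contradicts the hypothesis that each $f_k$ is non-vanishing.

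The only step that needs care is the choice of $v$. One must ensure that a nontrivial restriction exists, and this is exactly where connectedness of $D$ is used through the several-variable identity theorem. The rest is standard one-variable theory.
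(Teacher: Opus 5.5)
Your proof is correct. The paper does not prove this statement at all: it quotes Hurwitz's theorem as a standard fact, pointing to \cite{choe2004homogeneous}. It uses the theorem only in Corollary~\ref{corol:seq}, where $D$ is a polydisk and the $f_k$ are polynomials. So there is no paper proof to compare against.

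Your argument is the standard way to obtain the several-variable version from the one-variable one. It has three steps.
\begin{itemize}
\item The limit $f$ is holomorphic, because Cauchy's formula on polydisks passes to the limit.
\item The several-variable identity theorem, together with connectedness of $D$, gives a complex line through the zero $a$ on which $f$ is not identically zero.
\item Rouch\'e's theorem on a small circle in that line then forces some $f_k$ to vanish. The image of the closed disk is a compact subset of the open set $D$, so the $g_k$ are holomorphic on a neighbourhood of the closed disk and converge uniformly there, as Rouch\'e requires.
\end{itemize}

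One step is stated tersely but is fine: the claim that if every restriction $\lambda\mapsto f(a+\lambda v)$ vanished near $\lambda=0$, then $f$ would vanish on a ball. Take $B(a,\delta)\subset D$ and $|v|<\delta$. The one-variable identity theorem on the disk $\{\lambda : a+\lambda v\in B(a,\delta)\}$ propagates the vanishing from near $\lambda=0$ to $\lambda=1$, so $f(a+v)=0$. A cleaner alternative is to take the lowest-degree nonzero homogeneous term $P_d$ in the Taylor expansion of $f$ at $a$ and choose $v$ with $P_d(v)\neq 0$.
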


\begin{corol}[\textbf{Limit of Lee-Yang tensors}]
Let $n\geq 1$ be a positive integer and let $\rr\in \mathbb{R}_{+}^n$. Consider a sequence of tensors $\psi_j \in (\mathbb{C}^{2})^{\otimes n}$ for $j\geq 1$. Suppose that for each $x\in \{0,1\}^n$ we have $\langle x|\psi_j \rangle \rightarrow \langle x|\psi\rangle$ as $j\rightarrow \infty$, for some $\psi \in (\mathbb{C}^{2})^{\otimes n}$. If $\psi_j\in LY_n(\rr)$ for each $j\geq 1$ and $\psi$ is not the zero vector, then $\psi\in LY_n(\rr)$.
\label{corol:seq}
\end{corol}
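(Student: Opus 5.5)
We apply Hurwitz's Theorem on the domain $D=\odisk{\rr}\subset\CC^n$ to the generating polynomials. Write $f_j=f_{\psi_j}$ and $f=f_\psi$ as in Eq.~\eqref{eq:lee-yang-poly}. The domain $D$ is a product of open disks, so it is open and connected, hence a domain in the sense required. Each $f_j$ is a polynomial and therefore analytic on $D$. The hypothesis $\psi_j\in LY_n(\rr)$ says exactly that $f_j$ is non-vanishing on $D$.

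Next we check the convergence hypothesis. For any $z$ in a compact set $K\subseteq D$ we have $|z_a|< r_a$, so every monomial satisfies $|\prod_{a\in\mathrm{supp}(x)} z_a|\le C$ with $C=\prod_a\max(1,r_a)$. This gives
\[
\sup_{z\in K}|f_j(z)-f(z)|\le C\sum_{x\in\{0,1\}^n}\bigl|\langle x|\psi_j\rangle-\langle x|\psi\rangle\bigr|.
\]
The right-hand side is a finite sum of terms tending to zero, so $f_j\to f$ uniformly on $K$ (in fact uniformly on all of $D$).

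By Hurwitz's Theorem, $f$ is therefore either non-vanishing on $D$ or identically zero on $D$. To rule out the second case, suppose $f\equiv 0$ on $D$. Since $D$ is a nonempty open set, every coefficient of the polynomial $f$ must vanish. One way to see this is to evaluate the partial derivatives at the origin: $\partial_{z_S} f(0)=\psi_{x}$ where $x$ is the indicator string of $S$. This would force $\psi=0$, contradicting the assumption that $\psi$ is not the zero vector. Hence $f$ has no zeros in $\odisk{\rr}$, which means $\psi\in LY_n(\rr)$.

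None of these steps is a serious obstacle. The only points needing care are verifying that $D$ is connected and that uniform convergence holds on compacts. Both follow immediately because $D$ is a polydisk and only finitely many coefficients are involved.
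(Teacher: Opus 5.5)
Your proposal is correct and follows essentially the same route as the paper: you show uniform convergence of the generating polynomials on the polydisk, apply Hurwitz's theorem, and use $\psi\neq 0$ to exclude the identically-zero alternative. Your monomial bound $\prod_a\max(1,r_a)$ is in fact slightly more careful than the paper's $R^n$, since the latter bounds $\prod_{a\in\mathrm{supp}(x)}|z_a|$ correctly only when $R=\max_i r_i\ge 1$.
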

\begin{proof}
Let $\rr=(r_1,r_2,\ldots, r_n)$ and let $R=\max_{1\leq i\leq n} r_i$. Given $\epsilon>0$ we can choose $k\in \{1,2,\ldots\}$ such that $|\langle x|\psi_j\rangle -\langle x|\psi\rangle|\leq (2R)^{-n}\epsilon$ for all $x\in \{0,1\}^n$ and $j\geq k$. Then for $z\in \mathbb{D}_\rr$ and $j\geq k$ we have
\begin{align*}
|f_{\psi_j}(z)-f_{\psi}(z)|&\leq \sum_{x\in \{0,1\}^n} |\langle x|\psi_j\rangle -\langle x|\psi\rangle|\prod_{a\in \mathrm{supp}(x)}|z_a|\\
&\leq \sum_{x\in \{0,1\}^n} (2R)^{-n}\epsilon R^n\\
&=\epsilon,
\end{align*}
which shows that our sequence of polynomials $\{f_{\psi_j}\}_j$ converges uniformly in $\mathbb{D}_\rr$ and therefore also on compact subsets of it. Each $f_{\psi_j}$ is non-vanishing on $\mathbb{D}_\rr$. Since $f_{\psi}$ is not identically zero on $\mathbb{D}_\rr$, it follows from Hurwitz's theorem that $f_{\psi}$ is non-vanishing on $\mathbb{D}_\rr$, i.e.,  $\psi\in LY_{n}(\rr)$. 
\end{proof}

Lemma~\ref{lemma:contraction} shows in particular that any tensor network obtained by contracting tensors in $LY(r)$ with $r\geq 1$ outputs a tensor which is also in $LY(r)$. 
The lemma also implies that full-rank $n$-qubit operators contained in $LY(1)$ form  a semigroup.
Indeed, if $A,B\in LY(1)$ then $A\otimes B\in LY(1)$. Since $AB$ can be obtained from $A\otimes B$ by contracting
$n$ pairs of indices and $AB\ne 0$, Lemma~\ref{lemma:contraction} gives  $AB\in LY(1)$.

Obviously, a single-qubit state $|\psi\ra=\psi_0|0\ra + \psi_1 |1\ra$ is contained in $LY(1)$ iff $|\psi_0|\ge|\psi_1|$.
The following lemma gives an analogous condition for containment in $LY(1)$ for two-qubit states.
\begin{lemma}
\label{lemma:LY1}
A  two-qubit state  $\psi$ is contained in $LY(1)$ if and only if
\be
\label{condLY1}
|\psi_{10} \psi_{00}^* - \psi_{11}\psi_{01}^*| + |\psi_{11}\psi_{00} - \psi_{10}\psi_{01}| \le |\psi_{00}|^2 - |\psi_{01}|^2.
\ee
\end{lemma}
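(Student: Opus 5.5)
The plan is to reduce the two-variable condition to a one-variable statement about a M\"obius map. Write the generating polynomial as $f_\psi(z_1,z_2)=a(z_2)+z_1 b(z_2)$, where $a(z_2)=\psi_{00}+\psi_{01}z_2$ and $b(z_2)=\psi_{10}+\psi_{11}z_2$. Here the first index of $\psi_{x_1x_2}$ goes with $z_1$.

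\emph{Step 1: reduce to one variable.} Fix $z_2\in\odisk{1}$. The affine function $z_1\mapsto a+z_1 b$ has no zero in $\odisk{1}$ if and only if $a(z_2)\neq 0$ and $|b(z_2)|\le|a(z_2)|$. Indeed, if $a=0$ then $z_1=0$ is a zero, and if $|b|>|a|$ then $z_1=-a/b$ is a zero inside the disk. Hence $\psi\in LY(1)$ is equivalent to the following two conditions:
\begin{itemize}
\item[(i)] $a$ has no zero in $\odisk{1}$, i.e.\ $\psi_{00}\neq0$ and $|\psi_{01}|\le|\psi_{00}|$;
\item[(ii)] the rational map $w(z)=b(z)/a(z)$ sends $\odisk{1}$ into the closed unit disk.
\end{itemize}

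\emph{Step 2: generic case $|\psi_{00}|>|\psi_{01}|$.} Here $w$ is a linear fractional map, or affine if $\psi_{01}=0$. Its pole $-\psi_{00}/\psi_{01}$ lies outside the closed disk, so $w$ maps $\odisk{1}$ onto an open disk. I will compute the centre and radius of this image disk by the standard formula, for instance by evaluating $w$ at the point symmetric to the pole and expanding $|w(e^{i\theta})-c|^2$. The expected result is
\[
c=\frac{\psi_{10}\psi_{00}^*-\psi_{11}\psi_{01}^*}{|\psi_{00}|^2-|\psi_{01}|^2},\qquad
\rho=\frac{|\psi_{11}\psi_{00}-\psi_{10}\psi_{01}|}{|\psi_{00}|^2-|\psi_{01}|^2}.
\]
This can be checked by verifying $|b-ca|^2=\rho^2|a|^2$ on $|z|=1$. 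Condition (ii) is then $|c|+\rho\le 1$. Multiplying through by $|\psi_{00}|^2-|\psi_{01}|^2>0$ gives exactly inequality~\eqref{condLY1}.

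\emph{Step 3: degenerate case $|\psi_{00}|=|\psi_{01}|$.} This is where I expect the care to be needed. The right-hand side of~\eqref{condLY1} is then $0$, so the inequality forces $\psi_{10}\psi_{00}^*=\psi_{11}\psi_{01}^*$ and $\psi_{11}\psi_{00}=\psi_{10}\psi_{01}$.

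For the direction $\psi\in LY(1)\Rightarrow$~\eqref{condLY1}: here $\psi_{00}\ne0$ and $a$ vanishes at the boundary point $z_0=-\psi_{00}/\psi_{01}$. For $w$ to stay bounded near $z_0$ from inside the disk, $b$ must also vanish at $z_0$, so $b=\lambda a$ with $|\lambda|\le1$. Both left-hand terms then vanish, since each is $\lambda$ times an expression that is zero here.

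For the converse direction, the two vanishing conditions give $\det\begin{pmatrix}\psi_{00}&\psi_{01}\\ \psi_{10}&\psi_{11}\end{pmatrix}=0$. I will deduce $b=\lambda a$ and then use the first equation to get $|\lambda|\le 1$; in fact it gives $\lambda\cdot 0=0$, so I need to recheck this. Alternatively, I could handle this case by a perturbation $\psi_{01}\to(1-\epsilon)\psi_{01}$ together with Corollary~\ref{corol:seq}.

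I also need to treat separately the edge case $\psi_{00}=\psi_{01}=0$. There $f_\psi(0,z_2)=0$, so $\psi\notin LY(1)$, while~\eqref{condLY1} reads $0\le0$. The statement should therefore be read with $\psi_{00}\neq0$ implicitly assumed, or this case excluded. I expect the bookkeeping of these degenerate cases, rather than the M\"obius computation, to be the main obstacle.
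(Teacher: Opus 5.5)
Your Steps 1 and 2 are correct, and they are more self-contained than the paper's proof. The paper rewrites the condition as $M(\DD_1)\subseteq\DD_1$ for $M(v)=(\psi_{11}v+\psi_{10})/(\psi_{01}v+\psi_{00})$ and then quotes a known self-map criterion for linear fractional maps. Your centre/radius computation derives that criterion directly. Your Step 1 (non-strict $|b|\le|a|$ together with $a\neq 0$) is also the right reduction. The strict inequality written in the paper would wrongly exclude, e.g., $f_\psi=1+z_1$.

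The genuine gap is Step 3, and it cannot be closed. When $|\psi_{00}|=|\psi_{01}|$, the implication \eqref{condLY1}$\Rightarrow\psi\in LY(1)$ is false even with $\psi_{00}\neq 0$. Take $\psi_{00}=\psi_{01}=1$, $\psi_{10}=\psi_{11}=2$ (normalize if you like; both sides are homogeneous). Then $f_\psi(z)=(1+2z_1)(1+z_2)$ vanishes at $z_1=-1/2$, so $\psi\notin LY(1)$. Yet both terms on the left of \eqref{condLY1} equal $|2-2|=0$, and the right-hand side is $0$.

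Your observation that the first equation collapses to $\lambda\cdot 0=0$ is exactly the symptom. In this case \eqref{condLY1} only encodes $\psi_{11}\psi_{00}=\psi_{10}\psi_{01}$, i.e.\ $b=\lambda a$, and puts no bound on $|\lambda|$. The perturbation $\psi_{01}\to(1-\epsilon)\psi_{01}$ cannot rescue this, because it does not preserve the hypothesis. In the example the left side becomes $4\epsilon$ while the right side is $2\epsilon-\epsilon^2$, as your Step 2 already says it must.

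What your argument does prove is a corrected statement:
\begin{itemize}
\item when $|\psi_{00}|>|\psi_{01}|$, \eqref{condLY1} characterizes $LY(1)$;
\item when $|\psi_{00}|<|\psi_{01}|$, both sides fail, so the equivalence holds trivially;
\item when $|\psi_{00}|=|\psi_{01}|$, one has $\psi\in LY(1)$ iff $\psi_{00}\neq 0$, $\psi_{11}\psi_{00}=\psi_{10}\psi_{01}$, and $|\psi_{10}|\le|\psi_{00}|$.
\end{itemize}
Your forward direction in Step 3 is fine.

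The paper's one-line reduction has the same blind spot. For a nondegenerate M\"obius map ($\psi_{11}\psi_{00}\neq\psi_{10}\psi_{01}$), the quoted criterion forces $|\psi_{00}|>|\psi_{01}|$, and the degenerate case is never addressed. So rather than trying to finish Step 3, add to the statement either the hypothesis $|\psi_{00}|>|\psi_{01}|$ or the extra conditions $\psi_{00}\neq0$ and $|\psi_{10}|\le|\psi_{00}|$.
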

\begin{proof}
Let $a=\psi_{11}$, $b=\psi_{10}$, $c=\psi_{01}$, and $d=\psi_{00}$. The generating polynomial of $\psi$ is
$f_\psi(z)=cz_2 + d + z_1(az_2 + b)$. Hence $f_\psi(z)\ne 0$ for $z_1,z_2\in \DD_1$ iff
$|cz_2 + d|>|az_2 + b|$ for $z_2\in \DD_1$. This is equivalent to 
$M(\DD_1)\subseteq \DD_1$ for a M{\"o}bius map
\[
M(v) = \frac{a v + b}{c v + d}
\]
Theorem~1 from Ref.~\cite{martin2006composition} asserts that $M(\DD_1)\subseteq \DD_1$
iff $|b \bar{d} - a\bar{c}| + |ad - bc| \le |d|^2 - |c|^2$, which is equivalent to 
Eq.~(\ref{condLY1}).
\end{proof}

The following theorem gives an easy-to-verify sufficient condition for a tensor to be in $LY(1)$. The statement below is a slight extension of a result established in Ref.~\cite{sf71}, which corresponds to the $+$ sign in Eq.~\eqref{eq:sf1}.

\begin{theorem}[\textbf{Suzuki-Fisher 1971} \cite{sf71}]
Suppose $\psi\in (\mathbb{C}^2)^{\otimes m}$ satisfies
\begin{equation}
X^{\otimes m}|\psi\rangle=\pm |\psi^{*}\rangle
\label{eq:sf1}
\end{equation}
and
\begin{equation}
|\langle 0^{m}|\psi\rangle|\geq \frac{1}{4}\sum_{y\in \{0,1\}^{m}} |\langle y|\psi\rangle|
\label{eq:abs}
\end{equation}
Then $\psi\in LY(1)$.
\label{thm:sf71}
\end{theorem}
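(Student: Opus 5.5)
The plan is to reduce to a boundary estimate via a continuity (homotopy) argument, using Hurwitz's theorem in the form of Corollary~\ref{corol:seq}. Write $c=\langle 0^m|\psi\rangle$. Condition~\eqref{eq:sf1} says $\psi_{\bar x}=\pm\psi_x^*$, where $\bar x$ is the complement of $x$. In particular $|\psi_{1^m}|=|c|$, so~\eqref{eq:abs} becomes
\[
\sum_{y\ne 0^m,1^m}|\psi_y|\le 2|c| .
\]
At the level of generating polynomials, \eqref{eq:sf1} is the self-inversive identity
\[
f_\psi(z)=\pm\Big(\prod_a z_a\Big)\,\overline{f_\psi(1/\bar z)} .
\]
Hence zeros of $f_\psi$ in the open polydisk $\DD_1^m$ correspond bijectively to zeros in the exterior region $\{|z_a|>1\ \forall a\}$.

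\textbf{Homotopy.} Set $\psi^{(t)}=c|0^m\rangle+\psi_{1^m}|1^m\rangle+t\sum_{y\ne 0^m,1^m}\psi_y|y\rangle$ for $t\in[0,1]$. Each $\psi^{(t)}$ still satisfies both~\eqref{eq:sf1} and~\eqref{eq:abs}. At $t=0$ we have $f(z)=c+\psi_{1^m}\prod_a z_a$, whose zeros all satisfy $\prod_a|z_a|=1$; so there are none in $\DD_1^m$. Suppose some $f_{\psi^{(t)}}$ vanished in $\DD_1^m$, and let $t_0$ be the infimum of such $t$. A limiting argument (Hurwitz together with compactness of the closed polydisk) then produces a zero $z$ of $f_{\psi^{(t_0)}}$ with $|z_a|\le 1$ for all $a$ and $|z_b|=1$ for at least one $b$. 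It therefore suffices to show that \eqref{eq:sf1}--\eqref{eq:abs} exclude zeros in the closed polydisk that lie outside the open one, except in a degenerate situation that can be perturbed away.

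\textbf{Key step.} I would exploit multiaffinity in the boundary variable $z_b$. Write $f=A(z')+z_bB(z')$ and pair each term $\psi_x z^x$ with its partner $\psi_{\bar x}z^{\bar x}$. On the torus $|z_a|=1$ for all $a$, the symmetry makes $f(z)(\prod_a z_a)^{-1/2}$ real up to a fixed phase. Inside the closed polydisk, each pair satisfies
\[
\big|\psi_x z^x+\psi_{\bar x}z^{\bar x}\big|\le |\psi_x|\big(|z^x|+|z^{\bar x}|\big).
\]
The crude estimate $|f(z)|\ge |c+\psi_{1^m}\textstyle\prod_a z_a|-2|c|$ is too weak, since the leading pair can nearly cancel. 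The constant $\tfrac14$ should instead come from an averaging argument. One variant applies Lemma~\ref{lemma:LY1}, or the underlying M\"obius criterion, after contracting down to a two-variable restriction in which the four ``corner'' terms carry weight $|c|$ each against the total mass $4|c|$. Another variant follows the original Suzuki--Fisher route: prove the claim for $m=2$ directly via Lemma~\ref{lemma:LY1}, which with the symmetry reduces to checking $|\psi_{01}|+|\psi_{10}|\le 2|\psi_{00}|$. One then inducts on $m$ by splitting off variables with Lemma~\ref{lemma:contraction}, realizing $\psi$ as a contraction of smaller symmetric tensors satisfying the same inequality.

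\textbf{Main obstacle.} I expect the boundary exclusion (equivalently, the inductive decomposition) to be the hard step. The condition~\eqref{eq:abs} is a global $\ell^1$ bound, not a local one, so it is not obvious how to distribute the mass among factors. Here I would first try to show directly that, for fixed $z'$ in the closed polydisk, the M\"obius root $z_b=-A(z')/B(z')$ satisfies $|z_b|\ge1$. The plan is to combine the self-inversive identity, which gives $|A|=|B|$ when $z'$ lies on the torus, with the maximum modulus principle in $z'$, so as to extend the resulting inequality $|A|\ge|B|$ into the interior.
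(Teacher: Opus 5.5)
There is a genuine gap, in two places.

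First, the homotopy reduces the theorem to a statement that is false. Split $f_\psi=A(z')+z_1B(z')$ with $z'=(z_2,\ldots,z_m)$. Under \eqref{eq:sf1}, the self-inversive identity gives $|A(z')|=|B(z')|$ whenever $|z_2|=\cdots=|z_m|=1$. Since $\psi\neq 0$, $B$ does not vanish identically on that torus; otherwise $A$ would vanish there too, forcing $\psi=0$. At any torus point $z'$ with $B(z')\neq0$, the zero $z_1=-A(z')/B(z')$ has $|z_1|=1$. So every admissible $f_{\psi^{(t)}}$ has zeros with $|z_a|=1$ for all $a$. This already holds at $t=0$, where every zero of $c+\psi_{1^m}\prod_a z_a$ in the closed polydisk lies on the torus. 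These zeros lie in the closed polydisk but not the open one. They are not degenerate, and no perturbation within the hypotheses removes them. Hence ``excluding boundary zeros'' cannot be achieved, and the homotopy does no work. Your family $\psi^{(t)}$ is useful for a different job: for $t<1$ it makes \eqref{eq:abs} strict, and Corollary~\ref{corol:seq} then passes to $t=1$.

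Second, and more fundamentally, you never locate where \eqref{eq:abs} enters, and the symmetry alone is not enough. For example, $1+2z_1+2z_2+z_1z_2$ satisfies \eqref{eq:sf1} but vanishes at $(-\tfrac12,0)$. The missing observation is that $|\psi_{1y}|=|\psi_{0\bar y}|$, so the full $\ell^1$ mass is twice the mass of the entries with first bit $0$. Hence \eqref{eq:abs} is equivalent to
\[
|\psi_{0^m}|\ \ge \sum_{y\in\{0,1\}^{m-1},\,y\neq 0^{m-1}} |\psi_{0y}| ,
\]
i.e.\ the constant term dominates $A(z')=f_\psi(0,z')$. In the strict case this gives $A(z')\neq 0$ for all $|z_j|\le 1$. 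That is exactly what your last paragraph needs but lacks: the maximum modulus principle can be applied to $B/A$ on the closed polydisk only if $A$ has no zeros there. With it, $\max|B/A|$ is attained on the torus, where $|B|=|A|$. So $|B/A|\le 1$ throughout, and any zero with $|z_j|<1$ for $j\ge 2$ has $|z_1|=|A/B|\ge 1$. This is the paper's argument, and it needs neither the homotopy nor an induction. The inductive alternative via Lemma~\ref{lemma:contraction} is not viable as stated, because there is no general way to write $\psi$ as a contraction of smaller tensors satisfying the same hypotheses. The ``averaging'' variant is too vague to evaluate.
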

We provide a proof in Appendix~\ref{app:sf71}.

 A striking corollary of Lemma~\ref{lemma:contraction} is that the set of multi-qubit states contained in
 $LY(r)$ 
is closed under a class of  post-selective measurements that includes measurements in the Pauli $X$ or $Y$ basis,
provided that $r\ge 1$.
To describe this formally,
suppose $|\theta\ra$ is a single-qubit normalized state. 
Let $n$ be the total number of qubits.
A post-selective measurement  of the $i$-th qubit
is an operator $M_\theta\, : \, (\CC^2)^{\otimes n} \to (\CC^2)^{\otimes (n-1)}$ 
that applies the projector $|\theta\ra\la \theta|$ to the $i$-th qubit and then discards the $i$-th qubit.
The measurement is called equatorial if 
\be
\label{theta_state}
|\theta\ra=\frac1{\sqrt{2}}(|0\ra + e^{i\theta}|1\ra)
\ee
for some angle $\theta\in [0,2\pi)$. For example, measurements in the Pauli $X$ or $Y$ basis are equatorial.
\begin{lemma}
\label{lem:1}
Suppose $r\ge 1$ and $|\psi\ra\in LY(r)$ is an $n$-qubit state vector.
For any post-selective equatorial measurement  $M_\theta$ one has
$M_\theta |\psi\ra \in LY(r)$ or $M_\theta |\psi\ra=0$.
\end{lemma}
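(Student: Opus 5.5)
The plan is to realize $M_\theta$ as a tensor contraction and then apply Lemma~\ref{lemma:contraction}. Up to the irrelevant normalization $1/\sqrt{2}$, the single-qubit tensor $\la\theta| = \frac{1}{\sqrt{2}}(\la 0| + e^{-i\theta}\la 1|)$ has generating polynomial $f(w)=\frac{1}{\sqrt2}(1+e^{-i\theta}w)$. This vanishes only at $|w|=1$, so $\la\theta|$ (as a one-index tensor) lies in $LY_1(1)$. Then $M_\theta|\psi\ra$ is exactly the tensor obtained from $\psi\otimes \theta^*$, where $\theta^*$ has components $(1,e^{-i\theta})/\sqrt2$, by contracting index $i$ of $\psi$ with the single index of $\theta^*$.

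First I check that $\psi\otimes\theta^* \in LY_{n+1}((r,\ldots,r,1))$. The generating polynomial of a tensor product is the product of the generating polynomials in disjoint variables, and a product is nonzero on a product multi-disk when each factor is nonzero on its own disk. Next I contract index $i$ (radius $r_i=r$) with the last index (radius $1$). The product of radii is $r\cdot 1 = r \ge 1$.

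If $r>1$, Lemma~\ref{lemma:contraction} directly gives $M_\theta|\psi\ra\in LY_{n-1}((r,\ldots,r))\subseteq LY(r)$. If $r=1$, the "furthermore" clause of the same lemma gives that the result is either identically zero or in $LY_{n-1}((1,\ldots,1))=LY(1)$. This is exactly the dichotomy claimed.

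The only point needing care is the convention for contraction: it pairs components $\psi_{\ldots x_i\ldots}\theta^*_{x_i}$ summed over $x_i$ with no complex conjugation inside the contraction. So I must use the conjugated coefficients $e^{-i\theta}$, whose modulus is still $1$. The zero of $f$ then sits on the unit circle, outside the open disk, regardless of the phase. With that settled, the argument is a direct application of Lemma~\ref{lemma:contraction} and requires no analytic work.
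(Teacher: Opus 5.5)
Your proposal is correct and follows essentially the same route as the paper: tensor $\psi$ with the conjugated equatorial state $\theta^*\in LY(1)$, contract the measured index with the new one via Lemma~\ref{lemma:contraction}, and use the $r_ir_j=1$ clause to get the zero-or-$LY(1)$ dichotomy when $r=1$. Your explicit case split on $r>1$ versus $r=1$ corresponds to the paper's remark that $M_\theta|\psi\rangle=0$ can only occur when $r=1$.
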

\begin{proof}
Let $|\theta^*\ra$ be the complex conjugate of $|\theta\ra$.
The generating polynomial of $|\theta^*\ra$ is $(1 + e^{-i\theta} z)/\sqrt{2}\ne 0$ for $|z|<1$.
Hence $|\theta^*\ra \in LY(1)$. Suppose $\psi \in LY_n((r,r,\ldots,r))$ and we measure the $i$-th qubit of $\psi$. We have
\[
|\psi\ra \otimes |\theta^*\ra \in LY_{n+1}((r,r,\ldots,r,1)).
\]
By definition, $M_\theta |\psi\ra$ is obtained from the tensor $|\psi\ra \otimes |\theta^*\ra$  by contracting
the $i$-th and the last index. 
Lemma~\ref{lemma:contraction} then implies that either $M_\theta |\psi\ra=0$ or $M_\theta |\psi\ra \in LY_{n-1}((r,r,\ldots,r))$. 
The case $M_\theta |\psi\ra=0$ can only occur if $r=1$. Hence $M_\theta$ maps $LY(r)$ to $LY(r)\cup \{0\}$.
\end{proof}

We can use Theorem \ref{thm:sf71} to construct single-qubit quantum channels that preserve the LY property. Let us first review some basic facts about quantum channels (see, e.g., \cite{watrous2018theory}). Recall that if $\mathcal{E}$ is a $1$-qubit quantum channel, its Choi matrix is a $4\times 4$ complex matrix defined by
\[
C(\mathcal{E})\equiv(\mathcal{E}\otimes I)|\Omega\rangle\langle \Omega| \qquad |\Omega\rangle=|00\rangle+|11\rangle.
\]
Moreover, a $4\times 4$ complex matrix is the Choi matrix of a single-qubit quantum channel iff it satisfies 
\[
C(\mathcal{E})\geq 0 \qquad \text{ and } \mathrm{Tr}_1 (C(\mathcal{E}))=I.
\]
Let us say that a channel $\mathcal{E}$ is in $LY(r)$ iff its Choi matrix satisfies $C(\mathcal{E})\in LY(r)$. Note that the output of a quantum channel $\mathcal{E}$ on input state $\rho$ can be computed by contracting $C(\mathcal{E})$ with $\rho$. In particular, suppose $\rho$ is an $n$-qubit state and $\mathcal{E}$ is a $1$-qubit channel. Let $C'(\mathcal{E})$ be the partial transpose of $C(\mathcal{E})$ on the first qubit. Then
\[
\mathrm{Tr}_2((C'(\mathcal{E})\otimes I_{n-1}) (I_1\otimes \rho))=(\mathcal{E}\otimes I_{n-1})(\rho)
\]
Note that  $C'(\mathcal{E})\in LY(1)$ iff $C(\mathcal{E})\in LY(1)$.  From this and Lemma \ref{lemma:contraction} we infer that $\mathcal{E}\in LY(1)$ and $\rho\in LY(1)$ imply that the output state $(\mathcal{E}\otimes I_{n-1})(\rho)\in LY(1)$.

Consider a single-qubit Pauli channel
\[
\mathcal{E}_{\vec{p}}(\rho)= p_0 \rho+ p_1 X\rho X+p_2 Y\rho Y+p_3 Z\rho Z.
\]
Here $\vec{p}$ is a probability distribution over $\{0,1,2,3\}$.

\begin{theorem}
 The set of multi-qubit density matrices contained in $LY(1)$ is closed under single-qubit channels $\mathcal{E}_{\vec{p}}$ for all $\vec{p}$ satisfying $\min\{p_0,p_3\}\geq \max\{p_1,p_2\}$.
\end{theorem}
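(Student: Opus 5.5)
The plan is to reduce the claim to the remark made just before the statement: if $\mathcal{E}\in LY(1)$ and $\rho\in LY(1)$, then $(\mathcal{E}\otimes I_{n-1})(\rho)\in LY(1)$. So it suffices to show that the Choi matrix $C(\mathcal{E}_{\vec p})$, viewed as a tensor with four binary indices, is in $LY(1)$. I will check this with Theorem~\ref{thm:sf71}, the Suzuki-Fisher criterion. The qubit on which the channel acts can be moved to the first position by relabeling indices, which does not affect membership in $LY(1)$ because all radii are equal.

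First I compute the Choi matrix. Since $(P\otimes I)|\Omega\rangle$ for $P=I,X,Y,Z$ are the four unnormalized Bell vectors, we get
\[
C(\mathcal{E}_{\vec p})=\sum_{k=0}^{3} p_k\,(P_k\otimes I)|\Omega\rangle\langle\Omega|(P_k\otimes I),
\]
where $(P_0,P_1,P_2,P_3)=(I,X,Y,Z)$. Its only nonzero entries $\langle ab|C|a'b'\rangle$ are:
\begin{itemize}
\item $\langle 00|C|00\rangle=\langle 11|C|11\rangle=p_0+p_3$;
\item $\langle 01|C|01\rangle=\langle 10|C|10\rangle=p_1+p_2$;
\item $\langle 00|C|11\rangle=\langle 11|C|00\rangle=p_0-p_3$;
\item $\langle 01|C|10\rangle=\langle 10|C|01\rangle=p_1-p_2$.
\end{itemize}
The signs in the last two lines come from $Y=iXZ$ and $Z$ contributing minus signs on the coherences, and I will verify them directly.

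Let $\psi\in(\CC^2)^{\otimes 4}$ be the tensor of these entries. All entries are real, and flipping all four bits leaves each entry unchanged. Hence $X^{\otimes 4}|\psi\rangle=|\psi\rangle=|\psi^*\rangle$, which is condition \eqref{eq:sf1} with the $+$ sign. For condition \eqref{eq:abs}, note $\langle 0^4|\psi\rangle=p_0+p_3$ and
\[
\sum_y|\langle y|\psi\rangle| = 2(p_0+p_3)+2(p_1+p_2)+2|p_0-p_3|+2|p_1-p_2|.
\]
So \eqref{eq:abs} is equivalent to
\[
(p_0+p_3)-|p_0-p_3|\ \ge\ (p_1+p_2)+|p_1-p_2|,
\]
that is, $2\min\{p_0,p_3\}\ge 2\max\{p_1,p_2\}$. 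This is exactly the hypothesis, so Theorem~\ref{thm:sf71} gives $C(\mathcal{E}_{\vec p})\in LY(1)$. By the remark above, the partial transpose $C'(\mathcal{E}_{\vec p})$ is then also in $LY(1)$.

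Finally, for an $n$-qubit density matrix $\rho\in LY(1)$, the output $(\mathcal{E}_{\vec p}\otimes I_{n-1})(\rho)$ is obtained from $C'(\mathcal{E}_{\vec p})\otimes\rho$ by contracting two pairs of indices, each with radii product $1$. By Lemma~\ref{lemma:contraction} the result is either identically zero or in $LY(1)$. The output is a density matrix of trace $1$, so it is nonzero, and therefore it lies in $LY(1)$. The proof has no real obstacle; the only step needing care is getting the signs and index placement of the Choi-matrix coherences right.
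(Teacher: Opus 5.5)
Your proof is correct and follows the paper's argument: reduce to showing $C(\mathcal{E}_{\vec p})\in LY(1)$ via the contraction remark preceding the theorem, vectorize the Choi matrix, check Eq.~\eqref{eq:sf1} with the $+$ sign, reduce Eq.~\eqref{eq:abs} to $\min\{p_0,p_3\}\ge\max\{p_1,p_2\}$, and invoke Theorem~\ref{thm:sf71}. Your final step adds one detail the paper leaves to that remark: two contractions with radius product $1$, with the unit trace of the output ruling out the zero alternative in Lemma~\ref{lemma:contraction}.
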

\begin{proof}
 The Choi matrix  is
\[
C(\mathcal{E}_{\vec{p}})=
\begin{pmatrix}
p_0+p_3 & 0 & 0 & p_0-p_3\\
0 & p_1+p_2 & p_1-p_2  & 0\\
0 & p_1-p_2  & p_1+p_2  & 0\\
p_0-p_3 & 0 & 0& p_0+p_3
\end{pmatrix}
\]
It suffices to show that $C(\mathcal{E}_{\vec{p}})\in LY(1)$. To this end we can vectorize the Choi matrix (i.e., view it as a vector in $\mathbb{C}^{16}$) and verify Eqs.~(\ref{eq:sf1},\ref{eq:abs}) from Theorem \ref{thm:sf71}.  Eq.~\eqref{eq:sf1} is equivalent to $(X\otimes X)C(\mathcal{E}_{\vec{p}}) X\otimes X=\pm C(\mathcal{E}_{\vec{p}})^{*}$, which is satisfied with the $+$ sign.   

Eq.~\eqref{eq:abs} gives
\[
p_0+p_3 \geq \frac{1}{2}\left(p_0+p_1+p_2+p_3+|p_0-p_3|+|p_1-p_2|\right).
\]
Rearranging the above we get
\[
\left(p_0+p_3-|p_0-p_3|\right)\geq \left(p_1+p_2+|p_1-p_2|\right),
\]
or equivalently $\min\{p_0,p_3\}\geq \max\{p_1,p_2\}$.
\end{proof}

For example, the depolarizing channel with parameter $\lambda\in [0,1]$ corresponds to the case 
\[
p_0=(1-3\lambda/4) \quad p_1=p_2=p_3=\lambda/4,
\]
and we see that it is in $LY(1)$ for all $\lambda$. Similarly, the dephasing channel with parameter $q\in [0,1]$ corresponds to the choice $p_1=p_2=0$, $p_0=1-q$, $p_3=q$, which again is in $LY(1)$ for all $q$.

\section{Hermitian operators \label{sec:herm}}

Next let us consider 
Hermitian operators (observables) contained in $LY(1)$. 
For example, one can easily check that the identity $I$, Pauli $Z$ operator, 
SWAP gate, and an operator $|0^n\ra\la 0^n|-|x\ra\la x|$ with any nonzero $x\in \{0,1\}^n$
are contained in $LY(1)$. The same holds for any tensor product of such operators since $LY(1)$ is closed under tensor products.

The following lemma can be viewed as a generalization of the Griffiths inequality~\cite{griffiths1967correlations} asserting that the expected value of any
$Z$-type Pauli observable on the Gibbs state of the ferromagnetic  Ising  model is non-negative. 
\begin{lemma}[\bf Griffiths inequality]
\label{lem:2}
Suppose $P\in LY(1)$ is a Hermitian  $n$-qubit  operator such that $\la 0^n|P|0^n\ra>0$.
Then for any $n$-qubit density matrix $\rho \in LY(1)$ one has 
$\mathrm{Tr}(\rho P) \ge 0$.
\end{lemma}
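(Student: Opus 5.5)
The plan is to deform $\rho$ continuously into a multiple of $|0^n\ra\la 0^n|$ while keeping $\mathrm{Tr}(\rho P)$ real and nonvanishing along the way. Lemma~\ref{lemma:contraction} supplies the nonvanishing, and a continuity argument then transfers the sign from the easy endpoint to $\rho$.

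For $t\in[0,1]$ let $D_t=\mathrm{diag}(1,t)^{\otimes n}$ and define $\rho_t=D_t\rho D_t$. Since $D_t$ is Hermitian, $\rho_t\ge 0$, so $g(t)=\mathrm{Tr}(\rho_t P)$ is real because $P$ is Hermitian. It is also a polynomial in $t$, hence continuous.

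First I check the endpoint $t=0$. There $\rho_0=\la 0^n|\rho|0^n\ra\,|0^n\ra\la 0^n|$, so $g(0)=\la 0^n|\rho|0^n\ra\,\la 0^n|P|0^n\ra$. The constant term of the generating polynomial of $\rho$ is $f_\rho(0)=\la 0^n|\rho|0^n\ra$, which is nonzero because $\rho\in LY(1)$. It is also nonnegative because $\rho\ge 0$, so it is strictly positive. Together with the hypothesis $\la 0^n|P|0^n\ra>0$ this gives $g(0)>0$.

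Next I show $g(t)\neq 0$ for $t\in(0,1)$. Viewing $\rho_t$ as a $2n$-index tensor, each index carrying a $1$ picks up a factor $t$. Hence $f_{\rho_t}(z)=f_\rho(tz)$, which does not vanish for $z\in\odisk{1/t}^{2n}$, so $\rho_t\in LY(1/t)$. Since $P\in LY(1)$, the tensor product $\rho_t\otimes P$ has radii $1/t$ on the indices of $\rho_t$ and $1$ on those of $P$. The quantity $g(t)=\sum_{x,y}(\rho_t)_{xy}P_{yx}$ is obtained by contracting $2n$ pairs of indices, each pair consisting of one index of $\rho_t$ and one of $P$. Every such pair has radius product $1/t>1$. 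Applying the first (strict) part of Lemma~\ref{lemma:contraction} $2n$ times shows that the resulting scalar lies in $LY$, i.e. $g(t)\neq 0$. This is the key step: at $t=1$ the lemma only allows the weaker conclusion "zero or nonzero", and it is the extra radius $1/t>1$ that excludes zeros in the interior of the path.

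Finally, $g$ is continuous and real on $[0,1]$, positive at $0$, and never zero on $[0,1)$. By the intermediate value theorem $g(t)>0$ on $[0,1)$, and letting $t\to 1$ gives $\mathrm{Tr}(\rho P)=g(1)\ge 0$. The only delicate points are the radius bookkeeping in the contraction, which I have handled above, and the positivity of $\la 0^n|\rho|0^n\ra$, which follows from $\rho\ge 0$ combined with $f_\rho(0)\ne 0$.
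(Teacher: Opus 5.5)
Your proposal is correct and follows the paper's proof essentially step for step. Both arguments deform $\rho$ to $D_s^{\otimes n}\rho D_s^{\otimes n}\in LY(1/s)$, use the contraction lemma with radius product $1/s>1$ to rule out zeros of $\mathrm{Tr}(\rho_s P)$ for $s\in[0,1)$, check positivity at $s=0$ via $\langle 0^n|\rho|0^n\rangle\langle 0^n|P|0^n\rangle>0$, and conclude $\mathrm{Tr}(\rho P)\ge 0$ by continuity.
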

\begin{proof}
Let $D_s = |0\ra\la 0| + s|1\ra\la 1|$.
Consider a family of density matrices
\[
\rho_s = D_s^{\otimes n} \rho D_s^{\otimes n}, \qquad 0\le s\le 1.
\]
From $\rho \in LY(1)$ one gets $\rho_s \in LY(1/s)$.
Applying Lemma~\ref{lemma:contraction} to perform the contraction
$P\otimes \rho_s\to \mathrm{Tr}(\rho_s P)$ one concludes that 
\be
\label{Griffiths_eq1}
\mathrm{Tr}(\rho_s P)\ne 0 \quad \mbox{for all $s\in [0,1)$}.
\ee
We have  $\la 0^n|\rho|0^n\ra>0$ since $\la 0^n|\rho|0^n\ra\ge 0$ and
$\la 0^n|\rho|0^n\ra\ne 0$ due to the inclusion $\rho \in LY(1)$.
By assumption, $\la 0^n|P|0^n\ra>0$. Thus
\be
\label{Griffiths_eq2}
\mathrm{Tr}(\rho_0 P) = \la 0^n|\rho|0^n\ra \la 0^n|P|0^n\ra>0.
\ee
From  Eqs.~(\ref{Griffiths_eq1},\ref{Griffiths_eq2})
and continuity of the function $s\to \mathrm{Tr}(\rho_s P)$ one gets $\mathrm{Tr}(\rho P)=\mathrm{Tr}(\rho_1 P) \ge 0$.
\end{proof}

Next we show that {\em real-valued} tensors contained in $LY(r)$ for some $r>1$
are sign-problem-free in the sense that all their entries must have the same sign,
up to a simple  basis change. 
\begin{lemma}
\label{lem:3}
Suppose $P$ is an $n$-qubit operator  with real matrix elements in the standard basis.
Suppose $P\in LY(r)$ for some $r>1$
and $\la 0^n|P|0^n\ra>0$.
 Then $P$ has  strictly positive matrix elements in the Pauli $X$  basis.
\end{lemma}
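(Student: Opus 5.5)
The plan is to express every matrix element of $P$ in the Pauli $X$ basis as a value of the generating polynomial $f_P$ at a point of the closed unit polydisk. Since $r>1$, the whole closed unit polydisk lies inside $\odisk{r}^{2n}$, so $f_P$ cannot vanish there. A continuity argument along a real path starting at the origin, where $f_P(0)=\la 0^n|P|0^n\ra>0$, should then fix the sign.

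\textbf{Step 1: rewrite $X$-basis elements as values of $f_P$.} View $P$ as a tensor with $2n$ binary indices $(x,y)$, where $P_{xy}=\la x|P|y\ra$, so that
\[
f_P(z,w)=\sum_{x,y\in\{0,1\}^n} P_{xy}\prod_{a\in \mathrm{supp}(x)} z_a\prod_{b\in\mathrm{supp}(y)} w_b .
\]
The $X$-basis states are products of $|\pm\ra=(|0\ra\pm|1\ra)/\sqrt2$. Write $|u\ra=\bigotimes_a (|0\ra+\sigma_a|1\ra)/\sqrt2$ and $|v\ra=\bigotimes_b(|0\ra+\tau_b|1\ra)/\sqrt2$ with signs $\sigma_a,\tau_b\in\{\pm1\}$. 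Because all amplitudes are real, expanding gives
\[
\la u|P|v\ra=2^{-n} f_P(\sigma,\tau).
\]
It therefore suffices to show $f_P(\sigma,\tau)>0$ for every choice of sign vectors $\sigma,\tau$.

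\textbf{Step 2: a real, nonvanishing path.} Fix $\sigma,\tau$ and define
\[
g(t)=f_P(t\sigma,t\tau), \qquad t\in[0,1].
\]
Each coordinate satisfies $|t\sigma_a|,|t\tau_b|\le 1<r$, so the point $(t\sigma,t\tau)$ lies in $\odisk{r}^{2n}$. Since $P\in LY(r)$, this gives $g(t)\neq 0$ for all $t\in[0,1]$. Moreover $g(t)$ is real, because the coefficients $P_{xy}$ are real and the arguments are real. Finally, $g$ is a polynomial in $t$ and hence continuous.

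\textbf{Step 3: conclude.} At the starting point, $g(0)=P_{0^n0^n}=\la 0^n|P|0^n\ra>0$. A continuous real function on $[0,1]$ with no zeros keeps its sign, by the intermediate value theorem. Hence $g(1)=f_P(\sigma,\tau)>0$, and so $\la u|P|v\ra>0$ for all $X$-basis states $u,v$.

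There is no real obstacle here. The only point needing care is the strictness, and it comes entirely from $r>1$: the endpoint $t=1$ must lie in the \emph{open} disk of radius $r$. For $r=1$ the same argument only yields $\ge 0$, via the closure/Hurwitz-type reasoning used in Lemma~\ref{lem:2}. One could equivalently phrase Step 2 using Lemma~\ref{lemma:contraction}: contract $P$ with the states $(|0\ra+t\sigma_a|1\ra)$, which lie in $LY(1/t)$, giving radius products $r/t>1$ and hence a nonzero scalar. The direct evaluation of $f_P$ is simpler, though.
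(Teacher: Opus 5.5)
Your proof is correct and follows essentially the same route as the paper: the paper's function $g(s)=\la y|H^{\otimes n}D_s^{\otimes n}PD_s^{\otimes n}H^{\otimes n}|x\ra$ is exactly $2^{-n}f_P(s\sigma,s\tau)$ with $\sigma_a=(-1)^{y_a}$ and $\tau_b=(-1)^{x_b}$, and the sign is fixed by the same continuity argument starting from $g(0)=2^{-n}\la 0^n|P|0^n\ra>0$. The only difference is how non-vanishing is obtained: the paper invokes the contraction lemma with $H^{\otimes n}|x\ra\in LY(1)$ and $D_s^{\otimes n}PD_s^{\otimes n}\in LY(r/s)$, whereas you read it off directly from the definition, since contracting with a product state $\bigotimes_a(|0\ra+c_a|1\ra)$ is just evaluating $f_P$ at $c$.
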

\begin{proof}
Let $H$ be the Hadamard gate. One can easily check that  $H|0\ra\in LY(1)$ and $H|1\ra \in LY(1)$.
Hence $H^{\otimes n}|x\ra \in LY(1)$ for all $x\in \{0,1\}^n$.
Applying Lemma~\ref{lemma:contraction} to perform the contraction
\[
P \otimes H^{\otimes n} |x\ra \to PH^{\otimes n}|x\ra
\]
and using $r>1$
gives $PH^{\otimes n} |x\ra\in LY(r)$. Choose any $y\in \{0,1\}^n$.
Applying Lemma~\ref{lemma:contraction} again to perform the contraction
\[
\la y|H^{\otimes n} \otimes PH^{\otimes n}|x\ra \to \la y|H^{\otimes n} P H^{\otimes n} |x\ra
\]
and using $r>1$ and $H^{\otimes n}|y\ra \in LY(1)$ 
gives 
\be
\label{projector_eq1}
\la y|H^{\otimes n} P H^{\otimes n}|x\ra \ne 0
\ee
 for all $x,y\in \{0,1\}^n$.
 Accordingly, it suffices to consider two cases.
 
\noindent
{\em Case~1:} All matrix elements $\la y|H^{\otimes n} P H^{\otimes n}|x\ra$ are strictly positive. 
Then we are done.

\noindent
{\em Case~2:} At least one matrix element $\la y|H^{\otimes n} P H^{\otimes n}|x\ra$ is negative.
Consider a function 
\[
g(s) = \la y|H^{\otimes n} D_s^{\otimes n} P D_s^{\otimes n} H^{\otimes n} |x\ra, \qquad D_s= |0\ra\la 0| + s|1\ra\la 1|
\]
with $s\in [0,1]$. From $P\in LY(r)$ one gets
$D_s^{\otimes n} P D_s^{\otimes n} \in LY(r/s)$. Hence the same argument as above shows that $g(s)\ne 0$
for all  $s\in [0,1]$.
Since $g(1)<0$, continuity argument gives $g(0)<0$. However, 
$g(0)=2^{-n} \la 0^n|P|0^n\ra >0$
which is a contradiction.
\end{proof}
We note that the proof of Lemma~\ref{lem:3} can be also applied  to operators $P$ that map $m$ qubits to $n$ qubits with $m\ne n$.
Choosing $m=0$  shows that any $n$-qubit state vector $\psi\in LY(r)$ with $r>1$,  real amplitudes in the standard basis, and
$\la 0^n|\psi\ra>0$ must have positive amplitudes in the Pauli-$X$ basis.
If the operator $P$ of Lemma~\ref{lem:3} is Hermitian, the Perron-Frobenius theorem implies that the maximum  eigenvalue of $P$
is simple (has multiplicity one). We now show that any complex-valued Hermitian operator contained in $LY(r)$ for some $r>1$
must have a simple dominant eigenvalue.

\begin{theorem}
\label{thm:degeneracy}
Suppose  $r>1$ and $H\in LY(r)$ is an $n$-qubit Hermitian operator. 
Then the eigenvalue of $H$ with largest magnitude has multiplicity $1$.
\end{theorem}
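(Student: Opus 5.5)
The plan is to reduce the statement to a rank-one property of a spectral projector, using closure under contraction and the limit corollary. Let $\lambda=\|H\|>0$ (we have $H\neq 0$, since the zero tensor is not Lee-Yang). Write $\Pi_\pm$ for the spectral projectors of $H$ onto the eigenvalues $\pm\lambda$, with $\Pi_-=0$ if $-\lambda$ is not an eigenvalue. The claim is that $\Pi_++\Pi_-$ has rank one. This gives both statements at once: there is a single top-magnitude eigenvalue, and it is simple.

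\textbf{Step 1 (powers).} I will show $H^k\in LY(r)$ for all $k\ge 1$. Since $H\otimes H\in LY(r)$, the product $H^2$ is obtained by contracting $n$ pairs of indices, each pair with radii $r\cdot r>1$. Lemma~\ref{lemma:contraction} then applies, and $H^2\neq 0$ because $H$ is Hermitian and nonzero. Induction gives $H^k\in LY(r)$.

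\textbf{Step 2 (limit).} I will show $\Pi:=\Pi_++\Pi_-\in LY(r)$. The sequence $H^{2k}/\lambda^{2k}$ converges entrywise to $\Pi$, which is nonzero. Each rescaled power is in $LY(r)$, so Corollary~\ref{corol:seq} gives $\Pi\in LY(r)$. This also survives the deformations used in the proof of Lemma~\ref{lem:3}. For $D_t^{\otimes n}$ with $1\le t<r$ we get $D_t^{\otimes n}\Pi D_t^{\otimes n}\in LY(r/t)$. For diagonal phase unitaries $\bigotimes_a \mathrm{diag}(1,e^{i\theta_a})$ the Lee-Yang radius is unchanged, since conjugating by them only rotates each variable.

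\textbf{Step 3 (a Hermitian projector in $LY(r)$ with $r>1$ has rank one).} This is the main obstacle. I would work with the $2n$-variable generating polynomial $F(z,w)=\sum_{x,y}\Pi_{xy}z^x w^y$. It is nonvanishing on the polydisk of radius $r>1$, hence on a neighbourhood of the closed unit polydisk. So $F=e^{h}$ for a holomorphic function $h$ there. Hermiticity gives $F(z,\bar z)=\langle\phi_z|\Pi|\phi_z\rangle\ge 0$, where $\phi_z=\bigotimes_a(|0\rangle+\bar z_a|1\rangle)$ up to conjugation conventions. Averaging over the torus $|z_a|=1$ gives
\[
\int F(z,\bar z)\,d\mu = \mathrm{Tr}\,\Pi = \mathrm{rank}\,\Pi ,
\]
while $F(0,0)=\langle 0^n|\Pi|0^n\rangle\le 1$.

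The idea is to compare these two quantities using the pluriharmonicity of $\log|F|$ together with a Cauchy--Schwarz or Jensen-type inequality. The kernel $F(z,\bar w)$ is a positive semidefinite kernel of rank $\mathrm{rank}\,\Pi$, of the form $\sum_i a_i(z)\overline{a_i(w)}$ with $\{a_i\}$ orthonormal. The zero-freeness should force it to factor, so that $a_i(z)$ is proportional to a single nonvanishing function.

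If this direct route stalls, my fallback is the approach of Lemma~\ref{lem:3}. Suppose $\mathrm{rank}\,\Pi\ge 2$. Then I would look for two equatorial product states $u,v$, obtained by choosing the phases, such that $\langle u|\Pi|v\rangle=0$. Such a zero contradicts the nonvanishing that Lemma~\ref{lemma:contraction} guarantees when $\Pi\in LY(r)$ is contracted with the $LY(1)$ states $|\theta^*\rangle$, because $r>1$. I would try to produce the zero by a continuity and winding-number argument in the phases. The idea is to exploit that, for rank at least two, the vector $\Pi|v\rangle$ can be rotated within the range of $\Pi$ as $v$ varies.
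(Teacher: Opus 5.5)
\textbf{There is a genuine gap: Step~3 is not proved.} Your Steps~1 and~2 are fine and match the paper's reduction; the paper passes to $P=H^2$ and takes $\Pi=\lim_m (P/\lambda)^m$, which is your $\Pi_++\Pi_-$. But Step~3 is the entire content of the theorem, and neither route you describe amounts to an argument.

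\emph{Route~A.} The tools you name point the wrong way. Combine the mean-value property of the pluriharmonic function $\log|F|$ over the torus $|z_a|=|w_a|=1$, Cauchy--Schwarz in the form $|F(z,w)|^2\le F(z,\bar z)\,F(\bar w,w)$, and Jensen's inequality. The result is only $\langle 0^n|\Pi|0^n\rangle\le \mathrm{Tr}\,\Pi$, which holds for every projector and says nothing about the rank. The sentence ``zero-freeness should force the kernel to factor'' just restates the conclusion.

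\emph{Route~B.} You aim for a zero of $\langle u|\Pi|v\rangle$ at equatorial product states, i.e.\ a zero of $F$ on the torus. No mechanism is given that would produce one. Restricting to the torus also throws away the information that actually drives the result: where, inside the open polydisk, the generating polynomials of vectors in the range of $\Pi$ vanish.

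\textbf{The missing idea.} Work inside the range of $\Pi$ and play two facts against each other.
\begin{itemize}
\item[(a)] Two vectors in $LY(r)$ with $r>1$ are never orthogonal. Their inner product is a full contraction of two $LY(r)$ tensors (conjugation preserves $LY(r)$) with radius products $r^2>1$, so Lemma~\ref{lemma:contraction} makes it nonzero.
\item[(b)] Let $v$ be a unit vector in the range with $v\notin LY(r)$. Pick $z\in\mathbb{D}_r^n$ with $f_v(z)=0$, and set $w=\Pi\bigotimes_a(|0\rangle+\bar z_a|1\rangle)$. The product state lies in $LY_n((1/|z_a|)_a)$ and $r/|z_a|>1$, so Lemma~\ref{lemma:contraction} gives $w\in LY(r)$. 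Since $\langle v|\Pi=\langle v|$, we get $\langle v|w\rangle=\overline{f_v(z)}=0$.
\end{itemize}

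Now let $T$ be the set of unit vectors in the range that lie in $LY(r)$. It is closed by Corollary~\ref{corol:seq}, hence compact. By (a) its Fubini--Study diameter is some $\alpha<\pi/2$. By (b), every unit vector of the range outside $T$ has an orthogonal partner in $T$.

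Suppose the rank is at least $2$. Then $T$ is nonempty by (b). It is also a proper subset of the connected unit sphere of the range, since by (a) a range vector orthogonal to a point of $T$ lies outside $T$. So there are points outside $T$ within any $\epsilon$ of some point of $T$. Their partners in $T$ are at distance at least $\pi/2-\epsilon>\alpha$ from that point, which is a contradiction.

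This is the paper's argument, and it is what your Step~3 needs to be replaced by.
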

\begin{proof}
Suppose  $r>1$ and $H\in LY(r)$ is an $n$-qubit Hermitian operator.  Suppose the eigenvalue of $H$ with largest magnitude has multiplicity $d\in \{1,2,\ldots\}$. Consider $P\equiv H^2$. $P$ is positive semidefinite and its maximal eigenvalue has multiplicity that is at least $d$. Moreover, $P\in LY(r)$ using Lemma \ref{lemma:contraction} and the fact that $H\in LY(r)$. To complete the proof, in the following we show that the multiplicity of the largest eigenvalue of $P$ cannot be larger than $1$.

Let $\lambda$ be the maximum eigenvalue of $P$. From $P\in LY(r)$ one gets $P\ne 0$ and $\lambda>0$.  Define
\[
\Pi_m = (P/\lambda)^m  \quad \mbox{and} \quad \Pi = \lim_{m\to \infty} \Pi_m.
\]
Clearly, $\Pi$ is a projector and the rank of $\Pi$ coincides with the multiplicity of $\lambda$.
Thus we need to show that $\Pi$ cannot have rank greater than 1. First we claim that 
\be
\label{degeneracy_eq0}
\Pi \in LY(r).
\ee

Indeed, $\Pi$ is the limit of a sequence of tensors  $(P/\lambda)^m$ for integers $m\ge 1$. By Lemma~\ref{lemma:contraction}, each tensor in this sequence is in $LY(r)$. By Corollary \ref{corol:seq} and the fact that $\Pi\neq 0$, we get Eq.~(\ref{degeneracy_eq0}).

\begin{lemma}
Suppose $r>1$. There is no $n$-qubit projector $\Pi\in LY(r)$ with rank $\geq 2$.
\end{lemma}
\begin{proof}
Suppose to reach a contradiction that $\Pi\in LY(r)$ has rank $k\geq 2$. Write
\begin{equation}
\Pi=\sum_{j=1}^{k} |\psi_j\rangle\langle \psi_j|.
\label{eq:pi}
\end{equation}
where $\langle \psi_i|\psi_j\rangle=\delta_{ij}$. The set of normalized vectors in the image of $\Pi$ can be identified with the set of unit vectors in $\mathbb{C}^k$ which we denote 
\[
S=\{u\in \mathbb{C}^{k}: \|u\|=1\}.
\]
Define a set
\[
T=\{u\in S: \sum_{j=1}^{k} u_j |\psi_j\rangle \in LY(r)\}.
\]
Below we show that the set $T$ has the following properties:
\begin{enumerate}
\item $T$ is compact.
\item $T$ does not contain two orthogonal vectors.
\item For any vector $v\in S\setminus T$, there is a vector $w\in T$ such that $\langle v|w\rangle=0$.
\end{enumerate}

Clearly $T$ is a bounded set, so to show it is compact it suffices to show that it is closed. This latter property follows from Corollary\ref{corol:seq}, which shows that a limit of $LY(r)$ tensors is either the zero tensor or it is in $LY(r)$ (since no sequence of points in $S$ has the zero vector as a limit). To show property 2., suppose to reach a contradiction that $v,w\in T$ satisfy $\langle v|w\rangle=0$. Then letting 
\[
|\tilde{v}\rangle=\sum_{j=1}^{k} v_j |\psi_j\rangle \qquad \text{and} \qquad |\tilde{w}\rangle=\sum_{j=1}^{k} w_j|\psi_j\rangle,
\]
we have $\langle \tilde{v}|\tilde{w}\rangle=0$. But $|\tilde{v}\rangle,|\tilde{w}\rangle\in LY(r)$ with $r>1$ (since $v,w\in T$) and so their inner product is the contraction of two $LY(r)$ tensors which cannot be zero by Lemma \ref{lemma:contraction}.

Finally, let us show Property 3. To this end, suppose  $v\in S\setminus T$. Without loss of generality let us suppose $v=(1,0,0,\ldots, 0)$ (if it is not we can simply redefine our choice of orthonormal basis $\psi_1, \psi_2,\ldots, \psi_k$ in Eq.~\eqref{eq:pi}. Then  $|\psi_1\rangle \notin LY(r)$, so there exists $\vec{z}\in \mathbb{D}_r^n$ such that $\langle \vec{z}|\psi_1\rangle=0$.  But then $\Pi |\vec{z}^{\star}\rangle \in LY(r)$, where
\[
\Pi|\vec{z}^{\star}\rangle=\sum_{j=2}^{k} |\psi_j\rangle\langle \psi_j|\vec{z}^{\star}\rangle.
\]
Letting 
\[
w=\frac{1}{\sqrt{\sum_{j=2}^{k} |\langle \psi_j|\vec{z}^{\star}\rangle|^2}}(0, \langle \psi_2|\vec{z}^{\star}\rangle, \langle \psi_3|\vec{z}^{\star}\rangle,\ldots,  \langle \psi_k|\vec{z}^{\star}\rangle)
\]
we see that $w\in T$ and $\langle v|w\rangle=0$. We have thus established properties 1-3. above.

To complete the proof of the theorem, we show that there is no set $T\subseteq S$ satisfying properties 1-3. Indeed, suppose to reach a contradiction that $T$ is such a set. From property 3 we infer that $T$ cannot be empty.  Let $f:T\times T\rightarrow [0,\pi/2]$ be the function 
\[
f(u,v)=\mathrm{arccos}(|\langle u|v\rangle|)
\]
This function defines a distance on the unit sphere and in particular it satisfies the triangle inequality (see, e.g., Eq.~(9.86) from Ref.~\cite{nielsen2010quantum}):
\begin{equation}
f(u,v)\leq f(v,w)+f(w,u) \qquad u,v,w\in S.
\label{eq:triangle}
\end{equation}
The function $f$ is also continuous and it therefore achieves its maximum value in the compact set $T\times T$. This maximum value $\alpha\equiv \max_{u,v\in T} f(u,v)$ satisfies $\alpha\in [0,\pi/2)$ by Property 2.

Now let $x\in T$ be an arbitrary vector and let $y\in S$ be any vector such that $\langle y|x\rangle=0$. Define
\[
r(\theta)=\cos(\theta)|x\rangle +\sin(\theta)|y\rangle \qquad \theta\in [0,\pi/2].
\] 
Note that $r(0)=x\in T$ and $r(\pi/2)=y\notin T$. Since $T$ is a compact set and $r$ is a continuous function, the image $I\equiv r([0,\pi/2])\cap T$ is also a compact set. Consider the function $g:I\rightarrow [0,\pi/2]$ defined by $g(r(\theta))=\theta$. Since $g$ is continuous and $I$ is compact, this function achieves its maximum value in $I$, which is strictly less than $\pi/2$ since $r(\pi/2)\notin T$. Therefore there is some point $\theta_0\in [0,\pi/2)$ such that
\begin{equation}
r(\theta_0)\in T \qquad\text{ and } \qquad r(\phi)\notin T \quad  \phi\in (\theta_0, \pi/2].
\label{eq:rtheta0}
\end{equation}
Define a set 
\[
W(\theta)=\{ v\in S: \langle v|r(\theta)\rangle=0\} \quad \theta \in [0,\pi/2].
\]

Now consider the value $\beta=\theta_0+\epsilon$ for some $\epsilon\in (0,\pi/2-\theta_0]$ to be chosen later. From Eq.~\eqref{eq:rtheta0} we have 
\begin{equation}
r(\beta)\in S\setminus T
\label{eq:rb}
\end{equation}
Moreover, by the triangle inequality Eq.~\eqref{eq:triangle}
\begin{align}
\min_{u\in W(\beta)} f(u, r(\theta_0))&\geq \min_{u\in W(\beta)} \left( f(u, r(\beta))-f(r(\beta),r(\theta_0))\right)\\
&=\pi/2-\epsilon.
\end{align}
Let us now choose $\epsilon$ to be a small positive constant such that $\pi/2-\epsilon>\alpha$. With this choice we have
\[
\min_{u\in W(\beta)} f(u, r(\theta_0))>\alpha=\max_{c,d\in T} f(c,d).
\]
Since $r(\theta_0)\in T$ we infer that
\begin{equation}
W(\beta)\cap T=\emptyset.
\label{eq:wb}
\end{equation}
Now Eqs.~(\ref{eq:rb},\ref{eq:wb}) states that all vectors in $S$ that are orthogonal to $r(\beta)\in S\setminus T$ are not in $T$. This contradicts Property 3, completing the proof.
\end{proof}
\end{proof}
Finally, let us describe a large family of Hermitian operators in $LY(1)$.
\begin{lemma}
\label{lemma:schur}
Let $|z\ra = \bigotimes_{a=1}^n (|0\ra + z_a^*|1\ra)$ and define an $n$-qubit Hermitian operator
\[
P = |z\ra\la z| - X^{\otimes n}|z^*\ra\la z^*|X^{\otimes n}.
\]
Then $P \in LY(1)$ whenever $z\in \DD_1^n$. 
\end{lemma}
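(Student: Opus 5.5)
The plan is to compute the generating polynomial of $P$ explicitly and show that it cannot vanish, using the fact that a Blaschke factor maps the unit disk into itself. View $P$ as a tensor with $2n$ binary indices: a row index $x$ and a column index $y$. Attach variables $u=(u_1,\ldots,u_n)$ to the row index and $v=(v_1,\ldots,v_n)$ to the column index, so that
\[
f_P(u,v)=\sum_{x,y\in\{0,1\}^n}\la x|P|y\ra \prod_{a\in\mathrm{supp}(x)}u_a\prod_{b\in\mathrm{supp}(y)}v_b .
\]
The goal is to show $f_P(u,v)\neq 0$ for all $u,v\in\DD_1^n$.

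\emph{Step 1: matrix elements.} From $|z\ra=\bigotimes_a(|0\ra+z_a^*|1\ra)$ we get
\[
\la x|z\ra=\prod_a (z_a^*)^{x_a}, \qquad \la z|y\ra=\prod_a z_a^{y_a}.
\]
Since $X|z^*\ra=\bigotimes_a(z_a|0\ra+|1\ra)$, we also get
\[
\la x|X^{\otimes n}|z^*\ra=\prod_a z_a^{1-x_a}, \qquad \la z^*|X^{\otimes n}|y\ra=\prod_a (z_a^*)^{1-y_a}.
\]

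\emph{Step 2: factorization.} Both terms of $P$ are rank-one product operators, so the sum over $x,y$ factorizes qubit by qubit. This gives
\[
f_P(u,v)=\prod_{a=1}^n (1+z_a^*u_a)(1+z_av_a)\;-\;\prod_{a=1}^n (z_a+u_a)(z_a^*+v_a).
\]

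\emph{Step 3: non-vanishing.} Fix $u,v\in\DD_1^n$ and $z\in\DD_1^n$. Each factor $1+z_a^*u_a$ and $1+z_av_a$ is nonzero, because $|z_a^*u_a|<1$ and $|z_av_a|<1$. So the first product is nonzero, and $f_P(u,v)=0$ would require
\[
\prod_a \frac{z_a+u_a}{1+z_a^*u_a}\cdot\frac{z_a^*+v_a}{1+z_av_a}=1 .
\]
The map $w\mapsto (w+c)/(1+\bar c w)$ with $|c|<1$ is a M\"obius automorphism of the unit disk. Hence each of the $2n$ factors has modulus strictly less than $1$ (taking $c=z_a$ for $u_a$ and $c=z_a^*$ for $v_a$). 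For $n\ge1$ the product therefore has modulus $<1$, which is a contradiction. So $f_P\neq0$ on $\DD_1^{2n}$, that is, $P\in LY(1)$.

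There is no real obstacle here. The only point needing care is Step 1: keeping the complex conjugations and the $X^{\otimes n}$ flip straight so that both products come out in Blaschke form. The conclusion does not depend on whether one assigns conjugated or unconjugated variables to the column index, since either convention yields disk automorphisms.
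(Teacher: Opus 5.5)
Your proof is correct and follows the paper's argument. Both write the generating polynomial as a difference of two qubit-wise product terms, factor out the nonvanishing first product, and show that each per-qubit ratio has modulus strictly less than one. The only difference is that the paper checks this last fact via the explicit identity $|1+u_a z_a^*|^2-|u_a+z_a|^2=(1-|z_a|^2)(1-|u_a|^2)>0$, whereas you cite the disk-automorphism property of the M\"obius map; these are the same fact.
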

\begin{proof}
Consider any complex vectors $u,v\in \DD_1^n$. 
Since $\la u|P|v\ra = f_P(v^*,u)$, it suffices to prove that $\la u|P|v\ra\ne 0$.
We have
\begin{align}
\la u|P|v\ra &  = \prod_{a=1}^n (1+u_a z_a^*)(1+v_a^* z_a) -  \prod_{a=1}^n (u_a + z_a)(v_a^* + z_a^*)\nonumber \\
& =\prod_{a=1}^n (1+u_a z_a^*)(1+v_a^* z_a)  \left( 1 - \prod_{a=1}^n \frac{(u_a + z_a)(v_a^* + z_a^*)}{(1+u_a z_a^*)(1+v_a^* z_a)} \right).
\label{Gamma_element}
\end{align}
We claim that 
\be
\label{upper_a}
\left|  \frac{(u_a + z_a)(v_a^* + z_a^*)}{(1+u_a z_a^*)(1+v_a^* z_a)}\right|<1
\ee
for all $a$. Indeed, we have
\[
|1+u_a z_a^*|^2 - |u_a + z_a|^2 = 1 + |u_a|^2 |z_a|^2 + 2\mathrm{Re}(u_a z_a^*) - |u_a|^2 - |z_a|^2 - 2\mathrm{Re}(u_a z_a^*) = (1-|z_a|^2)(1-|u_a|^2) >0.
\]
Thus
\[
\left|\frac{u_a + z_a}{1+u_a z_a^*}\right|<1.
\]
The same argument shows that 
\[
\left|\frac{v_a^* + z_a^*}{1+v_a^* z_a}\right|<1
\]
which proves Eq.~(\ref{upper_a}). Substituting Eq.~(\ref{upper_a}) into Eq.~(\ref{Gamma_element}) shows that 
$\la u|P|v\ra\ne 0$ for all $u,v\in \DD_1^n$, that is, $P \in LY(1)$. 
\end{proof}
\section{Circuits for Lee-Yang tensors \label{sec:state-prep}}

In this section we show that any $n$-qubit quantum state which is a Lee-Yang tensor with radius $r>1$ is (quasi)-efficiently describable in the sense that (a) there exists a quasipolynomial-sized classical circuit that, given any $y\in \{0,1\}^n$, approximates the corresponding $X$-basis amplitude $\langle y|H^{\otimes n}| \psi\rangle$ of the state  to a given relative error, and  (b) there exists a quasipolynomial-sized quantum circuit that approximately prepares the state, to within a given trace distance.
An important caveat is that the classical and quantum circuits depend on the state $\psi$. 
We also assume that $r>1$ is a fixed constant independent of $n$.

The proof of these results is based on Barvinok's polynomial interpolation lemma~\cite{barvinok2016book}, which is a general technique for approximating polynomials which have a zero-free region. In this Section we say that $\tilde{z}\in\mathbb{C}$ is an \textit{$\eps$-relative error estimate} for $z\in\mathbb{C}$ iff $z=\tilde{z}(1+w)$ where $|w|\leq \eps$. 
 
\begin{lemma}[\textbf{Barvinok polynomial interpolation}~\cite{barvinok2016book}]
    Let $f(z)$ be a complex polynomial of degree $n$ satisfying
    \begin{equation}
       f(z)\neq 0 \text{ for all }|z|\leq r
    \end{equation}
    for some $r>1.$ Consider $g(z)=\ln(f(z))$. If $T_{p}(z)$ is the degree $p$ Taylor polynomial of $g(z)$ centered at $z=0$, then we have
    \begin{equation}
        |g(z)-T_p(z)|\leq\frac{n}{(p+1)r^p(r-1)}\text{ for all }|z|\leq 1.
    \end{equation}
    \label{lem:barvinok}
\end{lemma}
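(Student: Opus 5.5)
Since $f$ has no zeros in the closed disk $|z|\le r$, I will factor it over its roots, all of which lie outside that disk: $f(z)=f(0)\prod_{k=1}^{n}(1-z/\zeta_k)$ with $|\zeta_k|>r$. If the actual degree $m$ is smaller than $n$, there are only $m\le n$ factors and the bound only improves. Because $f$ is nonvanishing on the simply connected disk, a branch of $g=\ln f$ exists there. Choosing the branch with $g(0)=\ln f(0)$, I get
\[
g(z)=\ln f(0)+\sum_{k=1}^{n}\ln\left(1-\frac{z}{\zeta_k}\right).
\]
Here each logarithm is the principal branch. This is legitimate because $|z/\zeta_k|<1$ on the disk $|z|<|\zeta_k|$, which contains $|z|\le r$. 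Any other choice of branch differs by a constant $2\pi i\ell$, and that constant cancels in $g-T_p$.

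Next I will expand each factor as a power series:
\[
\ln(1-z/\zeta_k)=-\sum_{j\ge1}\frac{z^j}{j\,\zeta_k^j}.
\]
This converges absolutely for $|z|\le1$, since $|\zeta_k|>r>1$. The degree-$p$ Taylor polynomial of $g$ at $0$ is therefore $\ln f(0)$ plus the truncation of each series at order $p$. This gives
\[
g(z)-T_p(z)=-\sum_{k=1}^{n}\sum_{j\ge p+1}\frac{z^j}{j\,\zeta_k^j}.
\]

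For $|z|\le1$, I bound each term by $\frac{1}{j\,r^j}\le\frac{1}{(p+1)\,r^j}$ and sum the geometric tail:
\[
\sum_{j\ge p+1}r^{-j}=\frac{r^{-(p+1)}}{1-1/r}=\frac{1}{r^p(r-1)}.
\]
Summing over the $n$ roots then gives $|g(z)-T_p(z)|\le\frac{n}{(p+1)r^p(r-1)}$, which is the claimed bound. I expect no real obstacle. The only points needing care are justifying the branch of the logarithm and the termwise expansion, both covered by the absolute convergence for $|z|\le1<r<|\zeta_k|$, and handling the case where the degree is less than $n$.
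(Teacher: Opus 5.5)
Your proof is correct, and it is essentially the standard argument from Barvinok's book, which the paper cites rather than reproduces: factor $f(z)=f(0)\prod_k(1-z/\zeta_k)$ with $|\zeta_k|>r$, expand each $\ln(1-z/\zeta_k)$ as a power series, and bound the tail beyond order $p$ by $\frac{1}{(p+1)r^p(r-1)}$ per root. Your remarks on the branch of the logarithm, whose constant cancels in $g-T_p$, and on a possibly smaller degree are also sound.
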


Lemma~\ref{lem:barvinok} implies that we can choose $p=O(\log(n/\eps))$ so that $\exp{\left(T_p(1)\right)}$ is an $\eps$-relative error estimate for $f(1)$. Moreover, this estimate can be computed using derivatives of $g(z)$ at $z=0$ up to order $p$.  When $f(0)$ and the coefficients of the polynomial $f$ are known up to degree $p$, these derivatives can be computed by solving a triangular system of linear equations.   In particular, letting $f^{(j)}, g^{(j)}$ be the $j$th derivatives of $f,g$ respectively,  evaluated at $z=0$, we have
\begin{equation}
f^{(m)}=\sum_{j=0}^{m-1} {m-1 \choose j} f^{(j)}g^{(m-j)} \qquad 1\leq m\leq p,
\label{eq:tri}
\end{equation}
see Ref.~\cite{barvinok2016book} for details.

The following lemma uses Barvinok's polynomial  interpolation to prove that the Lee-Yang radius
is robust against  small perturbations of a tensor. As a consequence,  amplitudes of the target state $\psi \in LY(r)$ to be prepared by a quantum circuit only need to be specified 
with $poly(n)$ bits of precision. 
\begin{lemma}
\label{lemma:robustness}
Suppose $r>1$ and let $\rho=(1+r)/2$. Let $\psi\in LY(r)$ be a normalized $n$-qubit state
and $\phi$ be an arbitrary $n$-qubit state
such that 
\be
\|\psi - \phi\| \le 2^{-n/2} (1+\rho^2)^{-n/2}  \exp{\left[  -n \frac{(r+1)}{(r-1)}\right]}.
\ee
Then $\phi \in LY(\rho)$.
\end{lemma}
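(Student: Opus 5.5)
Fix $w\in\DD_\rho^n$. The plan is to show $f_\phi(w)\neq 0$ by proving $|f_\psi(w)| > |f_\phi(w)-f_\psi(w)|$.

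\emph{Upper bound on the perturbation.} By Cauchy–Schwarz, writing $f_\chi(w)=\la \bar w|\chi\ra$ with $|\bar w\ra=\bigotimes_a(|0\ra+\bar w_a|1\ra)$, we get
\[
|f_\phi(w)-f_\psi(w)|\le \|\phi-\psi\|\prod_a(1+|w_a|^2)^{1/2}< \|\phi-\psi\|(1+\rho^2)^{n/2}.
\]
Under the hypothesis, this is at most $2^{-n/2}\exp[-n(r+1)/(r-1)]$.

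\emph{Lower bound on $|f_\psi(w)|$ via a one-variable reduction.} I would not reduce to $f_\psi(0)=\psi_{0^n}$ directly, since that need not be large. Instead I use normalization. Since $\sum_x|\psi_x|^2=1$, some amplitude satisfies $|\psi_x|\ge 2^{-n/2}$. I need a way to compare $|f_\psi(w)|$ with some point where $|f_\psi|$ is at least about $2^{-n/2}$. Averaging over the torus gives $\int|f_\psi(e^{i\theta})|^2 d\theta/(2\pi)^n=\|\psi\|^2=1$, so there is a point $u$ with $|u_a|=1$ and $|f_\psi(u)|\ge 1$. Both $u$ and $w$ lie in the closed polydisk of radius $\rho<r$.

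Define the one-variable polynomial $h(t)=f_\psi(\gamma(t))$, where $\gamma$ interpolates coordinatewise affinely between $w$ and $u$, for example $\gamma_a(t)=w_a+t(u_a-w_a)$. Then $\gamma_a(t)$ stays in $\DD_r$ for $|t|$ slightly beyond $1$, so $h$ has degree at most $n$ and is zero-free on a disk of radius $R>1$ around $0$. I then apply Lemma~\ref{lem:barvinok}, or more simply Harnack's inequality for the harmonic function $\ln|h|$ on that disk. Harnack on a disk of radius $R$ with points at $0$ and $1$ gives a ratio bound of the form $\exp[n\cdot O((R+1)/(R-1))]$. The cleaner route is a per-coordinate Harnack argument, changing one variable at a time: for fixed other variables, $z_a\mapsto f_\psi$ is affine, $\alpha+\beta z_a$, and zero-free on $\DD_r$, so $|\beta/\alpha|\le 1/r$. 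Hence for $|z_a|,|z'_a|\le\rho$,
\[
\frac{|\alpha+\beta z_a|}{|\alpha+\beta z'_a|}\ge\frac{1-\rho/r}{1+\rho/r}.
\]
Multiplying over the $n$ coordinates gives $|f_\psi(w)|\ge |f_\psi(u)|\left(\frac{r-\rho}{r+\rho}\right)^n$. With $\rho=(1+r)/2$ this ratio is $(r-1)/(3r+1)$.

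\emph{Matching the constants.} I must check that $\left(\frac{r-1}{3r+1}\right)^n\ge \exp[-n(r+1)/(r-1)]$ (times the needed $2^{-n/2}$ slack). Using $\ln\frac{3r+1}{r-1}=\ln\left(1+\frac{2(r+1)}{r-1}\right)$, this quantity is not always at most $(r+1)/(r-1)$; near $r\to 1$ it is logarithmic, which is fine, but for large $r$ it tends to $\ln 3>1$. So the per-coordinate bound might fail by a constant in the exponent. This is the main obstacle. If it fails, I would instead use the unit-modulus point more cleverly, comparing along rays $z_a=s\,e^{i\theta_a}$, or use Lemma~\ref{lem:barvinok} with the $2^{-n/2}$ amplitude bound, where the paper's $\frac{r+1}{r-1}$ form suggests a Harnack-type constant. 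In the worst case, I would adjust by taking $u$ with $|u_a|$ chosen optimally. Concluding $|f_\psi(w)|>|f_\phi-f_\psi|$ then gives $\phi\in LY(\rho)$.
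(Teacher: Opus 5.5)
Your per-coordinate argument works, and the ``main obstacle'' you flag is not real. You left the factor $2^{-n/2}$ out of the final comparison, and that factor works in your favour, because Parseval gives $|f_\psi(u)|\ge 1$ rather than only $2^{-n/2}$. Write $t=(r+1)/(r-1)>1$. Your ratio is $(r-1)/(3r+1)=1/(1+2t)$, so what you need is $\ln(1+2t)\le t+\tfrac12\ln 2$. This holds at $t=1$ ($\ln 3\approx 1.099<1.347$), and the left side grows more slowly (its derivative is $2/(1+2t)<1$), so it holds for every $r>1$.

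Alternatively, bound the denominator of each step using $|u_a|=1$ instead of $\rho$. The ratio then becomes $(r-\rho)/(r+1)=1/(2t)$, and $\ln(2t)\le t$ for $t\ge 1$ needs no slack at all. Each step is legitimate because every intermediate point has coordinates $u_a$ or $w_a$, and so lies in $\DD_r^n$. Together with your strict Cauchy--Schwarz bound on the open polydisk, this gives $|f_\psi(w)|>|f_\phi(w)-f_\psi(w)|$, so none of your fallbacks are needed.

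Two side remarks need correcting. First, the affine-interpolation idea fails as stated. For complex $t$ with $|t|$ near $1$ (e.g.\ $t=-1$), the coordinate $2w_a-u_a$ can have modulus close to $r+2>r$, so $h$ need not be zero-free on a disk of radius $R>1$ about $0$. Second, $\psi_{0^n}$ \emph{is} large: Lemma~\ref{lem:2} with $P=|0^n\rangle\langle 0^n|-|x\rangle\langle x|$ shows it is the largest amplitude, so $|\psi_{0^n}|\ge 2^{-n/2}$.

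That last fact is exactly what the paper's proof uses. The paper restricts $f_\psi$ to the ray $g(t)=f_\psi(tz)$, which, unlike your segment, is zero-free for $|t|<r/\rho$. It then applies Lemma~\ref{lem:barvinok} with $p=0$ to get $|\log g(1)-\log g(0)|\le n/(r/\rho-1)=n(r+1)/(r-1)$, and uses the Griffiths inequality to bound $|g(0)|=|\psi_{0^n}|\ge 2^{-n/2}$. These are precisely the two factors in the hypothesis.

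Your route replaces both Lemma~\ref{lem:2} and Lemma~\ref{lem:barvinok} with Parseval on the torus and the affine dependence of $f_\psi$ on each variable separately. It is therefore more elementary. It also gives a stronger estimate, $|f_\psi(w)|\ge\bigl(\tfrac{r-1}{2(r+1)}\bigr)^n$ on $\DD_\rho^n$, which behaves like $((r-1)/4)^n$ as $r\to 1$ rather than like $2^{-n/2}e^{-2n/(r-1)}$. So it proves the lemma with a much larger admissible perturbation. The paper's argument, in exchange, stays inside the univariate zero-free-disk and Barvinok framework that drives the rest of Section~\ref{sec:state-prep}.
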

\begin{proof}
Let $\epsilon\equiv \|\psi-\phi\|$.
Pick any $z\in \CC^n$ such that $|z_j|\le \rho$ for all $j$.
The triangle inequality and Cauchy–Schwarz  give
\be
|f_\phi(z)-f_\psi(z)| \le \sum_{x\in \{0,1\}^n} \; |\phi_x-\psi_x| \rho^{|x|} \le  \epsilon (1+\rho^2)^{n/2}.
\ee
Thus
\be
\label{f_phi(z)_lower}
|f_\phi(z)| \ge |f_\psi(z)| - \epsilon (1+\rho^2)^{n/2}\ge \delta -  \epsilon (1+\rho^2)^{n/2},
\ee
where
\be
\label{delta_def}
\delta = \min_{|z_1|,\ldots,|z_n|\le \rho} \; \; |f_\psi(z)|.
\ee
Let us derive a lower bound on $\delta$. Let $z$
be any point such that the minimum in Eq.~(\ref{delta_def}) is attained at $z$.
Consider a univariate polynomial 
$g(t) = f_\psi(tz)$, where  $t\in \CC$.
 Clearly, $g(t)$ has degree at most $n$ and $g(t)$
has no zeros if $|t|<R\equiv r/\rho$. Applying the polynomial interpolation lemma (Lemma~\ref{lem:barvinok}) with the order $p=0$ one gets
\be
|\log{g(1)} -\log{g(0)}|\le \frac{n}{R-1}.
\ee 
Taking the real part gives
\be
|\mathrm{Re}(\log{g(1)}) - \mathrm{Re}(\log{g(0)}) |\le \frac{n}{R-1}.
\ee
For any non-zero complex number $c$ one has $\mathrm{Re}(\log{c})=\log{|c|}$.
Since $|g(1)|=|f_\psi(z)|=\delta$ and $|g(0)|=|f_\psi(0)| = |\la 0^n|\psi\ra|$, 
one gets
\be
|\log{\delta} - \log{|\la 0^n|\psi\ra|} \, |\le  \frac{n}{R-1}.
\ee
Thus $\log{\delta} \ge  \log{|\la 0^n|\psi\ra|} - n/(R-1)$. Exponentiating this inequality gives
\be
\delta \ge |\la 0^n|\psi\ra| \exp{\left[ -\frac{n}{R-1}\right]}.
\ee
Griffiths inequality (Lemma~\ref{lem:2}) with the observable $P=|0^n\ra\la 0^n|-|x\ra\la x|\in LY(1)$ 
gives $|\la 0^n|\psi\ra|\ge |\la x|\psi\ra|$ for all $x\in \{0,1\}^n$. Since $\psi$ is normalized, this implies
$|\la 0^n|\psi\ra|\ge 2^{-n/2}$, that is,
\be
\delta \ge 2^{-n/2} \exp{\left[ -\frac{n}{R-1}\right]}.
\ee
Substituting this into Eq.~(\ref{f_phi(z)_lower}) gives
\be
|f_\phi(z)| \ge 2^{-n/2} \exp{\left[ -\frac{n}{R-1}\right]}
- \epsilon (1+\rho^2)^{n/2}.
\ee
Thus $f_\phi(z)\ne 0$ if $|z_j|\le \rho$ for all $j$  provided that 
\be
\epsilon <2^{-n/2} (1+\rho^2)^{-n/2}  \exp{\left[ -\frac{n}{(r/\rho)-1}\right]}.
\ee
This is equivalent to the statement of the lemma if one substitutes $\rho=(1+r)/2$.
\end{proof}
For simplicity, below we assume that amplitudes of the target state $\psi \in LY(r)$ to be prepared by a quantum circuit  are specified exactly. 
However, if amplitudes of $\psi$ are only specified with $poly(n)$ bits of precision, one can apply Lemma~\ref{lemma:robustness}
to replace $\psi$ with the approximating state at the cost of reducing the Lee-Yang radius from $r$ to $r'=(r+1)/2$.
Clearly, $r'>1$ whenever $r>1$.

The following theorem shows that $X$-basis amplitudes of states in $LY(r)$ with $r>1$ can be computed in quasipolynomial time. We note that the classical circuit described below depends only on the entries $\langle x|\psi\rangle$ with Hamming weight $|x|=O(\log(n/\eps))$.

\begin{theorem}[\textbf{Classical estimation of amplitudes}]
Let $\ket{\psi}\in(\mathbb{C}^2)^{\otimes n}$ be in $LY(r)$ for $r>1$ and $\eps>0.$ There is a classical circuit of size $n^{O(\log(n/\eps))}$ which takes as input $y\in \{0,1\}^n$ and outputs an $\eps$-relative error estimate of the corresponding $X$-basis amplitude
\[
\langle y|H^{\otimes n} |\psi\rangle.
\]
\label{thm:est}
\end{theorem}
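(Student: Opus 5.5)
The plan is to write the $X$-basis amplitude as the value of a univariate polynomial that has no zeros in a disk of radius larger than $1$, and then apply Barvinok interpolation (Lemma~\ref{lem:barvinok}). Since $H|y_a\rangle = (|0\rangle + (-1)^{y_a}|1\rangle)/\sqrt{2}$, we have
\[
\langle y|H^{\otimes n}|\psi\rangle = 2^{-n/2} f_\psi(w), \qquad w_a = (-1)^{y_a}.
\]
Define the univariate polynomial
\[
g(t) = f_\psi(t w) = \sum_{x} \psi_x (-1)^{y\cdot x}\, t^{|x|},
\]
which has degree at most $n$. If $|t|<r$, then every coordinate satisfies $|t w_a| = |t| < r$, so $g(t)\neq 0$ because $\psi\in LY(r)$. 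In particular $g$ has no zeros on the closed disk $|t|\le r'$ for any fixed $1<r'<r$, for instance $r'=(1+r)/2$. The quantity we want is $2^{-n/2}g(1)$.

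Next apply Lemma~\ref{lem:barvinok} to $g$ with radius $r'$. Since $r'-1$ is a constant, choosing $p = O(\log(n/\eps))$ makes
\[
|\ln g(1) - T_p(1)| \le \frac{n}{(p+1)r'^p(r'-1)} \le \eps/2 .
\]
Then $2^{-n/2}\exp(T_p(1))$ is an $\eps$-relative error estimate, because $|e^{u}-1|\le \eps$ whenever $|u|\le \eps/2$ and $\eps\le 1$. Rescale $\eps$ by a constant if needed.

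To compute $T_p(1)$, note that $g^{(0)} = g(0) = \psi_{0^n}$, which is nonzero by the Lee-Yang property. For $1\le m \le p$ we have
\[
g^{(m)}(0) = m!\sum_{|x|=m}\psi_x(-1)^{y\cdot x}.
\]
Each such sum has $\binom{n}{m}\le n^m$ terms, and every term is a fixed hardwired constant $\psi_x$ multiplied by a sign that depends on $y$ through the parity of $y$ restricted to $\mathrm{supp}(x)$. So all Taylor coefficients of $g$ up to degree $p$ can be computed with a circuit of size $\sum_{m\le p} n^m\cdot O(m) = n^{O(\log(n/\eps))}$. We then solve the triangular system Eq.~\eqref{eq:tri} for the derivatives of $\ln g$ at $0$. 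This takes $O(p^2)$ arithmetic operations and uses division only by $g(0)\neq 0$. From these derivatives we form $T_p(1)$ and exponentiate. We also take $\ln g(0)$ on a fixed branch, which only fixes an overall phase consistently.

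The main obstacle is the bookkeeping of numerical precision in the circuit. We need finite-precision arithmetic for the triangular solve and the exponential, and the amplitudes $\psi_x$ are specified to finite precision. I would handle this by carrying $\mathrm{poly}(n,\log(1/\eps))$ bits. This suffices because the coefficients of $\ln g$ are bounded by $n/r'^m$, which follows from writing $\ln g(t) = \ln g(0) + \sum_i \ln(1 - t/\zeta_i)$ over the roots $\zeta_i$, all with $|\zeta_i|\ge r$. Errors in the inputs $\psi_x$ can be absorbed using Lemma~\ref{lemma:robustness} at the cost of shrinking the radius. The resulting overhead is only polynomial, so the total size stays $n^{O(\log(n/\eps))}$. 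The relevant entries $\psi_x$ are exactly those with $|x|\le p$, matching the remark before the theorem.
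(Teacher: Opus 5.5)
Your proposal is correct and follows the paper's proof essentially step for step. The paper also writes the amplitude as $F_y(1)$ for the univariate polynomial $F_y(z)=2^{-n/2}f_\psi(z(-1)^{y_1},\ldots,z(-1)^{y_n})$, computes its coefficients up to degree $p=O(\log(n/\eps))$ from the hardwired entries $\langle x|\psi\rangle$ with $|x|\le p$, solves the triangular system Eq.~\eqref{eq:tri}, and applies Lemma~\ref{lem:barvinok}. Your shrinking to $r'=(1+r)/2$ to meet that lemma's closed-disk hypothesis, and your precision remarks, are extra care the paper leaves implicit (it assumes exact amplitudes).
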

\begin{proof}
    For any $y\in \{0,1\}^n$, consider the degree-$n$ univariate polynomial 
    \begin{equation}
        F_y(z)=\big(\bigotimes_{i=1}^n\frac{1}{\sqrt{2}}(\bra{0}+z(-1)^{y_i}\bra{1})\big)\ket{\psi}
        \label{eq:barv_polynomial}
    \end{equation}
Then $F_y(z)\neq 0$ when $|z|<r,$ by Lemma~\ref{lem:1}. The coefficients of $F_y$ are 
\begin{equation}
    F_y(z)=\sum_{j=0}^{n} c_{y,j} z^j \qquad c_{y,j}=\frac{1}{\sqrt{2^n}}\sum_{\substack{x\in\bits{n}:\\|x|=j}}\langle x|\psi\rangle (-1)^{x\cdot y}.
\end{equation}
Let $p=C\log(n/\eps)$ for $C=O(1)$ and note that for each $0\leq j\leq p$ we can compute the function $y\rightarrow c_{y,j}$ in quasipolynomial time using the expression above (this computation also involves the entries $\langle x|\psi\rangle$ for $|x|\leq p$ which can be hardcoded into our circuit). Hence we can compute the coefficients of $F_y$ up to order $p$ in time $n^{O(\log(n/\eps))}$. Also note that $F_y(0)=2^{-n/2} \langle 0^n|\psi\rangle$ is a constant independent of $y$. From this data we can then compute the derivatives of $\ln F_y(z)$ up to order $p$ using the triangular linear system Eq.~\eqref{eq:tri}, as discussed above.
Using Lemma~\ref{lem:barvinok}, we obtain an $\eps$-relative error estimate for 
\begin{equation}
    F_y(1)=\bra{y}H^{\otimes n}\ket{\psi}
\end{equation}
as desired.
\end{proof}

Next we show how to prepare $\ket{\psi}\in LY(r)$ with $r>1$ in quasipolynomial time on a quantum computer. We shall use the well-known Grover-Rudolph state preparation technique as a subroutine~\cite{grover2002creatingsuperpositionscorrespondefficiently}. This is a method of preparing a superposition
\begin{equation}
\sum_{x\in \{0,1\}^n} \sqrt{p(x)}|x\rangle
\label{eq:qsamp}
\end{equation}
corresponding to a classical probability distribution $p(x)$; states of this form where all entries are nonnegative real numbers are sometimes called ``quantum samples". The method requires the ability to compute marginal probabilities of the form
\[
p_j(y)=\sum_{z\in \{0,1\}^{n-j}}p(yz)
\]
where $1\leq j\leq n$ and $y\in \{0,1\}^j$. To prepare the state Eq.~\eqref{eq:qsamp} one proceeds in a qubit-by-qubit manner, starting from the single-qubit state
\[
\sqrt{p_1(0)}|0\rangle+\sqrt{p_1(1)}|1\rangle.
\]
One then adjoins a second qubit in the $|0\rangle$ state and performs a single qubit rotation conditioned on the state of the first qubit, to prepare:
\[
\sum_{y,z\in \{0,1\}} \sqrt{p_1(y)}|y\rangle|0\rangle \rightarrow \sum_{y,z\in \{0,1\}} \sqrt{p_1(y)}|y\rangle \sqrt{\frac{p_2(yz)}{p_1(y)}}|z\rangle=\sum_{y\in \{0,1\}^2} \sqrt{p_2(y)}|y\rangle.
\]
Continuing in this way, after $n-1$ conditional single-qubit rotations, one obtains the desired state Eq.~\eqref{eq:qsamp}. This procedure can be made to work even if the estimates of marginal probabilities $p_j(y)$ are only accurate to within a given relative error. In that case the runtime and accuracy of the method are summarized as follows.

\begin{lemma}[\textbf{Grover-Rudolph}~\cite{grover2002creatingsuperpositionscorrespondefficiently}]\label{lemma:grstateprep} Let $p(x)$ be a probability distribution on $\{0,1\}^n$ and $\eps > 0.$ Consider the quantum sample
\begin{equation}
    |\psi_p\rangle=\sum_{x\in\{0,1\}^n}\sqrt{p(x)}|x\rangle.
\end{equation}
Suppose that for each $1\leq j\leq n$ there is an algorithm with runtime $T$ which takes as input $y\in \{0,1\}^{j-1}$ and computes a corresponding marginal probability
\[
p_j(y)=\sum_{z\in \{0,1\}^{n-j}}p(yz)
\]
to within a given $O(\eps^{2}/n)$ relative error. Then there is a quantum algorithm with runtime $O(nT)$ which prepares a pure state $|\psi'_p\rangle$ satisfying
\begin{equation}
    \lVert |\psi'_p\rangle-|\psi_p\rangle \rVert\leq\eps.
\end{equation}    
\end{lemma}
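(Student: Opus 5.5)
The plan is to make the qubit-by-qubit construction described above precise, with every marginal replaced by its estimate, and then to bound the accumulated error multiplicatively. Write $\tilde p_j(y)$ for the estimate of $p_j(y)$ returned by the assumed algorithm, where $y\in\{0,1\}^{j}$ is a prefix; the statement's indexing of prefixes is immaterial. We take each estimate to be a deterministic classical computation. Then it can be compiled into a reversible circuit of size $O(T)$ that writes $\tilde p_j(y)$ into an ancilla register and later uncomputes it.

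At step $j$, with prefix $y\in\{0,1\}^{j-1}$ held on the first $j-1$ qubits, we do the following:
\begin{itemize}
\item coherently compute $\tilde p_j(y0)$ and $\tilde p_j(y1)$;
\item compute the angle $\theta_j(y)$ defined by $\cos^2\theta_j(y)=\tilde p_j(y0)/(\tilde p_j(y0)+\tilde p_j(y1))$, choosing an arbitrary angle if both estimates vanish;
\item apply the controlled rotation $|0\ra\mapsto\cos\theta_j(y)|0\ra+\sin\theta_j(y)|1\ra$ to qubit $j$;
\item uncompute the ancillas.
\end{itemize}
The rotation controlled by an angle register is a standard bit-by-bit gadget, and $\mathrm{poly}(n)$ bits of angle precision suffice; this cost is absorbed into $T$. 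There are $n$ steps, each costing $O(T)$, so the total runtime is $O(nT)$. Because the ancillas are returned to $|0\ra$, the output is the pure state $|\psi'_p\ra=\sum_x\sqrt{q(x)}|x\ra$, where
\[
q(x)=\prod_{j=1}^n \frac{\tilde p_j(x_1\ldots x_j)}{\tilde p_j(x_1\ldots x_{j-1}0)+\tilde p_j(x_1\ldots x_{j-1}1)}.
\]
This $q$ is itself a probability distribution, since each factor is a conditional distribution for bit $j$.

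For the error analysis, let $\delta=c\eps^2/n$ be the relative accuracy. Compare each factor of $q$ with the true conditional probability $p_j(y b)/(p_j(y0)+p_j(y1))$. The numerator and the denominator each carry relative error at most $\delta$. Hence the ratio lies within a factor $(1\pm\delta)/(1\mp\delta)\subseteq[1-3\delta,1+3\delta]$ of the truth for small $\delta$. This holds whenever the true marginals are nonzero. When a true marginal is zero, the relative-error guarantee forces its estimate to be zero as well, so the corresponding branch carries zero weight in both $p$ and $q$. The true conditionals telescope to $p(x)$, so multiplying over $j$ gives
\[
q(x)/p(x)\in[(1-3\delta)^n,(1+3\delta)^n]\subseteq[1-O(\eps^2),\,1+O(\eps^2)].
\]
Both states have real nonnegative amplitudes. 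Therefore
\[
\||\psi'_p\ra-|\psi_p\ra\|^2=2-2\sum_x\sqrt{p(x)q(x)}\le 2-2\sqrt{1-O(\eps^2)}=O(\eps^2),
\]
and choosing the constant $c$ small enough makes the distance at most $\eps$.

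I expect the main obstacle to be this error bookkeeping rather than the circuit construction. Three points need care. First, the ratio step must turn two relative errors of size $\delta$ into one of size $O(\delta)$. Second, the per-step relative accuracy must be $\eps^2/n$, not $\eps/n$, because the norm distance scales like the square root of the multiplicative error in $q/p$. Third, zero-probability branches have to be handled so the product bound still applies. A secondary point is that coherent uncomputation requires the estimator to be deterministic, or at least to have been derandomized into a fixed classical circuit, which is exactly how Theorem~\ref{thm:est} supplies it. If randomized estimators were allowed, one would instead amplify their success probability and absorb the failure amplitude into $\eps$.
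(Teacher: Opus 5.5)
Your proposal is correct and uses the same qubit-by-qubit conditional-rotation construction that the paper sketches before citing Grover--Rudolph. The paper gives no error analysis of its own, and your normalization by $\tilde p_j(y0)+\tilde p_j(y1)$, the multiplicative bound on $q(x)/p(x)$, and the fidelity estimate supply exactly the bookkeeping the paper leaves implicit. One detail is missing: truncating the angles to finite precision gives an additive (not relative) amplitude error, which does not fit the multiplicative bound and should be added separately, at most $n$ times the per-rotation error by the triangle inequality.
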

The following theorem shows that states in $LY(r)$ with $r>1$ are prepared by quasipolynomial size quantum circuits.
\begin{theorem}[\textbf{Quantum state preparation}]
Let $\ket{\psi}\in(\mathbb{C}^2)^{\otimes n}$ be in $LY(r)$ for $r>1$ and $\eps>0.$ There is a quantum circuit of size $n^{O(\log(n/\eps))}$ which prepares an $n$-qubit state $|\psi'\rangle$ satisfying $\||\psi\rangle-|\psi'\rangle\|\leq \eps$.
\label{thm:sprep}
\end{theorem}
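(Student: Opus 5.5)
The plan is to prepare $|\psi\rangle$ in the Pauli $X$ basis. I will build the state $H^{\otimes n}|\psi\rangle=\sum_y a_y|y\rangle$, with $a_y=\langle y|H^{\otimes n}|\psi\rangle$, and then apply $H^{\otimes n}$. The construction has two stages. First, Lemma~\ref{lemma:grstateprep} prepares the quantum sample $\sum_y |a_y|\,|y\rangle$ of the distribution $p(y)=|a_y|^2$. Second, a diagonal phase oracle $|y\rangle\mapsto e^{i\arg a_y}|y\rangle$ is applied. Normalization of $\psi$ gives $\sum_y p(y)=1$. By Lemma~\ref{lem:1}, with $r>1$, every $a_y\neq 0$, so the phases are well defined.

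\textbf{Phase oracle.} Theorem~\ref{thm:est} gives a classical circuit of size $n^{O(\log(n/\eps))}$ that outputs an $\eps'$-relative error estimate $\tilde a_y$ with $a_y=\tilde a_y(1+w)$ and $|w|\le\eps'$. This pins down $\arg a_y$ to within $O(\eps')$. I will compile this circuit reversibly and write $\arg\tilde a_y$ to $O(\log(1/\eps))$ bits in an ancilla register. I then apply controlled phase rotations and uncompute the register. The resulting state differs from $H^{\otimes n}|\psi\rangle$ by $\big(\sum_y|a_y|^2 O(\eps')^2\big)^{1/2}=O(\eps')$ in norm. Choosing $\eps'=\Theta(\eps)$ is therefore sufficient.

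\textbf{Marginals (main step).} Grover--Rudolph needs the marginals $p_j(u)=\langle\psi|\big(H^{\otimes j}|u\rangle\langle u|H^{\otimes j}\otimes I\big)|\psi\rangle$ for $u\in\{0,1\}^j$, to relative error $O(\eps^2/n)$. My first attempt would be to compute the reduced density matrix, but its entries involve exponentially large sums over the traced-out qubits. Instead I would apply Barvinok interpolation to the single univariate polynomial
\[
G_u(z)=\sum_{x,x'\in\{0,1\}^j,\ w\in\{0,1\}^{n-j}} \psi^*_{x'w}\,\psi_{xw}\,\langle x'|H^{\otimes j}|u\rangle\langle u|H^{\otimes j}|x\rangle\, z^{|x|+|x'|+2|w|},
\]
which has degree at most $2n$ and satisfies $G_u(1)=p_j(u)$ and $G_u(0)=2^{-j}|\langle 0^n|\psi\rangle|^2>0$.

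\emph{Zero-freeness.} For a fixed $z$ with $|z|<r$, $G_u(z)$ is a full contraction of the following tensors:
\begin{itemize}
\item $D_z^{\otimes n}|\psi\rangle\in LY(r/|z|)$ and $D_z^{\otimes n}|\psi^*\rangle\in LY(r/|z|)$, using that complex conjugation preserves $LY(r)$ (the disk is conjugation symmetric);
\item the states $H^{\otimes j}|u\rangle$ and their conjugates, which lie in $LY(1)$;
\item identity wires pairing the last $n-j$ indices of the two copies of $\psi$. Each such contraction pairs radii $r/|z|$ and $r/|z|$, whose product exceeds $1$.
\end{itemize}
All other contractions pair radii $r/|z|>1$ with $1$. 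Repeated use of Lemma~\ref{lemma:contraction} therefore shows $G_u(z)\neq0$ for $|z|<r$.

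\emph{Low-order coefficients.} The coefficient of $z^k$ only involves terms with $|x|+|x'|+2|w|=k$. For $k\le p=O(\log(n/\eps))$ there are $n^{O(p)}$ such triples $(x,x',w)$, and they use only entries of $\psi$ of Hamming weight at most $p$, which can be hardcoded. Hence Eq.~\eqref{eq:tri} and Lemma~\ref{lem:barvinok}, applied with radius slightly below $r$, give a $\delta$-relative estimate of $p_j(u)$ with $\delta=O(\eps^2/n)$ in time $n^{O(\log(n/\eps))}$.

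\textbf{Assembly.} Lemma~\ref{lemma:grstateprep} then prepares the magnitude state to accuracy $\eps/3$ with $n$ calls to this quasipolynomial-size subroutine. The phase oracle contributes a further error of at most $\eps/3$, and applying $H^{\otimes n}$ is unitary, so it does not increase the error. The total error is at most $\eps$ and the total circuit size is $n^{O(\log(n/\eps))}$. The step I expect to be the main obstacle is the marginal computation. The key point is to place the interpolation variable on both $\psi$ and $\psi^*$, including the traced-out qubits, so that low-degree coefficients require only polynomially many low-weight entries while zero-freeness still follows from Lemma~\ref{lemma:contraction}.
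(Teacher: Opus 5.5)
Your proposal is correct. Its overall architecture is exactly the paper's: Grover--Rudolph on the $X$-basis magnitudes, then a phase oracle built from Theorem~\ref{thm:est}, then $H^{\otimes n}$. The difference is only in how the marginals are estimated.

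The paper notes that $\rho_k=\tr_{k+1,\ldots,n}(|\psi\rangle\langle\psi|)$ lies in $LY(r)$ by Lemma~\ref{lemma:contraction}, since the traced pairs have radius product $r^2>1$. It then views $\rho_k$ as a $2k$-qubit vector whose $X$-basis amplitudes include $p_k(y)$, and applies Theorem~\ref{thm:est} as a black box. The exponentially large sums you worried about are not an obstacle for the theorem as stated. The circuit is allowed to depend on $\psi$, so the $k^{O(p)}$ low-weight entries $(\rho_k)_{ab}=\sum_w\psi_{aw}\psi^*_{bw}$ are simply numbers to hardcode.

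Your polynomial $G_u$ also places the interpolation variable on the traced-out wires, which amounts to running that argument on $D_z^{\otimes n}|\psi\rangle$. Your zero-freeness check is sound: every contraction pairs radius $r/|z|$ with $1$, or $r/|z|$ with $r/|z|$, so all products exceed $1$. What your route buys is explicitness. The low-order coefficients of $G_u$ involve only amplitudes $\psi_y$ with $|y|\le p$. Combined with the phase step, the whole state-preparation circuit is therefore determined by, and computable in quasipolynomial time from, the $n^{O(\log(n/\eps))}$ low-weight amplitudes of $\psi$. This extends to state preparation the remark the paper makes just before Theorem~\ref{thm:est}. The paper's own proof does not preserve that property, because each low-weight entry of $\rho_k$ depends on all amplitudes of $\psi$. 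The paper's route, in exchange, needs no new zero-freeness argument.
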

\begin{proof}
For each $0\leq k\leq n$ let
\begin{equation}
    \rho_k=\tr_{k+1,\ldots,n}(\ket{\psi}\bra{\psi})
\end{equation}
be the density matrix obtained from $|\psi\rangle$ by  tracing out the last $n-k$ qubits. The measured distribution of $ H^{\otimes n}\ket{\psi}$ is given by
    \begin{equation}
        p(x)=|\bra{x}H^{\otimes n}\ket{\psi}|^2.
    \end{equation}
We also consider marginal probabilities for $0\leq k\leq n$ defined by
    \begin{equation}
        p_k(y)=\bra{y}H^{\otimes k}\rho_k H^{\otimes k}\ket{y} \qquad y\in \{0,1\}^k.
        \end{equation}
The density matrix $\rho_k$ is in $LY(r)$ by Lemma~\ref{lemma:contraction} because it is obtained from the tensor $|\psi\rangle\langle \psi|$ by contracting indices corresponding to qubits $k+1,\ldots,n$. Note that we can regard $\rho_k$ as a vector with $4^k$ entries, so that each marginal $p_k(y)$ is an entry of the vector. We can  therefore use Theorem \ref{thm:est} to infer there is a classical circuit with runtime  $n^{O(\log(n/\delta))}$ which computes the marginals $y\rightarrow p_k(y)$ to within a given relative error $\delta$. Using this subroutine with $\delta=O(\eps^2/n)$ in Lemma~\ref{lemma:grstateprep} gives a quantum algorithm with runtime $n^{O(\log(n/\eps))}$ that prepares a state $\ket{\phi}$ that approximates the quantum sample corresponding to the distribution $p$:
    \begin{equation}
        \lVert\ket{\phi}-\sum_{x\in\bits{x}}\sqrt{p(x)}\ket{x}\rVert\leq\eps/2.
        \label{eq:approxamps}
    \end{equation}

Next we shall modify this state to ensure that each entry has the correct complex phase.  Observe that using Theorem~\ref{thm:est} to compute an $O(\eps)$-relative error approximation for $\bra{y} H^{\otimes n}\ket{\psi}$,  we can infer an $\eps/2$-additive error estimate $\tilde{\theta}_{y}$ of the phase $\theta_y=\arg(\bra{y} H^{\otimes n}\ket{\psi})$. This approximation to the phase can be computed by a reversible circuit which acts on $n+\log(\eps^{-1})$ qubits as
    \begin{equation}
        \ket{y}\ket{a}\mapsto\ket{y}\ket{a\oplus\tilde{\theta}_y}.
    \end{equation}
    With the ancilla register initialized to $\ket{0^{\log(\eps^{-1})}}$, apply a single-qubit diagonal unitary to each ancilla qubit to recover the phase
    \begin{equation}
        \ket{y}\ket{\tilde{\theta}_y}\mapsto e^{i\tilde{\theta}_y}\ket{y}\ket{\tilde{\theta}_y}.
    \end{equation}
    Next, uncompute and trace out the ancilla register, resulting in the state
    \begin{equation}
        \ket{\phi}\mapsto\sum_{x\in\bits{n}}\phi_xe^{i\tilde{\theta}_x}\ket{x}=:\ket{\phi'},
    \end{equation}
    which is within $\eps/2$ of the phase-approximate state 
    \begin{equation}
        \ket{\tilde\psi}=\sum_{x\in\bits{n}}\sqrt{p(x)}e^{i\tilde{\theta}_x}\ket{x}
    \end{equation}
    by Eq.~\eqref{eq:approxamps}. The state $\ket{\tilde{\psi}}$ is also within $\eps/2$ of $H^{\otimes n}\ket{\psi}$, as can be seen via
    \begin{equation}
        \begin{split}
            \|H^{\otimes n}\ket{\psi}-\ket{\tilde{\psi}}\|^2&=\sum_{x\in\bits{n}}p(x)|e^{i\theta_x}-e^{i\tilde{\theta}_x}|^2\\
        &=\sum_{x\in\bits{n}}p(x)(2-e^{i(\theta_x-\tilde{\theta}_x)}-e^{-i(\theta_x-\tilde{\theta}_x)})
        \end{split}
    \end{equation}
    and using the fact that $|\theta_x-\tilde\theta_x|<\eps/2$ for all $x\in\bits{n}.$ Hence by the triangle inequality we have
    \begin{equation}
        \|H^{\otimes n}\ket{\psi}-\ket{\phi'}\|\leq\eps
    \end{equation}
    and the state $\ket{\psi'}=H^{\otimes n}\ket{\phi'}$ satisfies $\|\ket{\psi}-\ket{\psi'}\|\leq \eps$ as desired.
\end{proof}

\section{Quantum many-body systems \label{sec:qmb}}

In this section we argue that thermal Gibbs states and ground states of many interesting quantum many-body systems
are Lee-Yang tensors with the radius $r=1$ and provide numerical evidence for $r>1$ in certain cases\footnote{These results also extend to canonical purifications of the Gibbs state in a straightforward way. Indeed, it follows from the definitions that the ``thermofield double" state,  proportional to $e^{-\beta H/2} \otimes I\sum_{z}|z\rangle|z\rangle$, has the same Lee-Yang radius as $e^{-\beta H/2}$.}.
We shall consider Hamiltonians that describe a system of $n$ qubits with two-qubit interactions.
The qubits are located at the vertices
of an interaction graph $G=(V,E)$ with a set of vertices  $V=\{1,\ldots,n\}$ such that a pair of qubits $(i,j)$ can interact  only if $(i,j)\in E$.
We write $X_i,Y_i,Z_i$ for the Pauli operators acting on the $i$-th qubit. 

\subsection{Remark on Lee-Yang radius $r=1$}
Before proceeding, let us briefly comment on the fact that in this section we often discuss quantum many-body states with Lee-Yang radius $r=1$ whereas we might prefer if the Lee-Yang radius were strictly greater than $1$ (for instance to make use of the results in Section \ref{sec:state-prep}). One way to deal with this is to use the following naive method  to increase the Lee-Yang radius of a given state. In particular, if $|\psi\rangle \in LY(1)$ is a normalized $n$-qubit state then let
\[
|\psi(r)\rangle=\frac{1}{\|D_{1/r}^{\otimes n}|\psi\rangle\|}D_{1/r}^{\otimes n}|\psi\rangle
\]
 where $D_{s}=\mathrm{diag}(1,s)$. By definition of the Lee Yang radius, we have $|\psi(r)\rangle \in LY(r)$. Although this transformation with $r>1$ expands the Lee-Yang radius, it also can drastically change the state;  unfortunately $|\psi(r)\rangle$ is not in general close to $|\psi\rangle $ in trace distance. However, the situation is actually not so bad if $|\psi\rangle$ is an eigenstate of a local Hamiltonian $H$ and if we measure closeness with respect to energy rather than trace distance. For example, suppose $H=\sum_{ij} h_{ij}$ is a two-local Hamiltonian such that each term $h_{ij}$ acts nontrivially only on qubits $i,j$, and that $H|\psi\rangle=\lambda|\psi\rangle$. In that case it is not hard to show that the expected value of $H$ on the state $|\psi(1+\mu)\rangle$ for a small constant $\mu=O(1)$ approximates $\lambda$ to within an additive term that is bounded as $O(\mu)\cdot \sum_{ij}\|h_{ij}\|$.

 To see this note that 
 \begin{equation}
 \langle \psi(r)| H'(r)|\psi(r)\rangle=\lambda,
 \label{eq:hprime}
 \end{equation}
 where
 \[
H'(r)=D_{1/r}^{\otimes n} H  (D_{1/r}^{-1})^{\otimes n}=\sum_{i,j} D_{1/r}^{\otimes 2} h_{ij}(D_{1/r}^{-1})^{\otimes 2} 
 \]
 is in general non-Hermitian. From Eq.~\eqref{eq:hprime} we have
 \[
 |\langle \psi(r)|H|\psi(r)\rangle -\lambda|= |\langle \psi(r)|H'(r)-H|\psi(r)\rangle|\leq \sum_{i,j}\|D_{1/r}^{\otimes 2} h_{ij}(D_{1/r}^{-1})^{\otimes 2}-h_{ij}\|.
 \]
 Finally, note that if $r=1+\mu$ for a small constant $\mu$, then the RHS is upper bounded as $O(\mu)\cdot \sum_{ij}\|h_{ij}\|$.

\subsection{Suzuki-Fisher models}
\label{subs:SF}

Consider an anisotropic Heisenberg model with external magnetic fields described  by a Hamiltonian
\be
\label{SFmodel1}
H=-\sum_{(i,j)\in E} H_{i,j} - \sum_{i\in V} H_i,
\ee
where
\be
\label{SFmodel2}
H_{i,j} =J^x_{i,j} X_i X_j + J^y_{i,j} Y_i Y_j + J^z_{i,j} Z_i Z_j 
\ee
and
\be
\label{SFmodel3}
H_i = \mu^x_i X_i + \mu^y_i Y_i + \mu^z_i Z_i.
\ee
Here $J$'s and $\mu$'s are real coefficients. 
\begin{dfn}
An $n$-qubit Hermitian operator $H$ is called a Suzuki-Fisher Hamiltonian 
if it has the form defined in Eqs.~(\ref{SFmodel1},\ref{SFmodel2},\ref{SFmodel3}) 
where all $Z_iZ_j$ terms  are ferromagnetic and dominating such that
\be
\label{ferro1}
J^z_{i,j} \ge \max\{ |J^x_{i,j}|, |J^y_{i,j}|\}
\ee
for all $(i,j)\in E$ and all magnetic $Z$-fields are non-negative such that $\mu^z_i\ge0$ for all $i\in V$.
Let $\mathsf{SF}(n)$ be the set of all $n$-qubit Suzuki-Fisher Hamiltonians. 
\end{dfn}
Suzuki and Fisher~\cite{sf71} proved the following. 
\begin{fact}
\label{fact:sf71}
The thermal Gibbs state $\rho =e^{-\beta H}/\mathrm{Tr}(e^{-\beta H})$ of any Hamiltonian $H\in \mathsf{SF}(n)$
is contained in $LY(1)$ for any inverse temperature $\beta\ge0 $.
\end{fact}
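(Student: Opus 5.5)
We plan to prove Fact~\ref{fact:sf71} through a Lie--Trotter product formula, the closure of $LY(1)$ under contraction (Lemma~\ref{lemma:contraction}), and the limit result Corollary~\ref{corol:seq}.

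Since positive rescaling does not change membership in $LY(1)$, it suffices to show $e^{-\beta H}\in LY(1)$. Write $-\beta H=\sum_{(i,j)\in E}\beta H_{i,j}+\sum_{i\in V}\beta H_i$. By the Trotter formula,
\[
e^{-\beta H}=\lim_{m\to\infty}\Big(\prod_{(i,j)\in E} e^{\beta H_{i,j}/m}\prod_{i\in V} e^{\beta H_i/m}\Big)^m .
\]
The limit is nonzero because $e^{-\beta H}$ is invertible, and convergence of matrices implies entrywise convergence. Corollary~\ref{corol:seq} therefore reduces the claim to showing that each Trotter product lies in $LY(1)$.

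Each Trotter product is a product of full-rank operators. A product of two operators is obtained by tensoring them (closed in $LY(1)$) and then contracting indices. Lemma~\ref{lemma:contraction} with $r_ir_j=1$ then gives membership in $LY(1)$ whenever the result is nonzero, which holds here since every factor is invertible. This is the semigroup remark after Corollary~\ref{corol:seq}. So it suffices to show that each local factor $e^{tH_{i,j}}$ and $e^{tH_i}$ with $t\ge 0$ lies in $LY(1)$, tensored with the identity on the remaining qubits. The identity $I$ is in $LY(1)$, since its generating polynomial is $1+z_1z_2$.

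For the local factors we use Theorem~\ref{thm:sf71}, vectorizing each operator as a tensor on $4$ indices (two-qubit gate) or $2$ indices (one-qubit gate).

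\emph{One-qubit factor.} Write $e^{tH_i}=\cosh(t|\mu|)I+\sinh(t|\mu|)\,\hat n\cdot\vec\sigma$ with $\hat n\cdot\vec\sigma=(\mu^x X+\mu^y Y+\mu^z Z)/|\mu|$. Its entries are
\begin{align*}
a_{00}&=c+s\,n_z, & a_{11}&=c-s\,n_z, & a_{01}&=s(n_x-in_y), & a_{10}&=s(n_x+in_y),
\end{align*}
where $c=\cosh(t|\mu|)$ and $s=\sinh(t|\mu|)$.
\begin{itemize}
\item Symmetry: $(X\otimes X)A(X\otimes X)$, read as the index-flip of the vectorized tensor, equals $A^*$. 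Indeed, flipping all indices swaps $a_{00}\leftrightarrow a_{11}$ and $a_{01}\leftrightarrow a_{10}$. The off-diagonal swap is exactly complex conjugation. The diagonal swap is conjugation only when $n_z=0$, so we should instead check condition~\eqref{eq:sf1} with the appropriate index ordering and transpose.
\item Fallback: if that fails, verify the one-qubit case directly. A $2\times2$ matrix as a two-index tensor is handled by Lemma~\ref{lemma:LY1}, and the condition should reduce to $\mu^z\ge0$.
\end{itemize}

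\emph{Two-qubit factor.} $e^{tH_{i,j}}$ commutes with $X\otimes X$ and $Z\otimes Z$ parity structures, and in the basis $\{00,11\}$, $\{01,10\}$ it is block diagonal:
\begin{itemize}
\item the $\{00,11\}$ block is $e^{tJ^z}$ times $\cosh$/$\sinh$ of $t(J^x-J^y)$;
\item the $\{01,10\}$ block is $e^{-tJ^z}$ times $\cosh$/$\sinh$ of $t(J^x+J^y)$.
\end{itemize}
The entries are real, so the conjugation symmetry~\eqref{eq:sf1} holds, as in the Pauli-channel Choi-matrix computation. The dominance condition~\eqref{eq:abs} reads
\[
4e^{tJ^z}\cosh(tJ_-)\ \ge\ 2e^{tJ^z}\big(\cosh tJ_-+|\sinh tJ_-|\big)+2e^{-tJ^z}\big(\cosh tJ_++|\sinh tJ_+|\big),
\]
where $J_\pm=J^x\pm J^y$. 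Equivalently,
\[
e^{tJ^z}e^{-t|J_-|}\ \ge\ e^{-tJ^z}e^{t|J_+|},
\]
i.e.\ $2J^z\ge |J^x-J^y|+|J^x+J^y|=2\max(|J^x|,|J^y|)$. This is exactly the ferromagnetic dominance condition~\eqref{ferro1}.

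The main obstacle is the one-qubit field factor with $\mu^z>0$. There the index-flip symmetry~\eqref{eq:sf1} fails, since the diagonal entries are unequal. We would first handle it directly through the $2\times 2$ Möbius criterion of Lemma~\ref{lemma:LY1}, or by writing $e^{tH_i}$ as $e^{t\mu^z Z/2}\,e^{t(\mu^xX+\mu^yY)}\,e^{t\mu^zZ/2}$ in Trotter form. The factor $e^{t\mu^zZ}=\mathrm{diag}(e^{t\mu^z},e^{-t\mu^z})$ is clearly in $LY(1)$ since $\mu^z\ge0$, and the transverse part satisfies~\eqref{eq:sf1} and~\eqref{eq:abs} because $\cosh\ge|\sinh|$. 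This second Trotter splitting avoids the problem entirely, so we would adopt it.
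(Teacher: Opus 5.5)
Your proposal is correct and follows essentially the same route as the paper. You Trotterize $e^{-\beta H}$ with the single-qubit field split into its $Z$ part (checked directly using $\mu^z_i\ge 0$) and its transverse part $H_i^{xy}$. That transverse part and the two-qubit couplings are both certified via Theorem~\ref{thm:sf71}, with your condition $2J^z\ge |J^x-J^y|+|J^x+J^y|$ reducing exactly to Eq.~\eqref{ferro1}. You then combine the gates with Lemma~\ref{lemma:contraction}, using invertibility to exclude the zero tensor, and pass to the limit with Corollary~\ref{corol:seq}.

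The differences are cosmetic. You use exact exponentials $e^{th}$, whose Suzuki--Fisher conditions hold for every $t\ge 0$, where the paper uses linearized gates $I+\epsilon h$ that need small $\epsilon$. Also, your expression $e^{t\mu^z Z/2}e^{t(\mu^xX+\mu^yY)}e^{t\mu^z Z/2}$ is not an identity for $e^{tH_i}$; read it as listing $H_i^z$ and $H_i^{xy}$ as separate summands in the product formula, which is harmless inside the Trotter limit.
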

Since our setup is slightly different from the original work~\cite{sf71}, for the sake of completeness we sketch the proof below\footnote{The original work~\cite{sf71}
considered the quantum partition function $\mathrm{Tr}(e^{-\beta H})$ as a function of a single variable parameterizing
the magnetic $Z$-field and showed that all zeros of this function lie on the imaginary axis  in the complex plane.}.
\begin{proof}
We have to prove that $e^{-\beta H} \in LY(1)$ for any $H\in \mathsf{SF}(n)$ and any $\beta \ge0$.
We shall use the following corollary of Theorem~\ref{thm:sf71}.
\begin{corol}
\label{corol:sf71}
Suppose $F$ is a $k$-qubit operator. Let $F^*$ be the complex conjugate of $F$ in the standard
basis of $k$ qubits. Suppose $F X^{\otimes k} = X^{\otimes k}F^*$ and 
\be
\label{sf_condition}
|\la 0^k|F|0^k\ra| \ge \frac14 \sum_{a,b \in \{0,1\}^k} |\la a|F|b\ra|.
\ee
Then $F\in LY(1)$.
\end{corol}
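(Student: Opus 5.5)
The plan is to vectorize $F$ and apply Theorem~\ref{thm:sf71} with $m=2k$. A $k$-qubit operator $F$ is a tensor with $2k$ binary indices: row indices $a$ and column indices $b$. Define the vector
\[
|\psi\ra=\sum_{a,b\in\{0,1\}^k}\la a|F|b\ra\, |a\ra|b\ra \in (\CC^2)^{\otimes 2k}.
\]
By definition $F\in LY(1)$ iff $\psi\in LY(1)$, since the generating polynomials coincide, $f_F=f_\psi$. It therefore suffices to verify the two hypotheses of Theorem~\ref{thm:sf71} for $\psi$.

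\textbf{Condition \eqref{eq:abs}.} This is immediate. We have $\la 0^{2k}|\psi\ra=\la 0^k|F|0^k\ra$, and
\[
\sum_{y\in\{0,1\}^{2k}}|\la y|\psi\ra| = \sum_{a,b}|\la a|F|b\ra|.
\]
So the assumed inequality \eqref{sf_condition} is exactly Eq.~\eqref{eq:abs}.

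\textbf{Condition \eqref{eq:sf1}.} Acting with $X^{\otimes 2k}$ flips both row and column indices, so
\[
X^{\otimes 2k}|\psi\ra=\sum_{a,b}\la \bar a|F|\bar b\ra\,|a\ra|b\ra,
\]
where $\bar a$ is the bitwise complement of $a$. The vectorization of $X^{\otimes k}FX^{\otimes k}$ is exactly this vector. The hypothesis $FX^{\otimes k}=X^{\otimes k}F^*$ is equivalent to $X^{\otimes k}FX^{\otimes k}=F^*$, using $(X^{\otimes k})^2=I$. Entrywise this reads $\la\bar a|F|\bar b\ra=\la a|F|b\ra^*$, i.e.\ $X^{\otimes 2k}|\psi\ra=|\psi^*\ra$, which is Eq.~\eqref{eq:sf1} with the $+$ sign.

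Theorem~\ref{thm:sf71} then gives $\psi\in LY(1)$, hence $F\in LY(1)$. The only step needing care is the index bookkeeping: checking that conjugation by $X^{\otimes k}$ corresponds to $X^{\otimes 2k}$ on the vectorization. No transpose appears, because both sides of the operator are flipped, and $X$ is real and symmetric. This is the same computation already done for the Choi matrix of the Pauli channel above.
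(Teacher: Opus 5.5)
Your proposal is correct and is the paper's argument: the paper's proof is the one-line remark that vectorizing $F$ gives a special case of Theorem~\ref{thm:sf71} with $2k$ indices. You have written out the bookkeeping the paper leaves implicit, namely that Eq.~\eqref{sf_condition} is literally Eq.~\eqref{eq:abs} and that $X^{\otimes k}FX^{\otimes k}=F^*$ gives Eq.~\eqref{eq:sf1} with the $+$ sign.
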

\begin{proof}
Vectorizing $F$ gives a special case of Theorem~\ref{thm:sf71} with $n=2k$.
\end{proof}
Let 
\[
H_i^z = \mu^z_i Z_i \quad \mbox{and} \quad H_i^{xy} = \mu^x_i X_i + \mu^y_i Y_i.
\]
Write $H=-\sum_{a=1}^m h_a$, where each term $h_a$ is the Hamiltonian $H^z_i$ or $H^{xy}_i$ or  $H_{i,j}$ applied to some qubit or a pair of qubits. 
Pick a large enough integer $\ell$  and let 
$\epsilon = \frac{m\beta}{\ell}$.
The first-order Trotter circuit with $\ell/m$ Trotter steps gives an approximation
\be
\label{TSapprox}
e^{-\beta H} \approx V_\ell \cdots V_2 V_1
\ee
where 
\[
V_t = I + \epsilon h_{f(t)}
\]
for  $f(t)=1+t {\pmod m}$. It is well known that the approximation Eq.~(\ref{TSapprox}) becomes exact in the limit $\ell\to \infty$.
We claim that all individual gates $V_t$ are contained in $LY(1)$ provided that $\epsilon>0$ is small enough.

Indeed, suppose
 $V_t$ is a single-qubit gate $I+\epsilon H_i^z$. Ignoring all  qubits not participating in $V_t$, the generating polynomial of $V_t$ is 
\[
f(u)=\la u_1^*|V_t|u_2\ra = 1+ \epsilon \mu^z  + (1-\epsilon \mu^z) u_1 u_2,
\]
where $\mu^z= \mu^z_i\ge 0$ and $u=(u_1,u_2)\in \CC^2$ are complex variables.
Suppose  $|u_1|<1$ and $|u_2|<1$. Then 
$|f(u)| > 1+\epsilon \mu^z - 1+\epsilon \mu^z = 2\epsilon \mu^z\ge0$
for all small enough $\epsilon>0$. Hence $V_t\in LY(1)$.

Suppose $V_t$ is a single-qubit gate $I+\epsilon H_i^{xy}$.
Ignoring all  qubits not participating in $V_t$ one gets 
\[
V_t = \left[ \ba{cc}
1 & \epsilon( \mu^x - i \mu^y) \\
\epsilon( \mu^x + i\mu^y) & 1\\
\ea\right]
\]
for some real $\mu^x,\mu^y$. One can check that 
 $XV_tX=V_t^*$ and $|\la 0|V_t|0\ra| \ge (1/4)\sum_{a,b\in \{0,1\}} |\la a|V_t|b\ra|$,
 for all small enough $\epsilon>0$.
 Corollary~\ref{corol:sf71} then implies $V_t\in LY(1)$.

Suppose $V_t$ is a two-qubit gate $I+\epsilon H_{i,j}$. 
Ignoring all  qubits not participating in $V_t$ one gets
\[
V_t =  \left[ \ba{cccc}
1 +\epsilon J^z  & 0 & 0 & \epsilon (J^x -J^y) \\
0 & 1 - \epsilon J^z & \epsilon (J^x +J^y) & 0 \\
0 & \epsilon (J^x +J^y)  & 1-\epsilon J^z  & 0 \\
\epsilon (J^x - J^y) & 0 & 0 & 1+\epsilon J^z \\
\ea\right]
\]
for some real $J^x,J^y,J^z$. One can check that $X^{\otimes 2} V_tX^{\otimes 2}=V_t^*$
and   $|\la 00|V_t|00\ra| \ge (1/4)\sum_{a,b\in\{0,1\}^2} |\la a|V_t|b\ra|$ for all $\epsilon>0$
whenever $J^z\ge |J^x|$ and $J^z\ge |J^y|$.
Corollary~\ref{corol:sf71} then implies $V_t\in LY(1)$.
 
 To conclude, the Trotter circuit $V=V_\ell \cdots V_1$ is a (partial) contraction of a tensor network
 composed of $LY(1)$ tensors $V_1,\ldots,V_\ell$ and tensors associated with the identity operators.
 The latter are also $LY(1)$ tensors.
 Lemma~\ref{lemma:contraction} implies that 
either $V=0$ or $V\in LY(1)$. The former case is impossible since $V$ is a product of full-rank operators.
 Thus $V\in LY(1)$. 
  Taking the limit $\ell\to \infty$ and using Corollary~\ref{corol:seq}
 proves $e^{-\beta H} \in LY(1)$ which is equivalent to $\rho_\beta \in LY(1)$.
\end{proof}
{\em Comment:} Suppose $H\in \mathsf{SF}(n)$ and $H'$ is obtained from $H$ by replacing each two-qubit
interaction $H_{i,j}$ defined in Eq.~(\ref{SFmodel2}) with 
\[
H_{i,j}'= U_{i,j} H_{i,j} U_{i,j}^\dag,
\]
where $U_{i,j}$ is a diagonal unitary operator that applies arbitrary $Z$-rotations to the $i$-th and to the $j$-th qubit.
The rotation angles may depend on the graph edge, so that the Hamiltonians $H$ and $H'$ may have different eigenvalues. 
However,
the same proof  as above shows that $e^{-\beta H'} \in LY(1)$ since modifying a tensor by 
applying  single-qubit $Z$-rotations to some indices does not change the Lee-Yang radii. 

\subsection{XXZ Hamiltonians}
\label{subs:XXZ}

Let us now consider a  subclass of Suzuki-Fisher Hamiltonians studied in~\cite{harrow2020classical}.
Define the particle number operator
\[
N=\sum_{i\in V} |1\ra\la 1|_i.
\]
A Hamiltonian $H$ is said to be particle number preserving if $H$ commutes with $N$.
One can easily check that a Hamiltonian 
$H$ defined in Eqs.~(\ref{SFmodel1},\ref{SFmodel2},\ref{SFmodel3}) 
is particle number preserving if 
$J^x_{i,j}=J^y_{i,j}$ and $\mu^x_i=\mu^y_i=0$ for all $i,j$. 
Then
\be
\label{XXZ}
H = -\sum_{(i,j) \in E} J^{xy}_{i,j}(X_i X_j + Y_i Y_j) + J^z_{i,j} Z_i Z_j - \sum_{i\in V} \mu^z_i Z_i
\ee
is a generalization of the XXZ spin chain~\cite{yang1966one}.
From Fact~\ref{fact:sf71} one gets the following.

\begin{corol}
\label{corol:SFradius}
Suppose $H\in \mathsf{SF}(n)$ is particle number preserving.
Then 
\be
e^{-\beta H} \in LY(r)  \quad \mbox{where} \quad r=\min_{i\in V} e^{\beta \mu^z_i}.
\ee
In particular, $r>1$ if all magnetic $Z$-fields are positive and $\beta>0$.
\end{corol}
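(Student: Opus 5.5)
The plan is to split off the magnetic field, use particle-number conservation to absorb it into a rescaling of the variables, and then apply Fact~\ref{fact:sf71} to the field-free part. Write $H = H_0 - \sum_{i\in V}\mu^z_i Z_i$, where $H_0$ collects the two-qubit terms with $J^x_{i,j}=J^y_{i,j}=J^{xy}_{i,j}$ and $J^z_{i,j}\ge |J^{xy}_{i,j}|$. Then $H_0\in\mathsf{SF}(n)$ with zero fields, and Fact~\ref{fact:sf71} gives $e^{-\beta H_0}\in LY(1)$.

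Since $H$ commutes with $N$, each term in $H_0$ preserves particle number. The key observation is that $H_0$ commutes with every single-qubit $Z_i$ term individually, not merely with their sum. This fails in general; for example $X_iX_j+Y_iY_j$ does not commute with $Z_i$ alone. So the field cannot simply be factored out as a commuting diagonal operator, and I would argue differently.

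Use that $e^{-\beta H}$ commutes with $N$, so $\langle x|e^{-\beta H}|y\rangle=0$ unless $|x|=|y|$. Let $D=\bigotimes_i \mathrm{diag}(1,s_i)$ and use the Trotter argument of Fact~\ref{fact:sf71}. Take $V_t = I+\epsilon h$ with $h=\mu^z_i Z_i$. Its generating polynomial is $1+\epsilon\mu + (1-\epsilon\mu)u_1u_2$, which is nonzero whenever
\[
|u_1u_2| < \frac{1+\epsilon\mu}{1-\epsilon\mu} \approx e^{2\epsilon\mu}.
\]
This gate therefore lies in $LY_2((\rho_1,\rho_2))$ for any radii with $\rho_1\rho_2\le e^{2\epsilon\mu}$. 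That is a gain of $e^{\epsilon\mu}$ per index, but only for gates acting on qubit $i$.

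Tracking radii through the Trotter product is awkward, so I would instead conjugate. Write the number-preserving operator as
\[
e^{-\beta H} = e^{\beta\sum_i \mu_i^z Z_i/2}\,\widetilde{M}\,e^{\beta\sum_i\mu_i^z Z_i/2}
\]
up to a Trotter-symmetrized product. More cleanly, use $\langle x|A|y\rangle=0$ unless $|x|=|y|$, together with the identity $f_A(s z, s^{-1}w)=f_A(z,w)$. Here $f_A(z,w)=\sum \langle x|A|y\rangle z^x w^y$ is invariant under $z\to sz$, $w\to w/s$ for a scalar $s$, because the total degrees in $z$ and $w$ are equal.

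Combine this with the uniform field $\mu_{\min}=\min_i\mu^z_i$. Split $H = H' - \mu_{\min} N'$, where $N'=\sum_i Z_i = n-2N$ commutes with everything, and $H'$ has fields $\mu^z_i-\mu_{\min}\ge 0$, so $H'\in\mathsf{SF}(n)$. Then
\[
e^{-\beta H}=e^{\beta\mu_{\min}(n-2N)}e^{-\beta H'} = e^{\beta\mu_{\min} n}\, D_s^{\otimes n}\,e^{-\beta H'}\,D_s^{\otimes n},
\]
with $s=e^{-\beta\mu_{\min}}$, since $e^{-2\beta\mu_{\min}N}=D_{s^2}^{\otimes n}$ and $D_{s^2}$ may be split as $D_s$ on each side. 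Fact~\ref{fact:sf71} gives $e^{-\beta H'}\in LY(1)$, and conjugating by $D_s^{\otimes n}$ multiplies every variable's radius by $1/s=e^{\beta\mu_{\min}}=r$.

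The main point to check, and the only real obstacle, is that the residual nonuniform fields remain admissible in $\mathsf{SF}(n)$, which holds since $\mu^z_i-\mu_{\min}\ge0$.
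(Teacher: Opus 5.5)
Your final argument is correct and is essentially the paper's proof: lower every field by $\mu_{\min}=\min_i\mu^z_i$ so that $H'\in\mathsf{SF}(n)$, write $e^{-\beta H}=e^{\beta\mu_{\min}n}D_s^{\otimes n}e^{-\beta H'}D_s^{\otimes n}$ with $s=e^{-\beta\mu_{\min}}$, and apply Fact~\ref{fact:sf71} to $e^{-\beta H'}$. Delete the earlier exploratory paragraphs (the retracted commutation claim, the Trotter radius bookkeeping, and the vague ``Trotter-symmetrized'' factorization), and state explicitly that both factoring out $e^{\beta\mu_{\min}N'}$ and splitting $D_{s^2}^{\otimes n}$ symmetrically use $[N,H']=0$, which follows from $[N,H]=0$; the claim that $N'$ ``commutes with everything'' is too strong as written.
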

\begin{proof}
Let $\mu = \min_{i\in V} \mu^z_i$ and $H'$ be a Hamiltonian obtained from $H$ by replacing each term $\mu^z_i Z_i$ with $(\mu^z_i - \mu) Z_i$.
Using the identity $Z = I - 2|1\ra\la 1|$ one gets
$H=cI + H' + 2\mu N$,
where $c=-\mu n$. Since $H$  is particle number preserving, one gets
\be
\label{radius_pushing_by_N}
e^{-\beta H} = \gamma e^{-\beta \mu N} e^{-\beta H'} e^{-\beta \mu N}
\ee
with $\gamma = e^{\beta \mu n}$. Since $H'$ has non-negative magnetic $Z$-fields
and two-qubit terms in $H'$ are the same as those in $H$, 
we have 
$H'\in \mathsf{SF}(n)$.
Fact~\ref{fact:sf71} then gives $e^{-\beta H'} \in LY(1)$.
Let $r=e^{\beta \mu}$.
Using the identity $e^{-\beta \mu |1\ra\la 1|}=|0\ra\la 0| + r^{-1} |1\ra\la 1|$
and Eq.~(\ref{radius_pushing_by_N}) one gets 
$e^{-\beta H} \in LY(r)$. Indeed, if $u,v\in \DD_{r}^n$ then
\[
\la u|e^{-\beta H}|v\ra = \gamma \la u'|e^{-\beta H'}|v'\ra
\]
where $u'=r^{-1} u \in \DD_1^n$ and $v'=r^{-1} v \in \DD_1^n$.
From $\la u'|e^{-\beta H'}|v'\ra\ne 0$ one gets $\la u|e^{-\beta H}|v\ra \ne 0$.
\end{proof}
Unfortunately, particle number preserving Hamiltonians $H\in \mathsf{SF}(n)$
have the trivial ground state $|0^n\ra$.
This can be seen by checking that $|0^n\ra$ is the ground state of each local term $-J^{xy}_{i,j} (X_i X_j + Y_i Y_j) - J^z_{i,j} Z_i Z_j$
provided that $J^z_{i,j}\ge |J^{xy}_{i,j}|$. Furthermore, thermal properties of such Hamiltonians can be simulated efficiently (in time quasi-polynomial in $n$)
using the classical algorithm of Harrow, Mehraban, and Soleimanifar~\cite{harrow2020classical}.
We leave as an open question whether some version of Corollary~\ref{corol:SFradius} holds for general Hamiltonians $H\in \mathsf{SF}(n)$.
For example, is it true that for any $H\in \mathsf{SF}(n)$ with strictly positive magnetic $Z$-fields one has 
$e^{-\beta H} \in LY(r)$ for some $r>1$ ?

\subsection{EPR-like Hamiltonians}
\label{subs:EPR}

The EPR state $(|00\ra + |11\ra)/\sqrt{2}$ is the unique ground state of a
Hamiltonian $-(X_1 X_2 + Z_1 Z_2 - Y_1 Y_2)$. 
Given an interaction graph $G=(V,E)$ as above, 
define an $n$-qubit EPR Hamiltonian as
\be
\label{EPR}
H = -\sum_{(i,j)\in E} X_i X_j + Z_i Z_j - Y_i Y_j.
\ee
It follows directly from the definitions that $H\in \mathsf{SF}(n)$. 
Note also that $H$ is not particle number preserving (it only preserves the number of particles modulo two).
As discussed in Section~\ref{sec:motivation}, the Heisenberg anti-ferromagnetic model
on any bipartite graph is equivalent to the EPR Hamiltonian up to a local basis change on each qubit
that flips the sign of terms $X_iX_j$ and $Z_i Z_j$. 
From Fact~\ref{fact:sf71} one gets $e^{-\beta H} \in LY(1)$ for all $\beta \ge 0$.
However,  $e^{-\beta H} \notin LY(r)$ for any $r>1$.  Indeed,
otherwise one would have $\la u|e^{-\beta H}|v\ra\ne 0$ for any
vectors $u,v\in \CC^n$ such that $|u_p|=|v_p|=1$ for all $p$.
Choose  $|u\ra$ and $|v\ra$ as tensor products of single-qubit states
$|0\ra \pm |1\ra$ such that  $X^{\otimes n}|u\ra=|u\ra$ 
and $X^{\otimes n}|v\ra=-|v\ra$. Then 
\[
\la u|e^{-\beta H} |v\ra = \la u|X^{\otimes n} e^{-\beta H} |v\ra = \la u|e^{-\beta H} X^{\otimes n} |v\ra = -\la u|e^{-\beta H} |v\ra.
\]
Here we used the fact that $H$ commutes with $X^{\otimes n}$.
Hence $\la u|e^{-\beta H}|v\ra= 0$ which is a contradiction. 

Fix a parameter $s\in [0,1]$ and define an EPR-like Hamiltonian
\be
H_s = -\sum_{(i,j)\in E} |\phi_s\ra\la \phi_s|_{i,j}
\ee
where
\be
|\phi_s\ra = |00\ra + s|11\ra
\ee
and $|\phi_s\ra\la \phi_s|_{i,j}$ denotes the operator $|\phi_s\ra\la \phi_s|$ applied to qubits $i,j$ tensored with the identity on all other qubits.
One can easily check that 
\be
|\phi_s\ra\la \phi_s|_{i,j} = \frac14\left(
(1+s^2)I + (1+s^2) Z_i Z_j + (1-s^2)(Z_i + Z_j) + 2s X_i X_j - 2s Y_i Y_j\right).
\ee
Thus $H_1$ coincides with the standard EPR Hamiltonian Eq.~(\ref{EPR}) up to rescaling and an energy shift. 
Since we assumed $s\in [0,1]$, one has $1+s^2\ge 2s$.
Thus $H_s$ has ferromagnetic dominating $Z_i Z_j$ terms and non-negative magnetic $Z$-fields,
that is,
\be
H_s \in \mathsf{SF}(n) \quad \mbox{for all $s\in [0,1]$},
\ee
ignoring an overall energy shift. In particular, $e^{-\beta H_s} \in LY(1)$.  

From now on we assume that $0\le s<1$ and $G$ is a connected graph. 
Is it true that $e^{-\beta H_s} \in LY(r)$ for
some $r>1$ ? Consider first the simplest case when $n=2$. 
\begin{lemma}
\label{lemma:radius_n2}
Suppose $n=2$ and the graph $G$ is a single edge. Then 
$e^{-\beta H_s} \in LY(r)$ where
\be
\label{radius_n2}
r=\frac1{\sqrt{s}} \left(
\frac{1+s^2 e^{-\beta (1+s^2)}}{1+ s^{-2} e^{-\beta(1+s^2)}}\right)^{1/4}.
\ee
Note that $r>1$ for any $s\in [0,1)$ and $\beta>0$.
Furthermore, $e^{-\beta H_s} \notin LY(r')$ for any $r'>r$.
\end{lemma}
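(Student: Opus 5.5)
The plan is to compute $e^{-\beta H_s}$ explicitly, write down its generating polynomial, and then locate its zeros exactly. For $n=2$ we have $H_s=-|\phi_s\ra\la\phi_s|$ with $\la\phi_s|\phi_s\ra=\lambda\equiv 1+s^2$. Hence
\[
e^{-\beta H_s}=I+c\,|\phi_s\ra\la\phi_s|,\qquad c=\frac{e^{\beta\lambda}-1}{\lambda}\ge 0 .
\]
This operator acts as the identity on $\mathrm{span}\{|01\ra,|10\ra\}$ and as the $2\times 2$ block with entries $1+c,\ cs,\ cs,\ 1+cs^2$ on $\mathrm{span}\{|00\ra,|11\ra\}$. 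Let $a_1,a_2$ be the row variables and $b_1,b_2$ the column variables. The generating polynomial is then
\[
f(a,b)=(1+c)+cs(a_1a_2+b_1b_2)+(1+cs^2)a_1a_2b_1b_2+a_1b_1+a_2b_2 .
\]

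\textbf{Sharpness.} I would first establish the claim $e^{-\beta H_s}\notin LY(r')$ for $r'>r$, by exhibiting zeros on the diagonal. Set $a_1=a_2=b_1=b_2=z$ and $w=z^2$. Then $f=(1+cs^2)w^2+2(1+cs)w+(1+c)$, and its discriminant is
\[
(1+cs)^2-(1+c)(1+cs^2)=-c(1-s)^2<0 .
\]
So both roots are non-real conjugates with $|w|^2=(1+c)/(1+cs^2)$. Substituting $c$ gives $(1+c)/(1+cs^2)=(1+s^2e^{-\beta\lambda})/(s^2+e^{-\beta\lambda})$, which is exactly $r^4$ for the $r$ of Eq.~(\ref{radius_n2}). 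Thus $f$ has a zero with all four coordinates of modulus $r$, and hence a zero inside $\DD_{r'}^4$ for every $r'>r$. Checking $r>1$ amounts to $1+s^2e>s^2+e$, i.e. $(1-s^2)(1-e)>0$, which holds for $s<1$ and $\beta>0$.

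\textbf{Zero-freeness on $\DD_r^4$.} This is the main obstacle. I would rescale all variables by $r$ and reduce to showing that the rescaled tensor is in $LY(1)$. Group $f$ as a multiaffine polynomial in $(a_1,b_1)$ with coefficients depending on $(a_2,b_2)$:
\[
f=\bigl[(1+c)+cs\,b_1b_2\bigr]+a_1\bigl[cs\,a_2+(1+cs^2)a_2b_1b_2\bigr]+a_1b_1 .
\]
A cleaner grouping is $f=P+Q\,a_1+R\,b_1+S\,a_1b_1$ with
\[
P=(1+c)+a_2b_2,\qquad Q=cs\,a_2,\qquad R=cs\,b_2,\qquad S=1+(1+cs^2)a_2b_2 .
\]
For fixed $(a_2,b_2)\in\DD_r^2$, this is the generating polynomial of a two-qubit tensor. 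Zero-freeness in $(a_1,b_1)\in\DD_r^2$ is then the Möbius self-map condition used in Lemma~\ref{lemma:LY1}, applied after rescaling by $r$. So I would verify the inequality of Eq.~(\ref{condLY1}) for the rescaled coefficients $P,\ rQ,\ rR,\ r^2S$, uniformly over $|a_2|,|b_2|<r$.

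By the maximum modulus principle the worst case lies on the torus $|a_2|=|b_2|=r$. The expressions then depend only on the phases of $a_2$ and $b_2$, and in fact mainly on the single phase of $a_2b_2$ together with the moduli. I expect the resulting inequality to become tight exactly at the diagonal zeros found above, which serves as a consistency check on the algebra.

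If the direct Möbius verification proves too messy, the fallback is a symmetrization argument. The polynomial $f$ is invariant under $(a_1,b_1)\leftrightarrow(a_2,b_2)$ and under $a\leftrightarrow b$. I would try to show that a zero of minimal polyradius can be moved to the diagonal, by applying the Grace--Walsh--Szeg\H{o} theorem separately to the symmetric pairs. On the diagonal, the quadratic computation above shows the minimal modulus is exactly $r$, which would complete the argument.
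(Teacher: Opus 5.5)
Your generating polynomial, the sharpness argument and the check $r>1$ are correct. The diagonal substitution is a fine alternative to the paper's explicit zero $f(r,-r,-r,-r)=0$: Vieta gives $|w|^2=(1+c)/(1+cs^2)=r^4$ whether or not the discriminant $-c(1-s)^2$ vanishes.

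The gap is the main assertion $e^{-\beta H_s}\in LY(r)$, which the proposal never proves. The Möbius route stops at ``I would verify'' and ``I expect''. Its one concrete reduction, that by the maximum modulus principle the worst case lies on $|a_2|=|b_2|=r$, is unjustified. The quantity $|d|^2-|c|^2-|b\bar d-a\bar c|-|ad-bc|$ from Eq.~(\ref{condLY1}) contains complex conjugates of the coefficients, so it is not the modulus of a holomorphic function of $(a_2,b_2)$.

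The fallback fails as stated. Grace--Walsh--Szeg\H{o} needs a block of individual variables in which $f$ is multiaffine and symmetric, but $f$ is invariant only under simultaneous swaps of pairs. Swapping $a_1\leftrightarrow a_2$ alone sends $a_1b_1$ to the absent monomial $a_2b_1$, and the same happens for $a_1\leftrightarrow b_1$, $a_2\leftrightarrow b_2$ and $b_1\leftrightarrow b_2$. Swapping $a_1\leftrightarrow b_2$ (or $a_2\leftrightarrow b_1$) exchanges the coefficients $cs$ and $1$. So for $cs\neq 1$ no transposition is a symmetry, and there is nothing for the theorem to collapse onto the diagonal.

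The paper closes this step with a determinant and norm argument. It writes $f=\det M$ with $M=A+D_1BD_2$. Here $A$ and $B$ are $\sqrt{1+c}$ and $\sqrt{1+s^2c}$ times $2\times 2$ rotations, and $D_1=\mathrm{diag}(a_1,b_2)$, $D_2=\mathrm{diag}(b_1,a_2)$ have norm $<r$. Hence $\det M=(1+c)\det(I+K)$ with $\|K\|<r^2\sqrt{(1+s^2c)/(1+c)}=1$, so $\det M\neq 0$.

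Alternatively, your own computations nearly finish the job via Theorem~\ref{thm:sf71}. Rescale all four variables by $r$; the identity $(1+cs^2)r^4=1+c$ makes the rescaled tensor real, with equal entries on complementary bit strings. In the ordering $(a_1,a_2,b_1,b_2)$ the entries are $1+c$ on $0000,1111$; $csr^2$ on $1100,0011$; and $r^2$ on $1010,0101$. So Eq.~(\ref{eq:sf1}) holds with the $+$ sign. Eq.~(\ref{eq:abs}) reduces to $r^2(1+cs)\le 1+c$, that is $(1+cs)^2\le(1+c)(1+cs^2)$, which is exactly your discriminant inequality.
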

\begin{proof}
We have
\[
e^{-\beta H_s} = I + \epsilon |\phi_s\ra\la \phi_s|,\qquad \epsilon = \frac{e^{\beta(1+s^2)}-1}{1+s^2}.
\]
Note that $\epsilon\ge0$.
The generating polynomial of $e^{-\beta H_s}$ is 
\[
f(z)=\la z_1^*,z_2^*|e^{-\beta H_s}|z_3,z_4\ra =  (1+z_1z_3)(1+z_2z_4) + \epsilon (1+ s z_1 z_2)(1+ s z_3 z_4).
\]
Write $f(z)=\det{M(z)}$, where 
\be
M(z)=\left[ \ba{cc}
1+z_1 z_3 & -\epsilon^{1/2} (1+sz_1 z_2) \\
\epsilon^{1/2} (1+sz_3z_4) & 1+z_2 z_4\\
\ea
\right].
\ee
We have 
\[
M(z) = \left[ \ba{cc}
1 & -\epsilon^{1/2} \\
\epsilon^{1/2} & 1 \\
\ea
\right] 
+ \left[
\ba{cc}
z_1 & 0 \\
0 & z_4 \\
\ea
\right]
\cdot
\left[ \ba{cc}
1& -s\epsilon^{1/2} \\
s\epsilon^{1/2} & 1 \\
\ea
\right] 
\cdot
 \left[
\ba{cc}
z_3 & 0 \\
0 & z_2 \\
\ea
\right].
\]
Clearly,
\[
 \left[ \ba{cc}
1 & -\epsilon^{1/2} \\
\epsilon^{1/2} & 1 \\
\ea
\right]  = I - i \epsilon^{1/2} Y =  (1+\epsilon)^{1/2} e^{-i\theta Y}
\]
where $\cos{\theta}=(1+\epsilon)^{-1/2} $. We arrive at 
\[
\det{M(z)} = (1+\epsilon) \det{\left(I+K(z)\right)},
\]
where 
\[
K(z) =  (1+\epsilon)^{-1/2} e^{i\theta Y}
 \left[
\ba{cc}
z_1 & 0 \\
0 & z_4 \\
\ea
\right]\cdot
\left[ \ba{cc}
1& -s\epsilon^{1/2} \\
s\epsilon^{1/2} & 1 \\
\ea
\right] 
\cdot
 \left[
\ba{cc}
z_3 & 0 \\
0 & z_2 \\
\ea
\right].
\]
Let $r>0$ be a radius to be defined later. 
For any $z_1,z_2,z_3,z_4\in \DD_r$  one has
\[
\left\| 
\left[
\ba{cc}
z_1 & 0 \\
0 & z_4 \\
\ea
\right]
\right\|
< r, \quad
\left\| \left[
\ba{cc}
z_3 & 0 \\
0 & z_2 \\
\ea
\right]
\right\| < r,
\quad
\left\|
\left[  \ba{cc}
1 & -s\epsilon^{1/2} \\
s\epsilon^{1/2} & 1\\
\ea\right]
\right\| = \sqrt{1+s^2 \epsilon}.
\]
Sub-multiplicativity of the norm implies that 
\[
\| K(z)\| < r^2  \left( \frac{1+s^2 \epsilon}{1+\epsilon}
\right)^{1/2}\le 1
\]
if we choose
\be
\label{r_from_epsilon}
r=\left( \frac{1+\epsilon}{1+s^2 \epsilon}\right)^{1/4}.
\ee
From  $\|K(z)\|<1$ one gets
$ \det{M(z)}\ne 0$ and thus
$f(z)\ne 0$ for all $z_1,z_2,z_3,z_4\in \odisk{r}$.
Substituting the definition of $\epsilon$ into Eq.~(\ref{r_from_epsilon}) 
gives Eq.~(\ref{radius_n2}).

The last statement of the lemma follows from 
\[
f(r,-r,-r,-r)=(1-r^2)(1+r^2) + \epsilon (1-sr^2)(1+sr^2) = 1+\epsilon - r^4(1+\epsilon s^2)=0.
\]
\end{proof}
{\em Comment:} Numerical experiments 
for small ($n\le 6$) random examples suggest that 
the first part of
Lemma~\ref{lemma:radius_n2} might hold for all connected graphs, that is, $e^{-\beta H_s} \in LY(r)$, where $r$ is determined by Eq.~(\ref{radius_n2}) for all $\beta\ge 0$, $s\in [0,1]$, and all connected graphs $G$.

Let $|\psi\ra$ be the ground state of $H_s$.
In the next section we show $H_s$
has a non-degenerate ground state whenever $n\ge 2$ and the interaction graph $G$ is connected.  Taking the limit $\beta\to \infty$ in Lemma~\ref{lemma:radius_n2} gives $\psi \in LY(r)$ with $r=1/\sqrt{s}$ for $n=2$.
Numerical experiments suggest the following. 
\begin{conj}
\label{conj:lygap} 
Suppose $n\ge 2$,
$s\in [0,1)$,
and the interaction graph is connected. 
Let $\ket{\psi}$ be the ground state $H_s$. Then there exists $r>1$ such that $\ket{\psi}\in LY(r).$ 
More ambitiously, $\ket{\psi}\in LY(s^{-1/2})$ and $\ket{\psi}\notin LY(r)$ for any $r>s^{-1/2}.$ 
\end{conj}
In particular, our numerical evidence for this conjecture is as follows.
\begin{numerics}[\textbf{Lee-Yang radius of EPR-like Hamiltonians}] 
\label{num:ly_gap}For each connected graph with $3\leq n\leq 8$ vertices and trees with $3\leq n\leq 13$ vertices, we sampled $10$ random values of $s\in (0,1)$ and computed the roots of the univariate polynomials associated with equatorial postselection
\begin{equation}
    f_y(z)=\bigotimes_{i=1}^n(\bra{0}+(-1)^{y_i}z\bra{1})\ket{\psi}
\end{equation} 
for each $y\in\{0,1\}^n.$
 We found that none of the sampled polynomials have roots $z$ with $|z|<s^{-1/2}.$
\end{numerics}

\begin{figure}[H]
    \centering
    \makebox[\paperwidth][l]{\includegraphics[width=0.49\paperwidth]{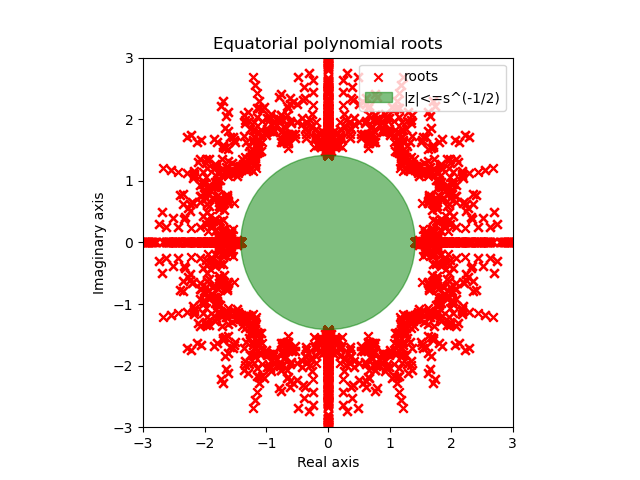}
    \hspace{-2cm}\includegraphics[width=0.49\paperwidth]{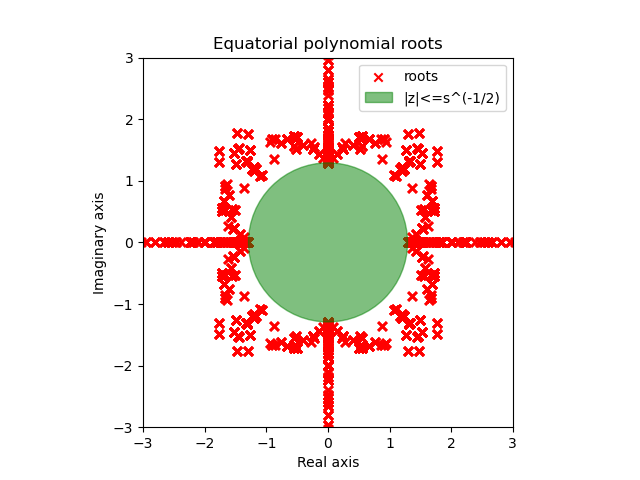}}
\caption{Plots from Numerical Study~\ref{num:ly_gap}. Zeros of equatorial polynomials which lie near the disk $\DD_{s^{-1/2}}$ for a path and cycle  with $10$ vertices and $s=0.5,0.6$ respectively. }
    \label{fig:ly_numerics}
\end{figure}

\subsection{Spectral gap of EPR-like Hamiltonians}
\label{subs:EPRgap}

Theorem~\ref{thm:degeneracy} provides a link between the eigenvalue spectrum of a Hamiltonian $H$ and the
Lee-Yang radius $r$ of the Gibbs state $\rho\sim e^{-\beta H}$ with $\beta>0$. Namely, $r>1$ implies that the maximum eigenvalue of $\rho$
is non-degenerate.
Equivalently, if $\lambda_1\le \lambda_2\le \ldots \le \lambda_{2^n}$ are the eigenvalues of $H$ then
the spectral gap $\lambda_2-\lambda_1$ is strictly positive. 
  In this section we further explore the interplay  between the Lee-Yang radius and the spectral gap
for EPR-like Hamiltonian $H_s$
defined in Section~\ref{subs:EPR}.

\begin{lemma}
\label{lemma:lambda1}
Suppose $n\ge 2$, $s\in [0,1)$,
and the interaction graph is connected. Then 
the ground state of  $H_s$  is non-degenerate.
The ground state can be chosen as  a superposition of even-weight bit strings
with non-negative amplitudes. 
\end{lemma}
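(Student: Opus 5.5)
Since $H_s$ has real matrix entries, the plan is to prove the lemma with the Perron--Frobenius theorem applied to $-H_s$, restricted to a suitable invariant subspace.

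\textbf{Step 1: reduce to the even-parity sector.} Each term $|\phi_s\ra\la\phi_s|_{i,j}$ commutes with the parity $Z^{\otimes n}$, because $|\phi_s\ra$ is supported on even-weight strings $00,11$. Hence $H_s$ is block diagonal with respect to the even and odd sectors. In the standard basis every matrix element of $-H_s=\sum_{(i,j)\in E}|\phi_s\ra\la\phi_s|_{i,j}$ is nonnegative, since each term is a product of nonnegative matrices tensored with the identity.

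\textbf{Step 2: irreducibility on the even sector (for $s>0$).} Form the graph on even-weight strings in which $x\sim y$ whenever $\la x|(-H_s)|y\ra>0$. The term on edge $(i,j)$ connects $x$ to $x\oplus e_i\oplus e_j$ whenever $x_i=x_j$, with weight $s$. It also has positive diagonal entries.

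One must also handle strings with $x_i\ne x_j$. For these, use the connectivity of $G$ to move along paths: flipping the pair $(i,j)$ along a path from $u$ to $v$ flips exactly $u$ and $v$. Flips are only allowed on edges whose endpoints agree. Still, any even-weight string can be reduced to $0^n$ by repeatedly choosing two $1$'s. The plan is to show this carefully, for instance by working on a spanning tree and pushing $1$'s along it, where moving a $1$ across an edge to a $0$ is not directly allowed.

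A cleaner route is to use a power of the operator. Since $\la x|(-H_s)^k|y\ra$ is a sum of nonnegative path weights, it suffices to exhibit some path. The local move on edge $(i,j)$ maps $11\to 00$ and $00\to 11$. A $1$ at $i$ next to a $0$ at neighbor $j$ cannot move directly, which is the main obstacle. I would handle it by annihilating pairs of $1$'s: pick two $1$'s at $u,v$ joined by a shortest path in $G$, all of whose interior vertices are $0$ (choose the closest pair). Write the path as $u=w_0,\dots,w_m=v$. Flip $(w_1,w_2)$, giving $1,1,1,0,\dots$, then annihilate $(w_0,w_1)$. This leaves a $1$ at $w_2$, so the $1$ has moved two steps. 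This handles paths of even length.

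For odd-length paths I would use a third vertex, which exists since $n\ge 2$ and $G$ is connected. If $n=2$, the only even strings are $00$ and $11$, which are adjacent. In general, with $n\ge3$ and a connected graph, one can route through a branch vertex or adjust the parity of the path length using flips that create and later remove pairs. I expect this combinatorial detail to be the main work, and I would try to phrase it as: the moves generate all even-weight flips $e_u\oplus e_v$ along the edges. Combined with the fact that flipping an edge with $x_i\ne x_j$ can be emulated through neighboring moves, irreducibility follows. Alternatively, irreducibility plus aperiodicity follows from the positive diagonal, and Perron--Frobenius then gives a simple top eigenvalue in the even sector with a strictly positive eigenvector.

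\textbf{Step 3: compare the sectors.} It remains to show that the even-sector top eigenvalue of $-H_s$ strictly exceeds the odd-sector one. The odd-sector matrix is entrywise dominated by the even one under a suitable injection. Alternatively, take the odd-sector Perron vector $v\ge0$ (in absolute value) and build an even vector with a larger Rayleigh quotient. For $s=0$ the ground state is $|0^n\ra$ with energy $-|E|$, while any other string has energy at least $-|E|+1$, so that case is immediate.

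For $0<s<1$ I would use the strict inequality between the two sectors, and also the last step of the paper's theme: Theorem~\ref{thm:degeneracy} applied to $e^{-\beta H_s}$, if one knows $LY(r)$ with $r>1$, which is not yet available. So instead I would argue by continuity in $s$: the sector ground energies are continuous in $s$, they are distinct at $s=0$, and a crossing would create a degeneracy. To rule that out I would use Lemma~\ref{lem:2}-type positivity, or directly the variational bound $E_{\rm odd}\ge E_{\rm even}+(1-s^2)$-type estimates. The main expected obstacle is this interchannel comparison; I would first try the explicit map $|x\ra\mapsto$ an even string obtained by flipping one fixed vertex, and check that it does not increase the energy.
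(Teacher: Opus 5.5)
There is a genuine gap in Step 2, and Step 3 is left open.

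\textbf{Step 2: the even sector is reducible.} In the standard basis, the term $|\phi_s\ra\la\phi_s|_{i,j}$ has nonzero matrix elements only between strings with $x_i=x_j$. Its diagonal entry also vanishes when $x_i\ne x_j$, so your positive-diagonal claim fails too. As a result, the even sector is in general \emph{reducible}.

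Take the path on vertices $1,2,3$ with edges $(1,2),(2,3)$. The even string $101$ disagrees on both edges, so $H_s|101\ra=0$. Thus $|101\ra$ is a block by itself, unreachable from $000$. Your two-step move shrinks the distance between a pair of $1$'s by $2$. Pairs at even distance therefore end in the pattern $1,0,1$, which admits no move; you have the parity backwards. More generally, for bipartite $G$ any even-weight proper $2$-coloring lies in $\ker H_s$ and decouples. Perron--Frobenius restricted to the even sector therefore yields neither simplicity nor that the top block is the one containing $0^n$.

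\textbf{Step 3: no working argument.}
\begin{itemize}
\item The flip $x\mapsto x\oplus e_v$ does not give entrywise domination, because it swaps which edge terms at $v$ vanish.
\item The continuity-in-$s$ argument presupposes the non-crossing it is meant to prove.
\item The estimates you want to invoke ($LY(r)$ with $r>1$, or a sector gap $1-s^2$) are Conjectures~\ref{conj:lygap} and~\ref{conj:sgap}, not established facts.
\end{itemize}

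\textbf{The missing idea: work in the Hadamard basis.} Both problems disappear there at once. Since
$h\otimes h|\phi_s\ra=\tfrac12(1+s)(|00\ra+|11\ra)+\tfrac12(1-s)(|01\ra+|10\ra)$
has strictly positive amplitudes for $s\in[0,1)$, each term $h^{\otimes n}|\phi_s\ra\la\phi_s|_{i,j}h^{\otimes n}$ is an entrywise positive $4\times 4$ block on qubits $i,j$ tensored with the identity. It therefore connects $x$ to every string differing from $x$ only on $\{i,j\}$. Because $G$ is connected with $n\ge 2$, every vertex lies on an edge and single-bit flips are available. Hence $-H_s$ is nonnegative and irreducible on the \emph{full} $2^n$-dimensional space.

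Perron--Frobenius then gives a simple smallest eigenvalue of $H_s$ outright, with no comparison of parity sectors needed. Its eigenvector $|\psi\ra$ has positive Hadamard-basis amplitudes, so $\la 0^n|\psi\ra>0$. Non-degeneracy and $[H_s,Z^{\otimes n}]=0$ give $Z^{\otimes n}|\psi\ra=\pm|\psi\ra$, and $\la 0^n|\psi\ra>0$ selects the $+$ sign.

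Your Step~1 remark then finishes the non-negativity claim. $-H_s\succeq 0$ is entrywise nonnegative in the standard basis, so its top eigenvalue has a nonnegative eigenvector. By simplicity, that eigenvector is $|\psi\ra$.
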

\begin{proof}
Let $h$ be the Hadamard gate. 
Since $s\in [0,1)$,   the state
\[
h\otimes h |\phi_s\ra=(1/2)(1+s)(|00\ra+|11\ra) + (1/2)(1-s)(|01\ra+|10\ra)
\]
has  positive amplitudes in the standard basis.
It follows that $h^{\otimes 2}|\phi_s\ra\la \phi_s|h^{\otimes 2}$ is a matrix
with positive entries.
Since the interaction graph  is connected and $n\ge 2$, each vertex has at least one incident edge.
Thus, $-H_s$ is a non-negative irreducible matrix in the Hadamard basis.   Perron-Frobenius theorem implies that the smallest
eigenvalue of $H_s$ is non-degenerate and the corresponding eigenvector $|\psi\ra$  has real positive  amplitudes in the Hadamard basis.
Since the state $|0^n\ra$ also has real positive amplitudes in the Hadamard basis, one infers that $\la 0^n|\psi\ra>0$. 
From $H_sZ^{\otimes n}=Z^{\otimes n}H_s$ and the fact that the smallest  eigenvalue is non-degenerate one infers that $Z^{\otimes n}|\psi\ra = \pm |\psi\ra$.
Combining this and $\la 0^n|\psi\ra>0$ gives $Z^{\otimes n}|\psi\ra =|\psi\ra$, that is, $|\psi\ra$ is a superposition of even-weight bit strings
with non-negative amplitudes.
\end{proof}

Let $\lambda_j$ be the $j$-th smallest eigenvalue of $H_s$, where $j=1,\ldots,2^n$.
We are interested in the spectral gap $\lambda_2-\lambda_1$ which controls the cost of preparing the ground state of $H_s$
by quantum adiabatic evolution starting from $s=0$ (note that $H_0$ has the unique ground state $|0^n\ra$). 
Since $H_s$ commutes with $Z^{\otimes n}$, eigenvectors of $H_s$ can be chosen such that any eigenvector $\ket{\phi}$ is supported exclusively on the even-weight or the odd-weight subspaces corresponding to $\bra{\phi}Z^{\otimes n}\ket{\phi}=1$ and $\bra{\phi}Z^{\otimes n}\ket{\phi}=-1$ respectively. Let $\lambda_j^e$ and $\lambda_j^o$ be the $j$-th smallest eigenvalue of $H_s$ within the even-weight
and the odd-weight subspace respectively. Here $j=1,\ldots,2^{n-1}$. 
From Lemma~\ref{lemma:lambda1} one gets $\lambda_1=\lambda_1^e<\lambda_1^o$.
Thus the spectral gap of $H_s$ is 
\be
\label{gap_even_odd}
\lambda_2-\lambda_1 = \min{\{\lambda_1^o-\lambda_1^e, \lambda_2^e-\lambda_1^e\}}.
\ee
Consider first the simple case when the interaction graph $G$ is a star graph such that 
\[
H_s = -\sum_{i=2}^n |\phi_s\ra\la \phi_s|_{1,i}.
\]
\begin{lemma}
\label{lemma:star}
Suppose $G$ is the star graph with $n\ge 3$ vertices. Then the EPR-like Hamiltonian $H_s$
has spectral gap
$\lambda_2-\lambda_1=\lambda_1^o-\lambda_1^e=1-s^2$.
\end{lemma}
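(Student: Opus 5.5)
The plan is to diagonalize $H_s$ exactly on the star by exploiting the permutation symmetry of the $m=n-1$ leaves. Label the center qubit $1$. Expanding $|\phi_s\ra\la\phi_s|=|00\ra\la 00|+s(|00\ra\la 11|+|11\ra\la 00|)+s^2|11\ra\la 11|$ and summing over the leaves gives
\[
-H_s=|0\ra\la 0|_1\otimes N_0+s^2|1\ra\la 1|_1\otimes N_1+s\left(|0\ra\la 1|_1\otimes L+|1\ra\la 0|_1\otimes L^\dagger\right).
\]
Here $N_0=\sum_{i\ge 2}|0\ra\la 0|_i$, $N_1=\sum_{i\ge2}|1\ra\la 1|_i$ and $L=\sum_{i\ge 2}|0\ra\la 1|_i$ are collective operators on the leaves. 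These are the total-spin operators of $m$ spin-$1/2$ particles, so the leaf space decomposes into $SU(2)$ irreps of spin $j\le m/2$. Inside each irrep I use the basis $|k\ra$, where $k$ is the number of ones, $m/2-j\le k\le m/2+j$. On these vectors $N_1=k$, $N_0=m-k$, and $L|k\ra=c_k|k-1\ra$ with
\[
c_k^2=\left(k-\tfrac m2+j\right)\left(\tfrac m2+j-k+1\right).
\]

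Hence $-H_s$ is block diagonal. The blocks are $2\times 2$, acting on $\{|0\ra|k-1\ra,|1\ra|k\ra\}$ and equal to $\left[\begin{smallmatrix} m-k+1 & s c_k\\ s c_k & s^2 k\end{smallmatrix}\right]$. In addition there are two $1\times 1$ blocks in each irrep: $|0\ra|k_{\max}\ra$ with value $m/2-j$, and $|1\ra|k_{\min}\ra$ with value $s^2(m/2-j)$. The block labelled $k$ has parity equal to that of $k-1$, so it is even for $k$ odd and odd for $k$ even.

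Since $c_k^2\le k(m-k+1)$, every $2\times2$ block is positive semidefinite. Its largest eigenvalue is therefore at most its trace, $m+1-k(1-s^2)$, and this trace is strictly decreasing in $k$ because $s<1$. In the symmetric irrep $j=m/2$ we have $c_k^2=k(m-k+1)$ exactly. Each block is then the rank-one matrix $vv^T$ with $v=(\sqrt{m-k+1},\,s\sqrt k)$, so the bound is attained.

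This yields the following eigenvalues of $-H_s$:
\begin{itemize}
\item The top eigenvalue of $-H_s$ is $m+s^2$, from $k=1$ in the symmetric irrep, which is even. This agrees with Lemma~\ref{lemma:lambda1}.
\item The top odd eigenvalue is $m-1+2s^2$, from $k=2$ in the symmetric irrep.
\end{itemize}
Every other block has value at most $m-1+2s^2$, by the following cases:
\begin{itemize}
\item An irrep with $j\le m/2-1$ has $k\ge 2$, so all of its $2\times 2$ blocks have value at most $m-1+2s^2$. Its even blocks need $k$ odd, hence $k\ge 3$, giving at most $m-2+3s^2$.
\item The remaining even blocks of the symmetric irrep have $k\ge3$, so they are also at most $m-2+3s^2$.
\item The $1\times1$ blocks have values at most $m/2$, which is at most $m-1+2s^2$ once $m\ge 2$. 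This is where $n\ge3$ enters. For the symmetric irrep these values are $0$.
\end{itemize}

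It follows that $\lambda_1^e=-(m+s^2)$, $\lambda_1^o=-(m-1+2s^2)$, and $\lambda_2^e\ge-(m-1+2s^2)$. By Eq.~(\ref{gap_even_odd}), $\lambda_2-\lambda_1=\lambda_1^o-\lambda_1^e=1-s^2$. I will also note that $\lambda_2^e\ge \lambda_1^o$ holds, so the minimum in Eq.~(\ref{gap_even_odd}) is realized by the odd sector.

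The main obstacle is bookkeeping rather than analysis. I need to get the Clebsch–Gordan-type formula for $c_k$ right and carefully check the parity and range of $k$ in the non-symmetric irreps, in particular that $k_{\min}=m/2-j\ge 1$ there. As a first sanity check I would verify the case $m=1$, where the block gives $\la\phi_s|\phi_s\ra=1+s^2$, and then the case $m=2$ by hand.
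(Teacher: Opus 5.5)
Your proof is correct. It reaches the result by a different route from the paper's, and the difference lies in how the non-symmetric part of the leaf space is handled.

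The paper restricts $-H_s$ to each parity sector and notes that these restrictions are entrywise non-negative. It then uses Perron--Frobenius to choose non-negative top eigenvectors, which can be symmetrized over permutations of the leaves without vanishing. This reduces the problem to $\CC^2\otimes\calS_{n-1}$. There only Dicke states appear, and the $2\times 2$ blocks are exactly your rank-one $j=m/2$ blocks. That route avoids Clebsch--Gordan coefficients and multiplicity bookkeeping.

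However, the paper's route only locates the \emph{top} eigenvalue of each sector, that is, $\lambda_1^e$ and $\lambda_1^o$. The value $-\lambda_2^e=n-3+3s^2$ that the paper then reads off from the symmetric subspace is not justified by that argument. For $n=3$ and $s>0$ it is actually off. Your $j=0$ block $|1\rangle|k_{\min}\rangle$ (the leaf singlet) gives the second even eigenvalue of $-H_s$ as $s^2$, not $3s^2$. The inequality $\lambda_2^e\ge\lambda_1^o$ that the lemma needs still holds.

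Your full $SU(2)$ decomposition controls every eigenvalue in both sectors. It uses the trace bound $m+1-k(1-s^2)$ for all $2\times2$ blocks and the bound $m/2$ for the $1\times1$ blocks. It therefore rigorously supplies the comparison $\lambda_2^e\ge\lambda_1^o$ on which Eq.~(\ref{gap_even_odd}) depends, at the cost of the extra bookkeeping for $c_k$ and the ranges of $k$.

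One small point should be stated explicitly. You use implicitly that the second eigenvalue of each symmetric rank-one block, in particular the $k=1$ block, is $0$. This eigenvalue is in the even sector, so it must be listed when bounding $\lambda_2^e$.
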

\begin{proof}
Let us first compute $\lambda_1^e$ and $\lambda_1^o$.
Let $H^e$ and $H^o$ be the restrictions of $-H_s$ onto the even-weight and the odd-weight subspaces.
Since $H^e$ and $H^o$ are entry-wise non-negative Hamiltonians,
Perron Frobenius theorem implies that maximum eigenvectors of $H^e$ and $H^o$ can be chosen as real vectors
with non-negative amplitudes in the standard basis. Furthermore, since $H^e$ and $H^0$ are invariant under any permutation of qubits $2,\ldots,n$,
one can symmetrize the maximum eigenvectors of $H^e$ and $H^o$ over such permutations (the symmetrization does not annihilate the state
since all amplitudes are non-negative). 
Thus it suffices to compute maximum eigenvalues of  $H^e$ and $H^o$ restricted to a subspace 
\be
\calH = \CC^2 \otimes \calS_{n-1},
\ee
where $\calS_{n-1} \subseteq (\CC^2)^{\otimes (n-1)}$ is the symmetric subspace of $n-1$ qubits. 
By definition, $\calS_{n-1}$ is spanned by Dicke states 
\be
|D_k\ra = \frac1{\sqrt{{n-1 \choose k}}} \sum_{\substack{x\in \{0,1\}^{n-1}\\ |x|=k}}\; |x\ra.
\ee
After some tedious algebra one gets
\be 
-H_s |0\ra \otimes |D_k\ra =  (n-1-k) |0\ra \otimes |D_k\ra + \sqrt{s^2 (k+1)(n-k-1)} |1\ra\otimes |D_{k+1}\ra
\ee
and
\be 
-H_s |1\ra \otimes |D_{k+1}\ra  =s^2 (k+1)  |1\ra \otimes |D_{k+1}\ra +  \sqrt{s^2 (k+1)(n-k-1)} |0\ra\otimes |D_{k}\ra.
\ee
Hence $-H_s$ preserves the two-dimensional subspace spanned by $|0\ra \otimes |D_k\ra$ and $|1\ra \otimes |D_{k+1}\ra$.
The corresponding $2\times2$ block of $-H_s$ is a rank-1 matrix
\be
\left[ \ba{cc}
\alpha^2 & \alpha \beta \\
\alpha \beta & \beta^2 \\
\ea\right]
\ee
where $\alpha=\sqrt{n-1-k}$ and $\beta = \sqrt{s^2  (k+1)}$. The maximum eigenvalue of this $2\times 2$ matrix coincides with its trace, that is,
$\alpha^2+\beta^2 = n-1-k + s^2(k+1)$. Maximizing this expression 
over even and odd values of $k\ge 0$ and noting that $s\in [0,1)$ gives
\[
-\lambda_1^e =n-1+s^2, \quad -\lambda_2^e = n-3+3s^2, \quad -\lambda_1^o=n-2+2s^2.
\]
Hence 
\[
\lambda_1^o-\lambda_1^e=1-s^2 \quad \mbox{and} \quad \lambda_2^e - \lambda_1^e = 2(1-s^2).
\]
The statement of the lemma now follows from  Eq.~(\ref{gap_even_odd}).
\end{proof}
Numerical experiments suggest the following two conjectures.
\begin{conj}
\label{conj:lambda2}
Suppose $n\ge 2$, $s\in [0,1)$,
and the interaction graph is connected.
Then the  second smallest eigenvalue of $H_s$ coincides with the smallest  eigenvalue of $H_s$ restricted to the odd-weight subspace.
\end{conj}
\begin{conj}
\label{conj:sgap}
Suppose $n\ge 2$, $s\in [0,1)$,
and the interaction graph is connected.
Then  the spectral gap between the smallest and the second smallest eigenvalues of $H_s$
is at least $1-s^2$.
This lower bound is attained on star graphs with $n\ge 3$ vertices. 
\end{conj}

The following summarizes our numerical evidence for Conjecture \ref{conj:sgap}.
\begin{numerics}[\textbf{Spectral gap of EPR-like Hamiltonians}]\label{num:sgap}
    We numerically investigated Conjecture~\ref{conj:sgap} and found by exact diagonalization that it holds for $100$ uniformly spaced values of $s\in(0,1)$ for all trees with $n\leq 18$ vertices and all connected graphs with $n\leq 10$ vertices.
     We also found that it holds for
    path graphs with $n\le 100$ vertices and $s\in \{0.9, 0.95, 0.97, 0.99\}$.
     The spectral gap for path graphs was computed by the exact diagonalization if $n\le 24$ and using the Matrix Product State library $\mathsf{mpnum}$~\cite{mpnum} for larger values of $n$.
\end{numerics}

\begin{figure}[H]
    \centering
    \makebox[1\paperwidth][l]
{
\hspace{-1.9cm}\includegraphics[width=0.53\paperwidth]{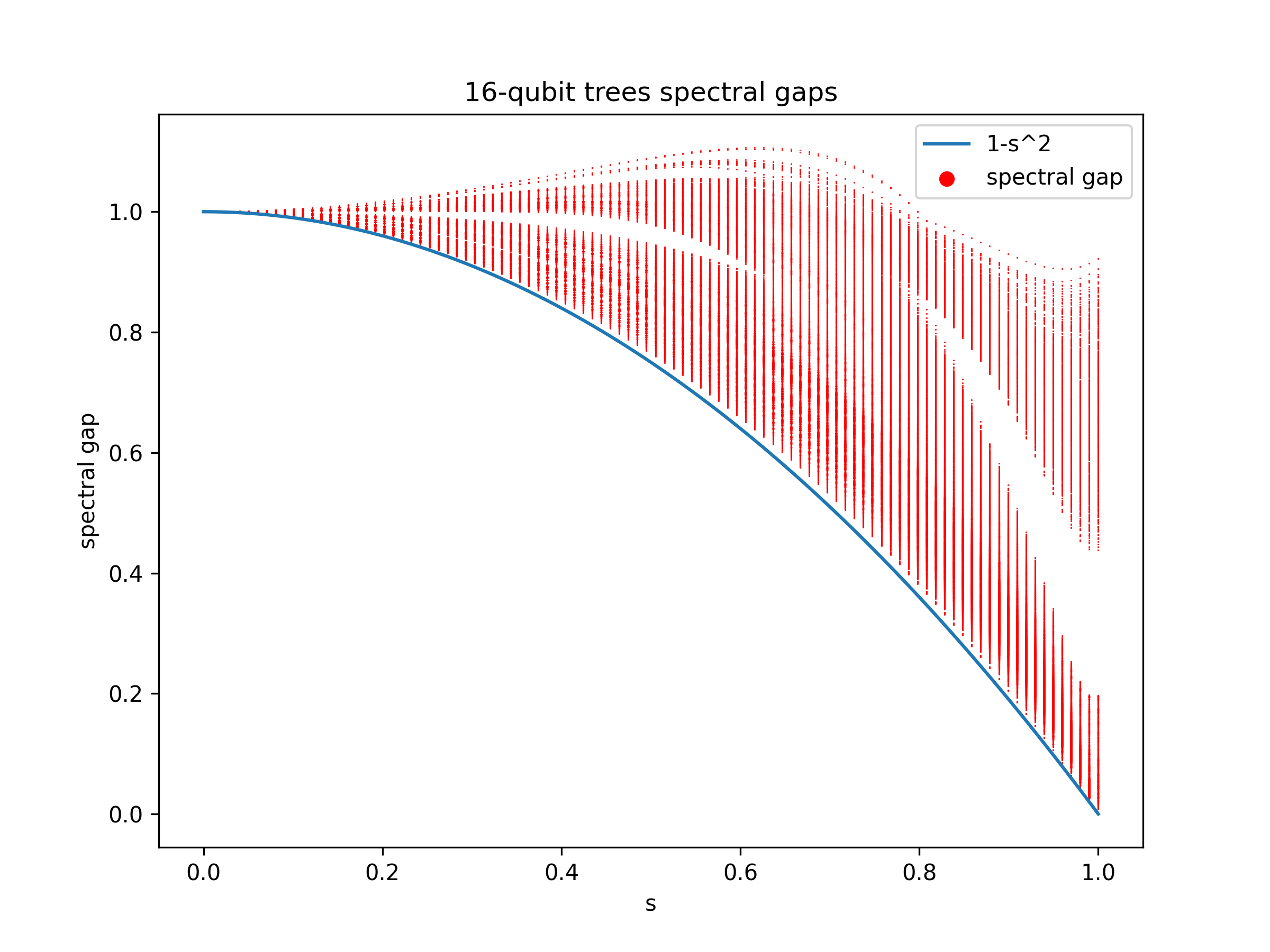}
\hspace{-1cm}\includegraphics[width=0.53\paperwidth]{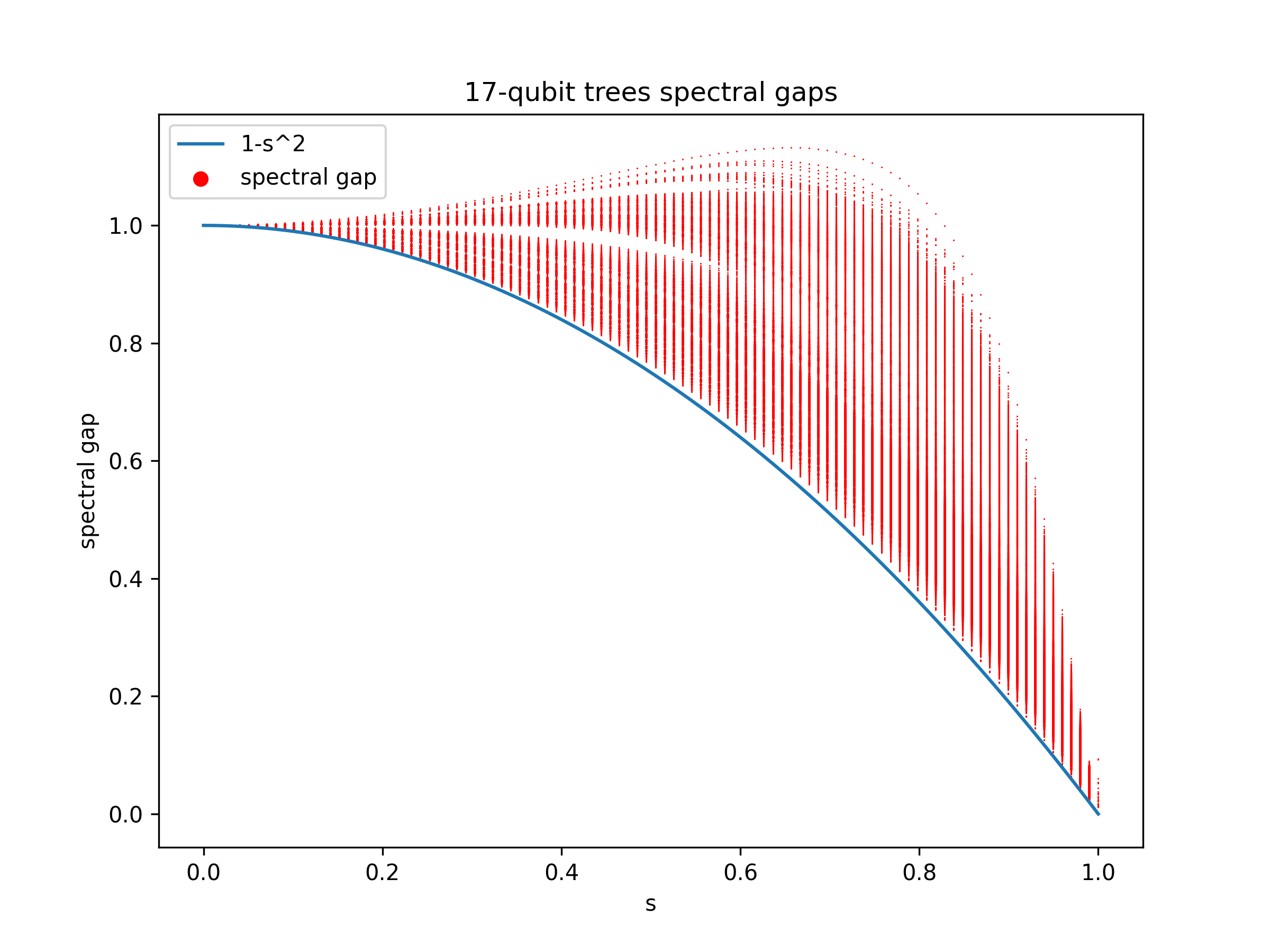}}
\caption{Plots from Numerical Study~\ref{num:sgap}. For each tree on $n=16,17$ vertices, the spectral gap is plotted as a red point for $100$ uniformly spaced $s\in(0,1)$.}
    \label{fig:tree_numerics}
\end{figure}

\begin{figure}[H]
    \centering
    \includegraphics[width=0.49\linewidth]{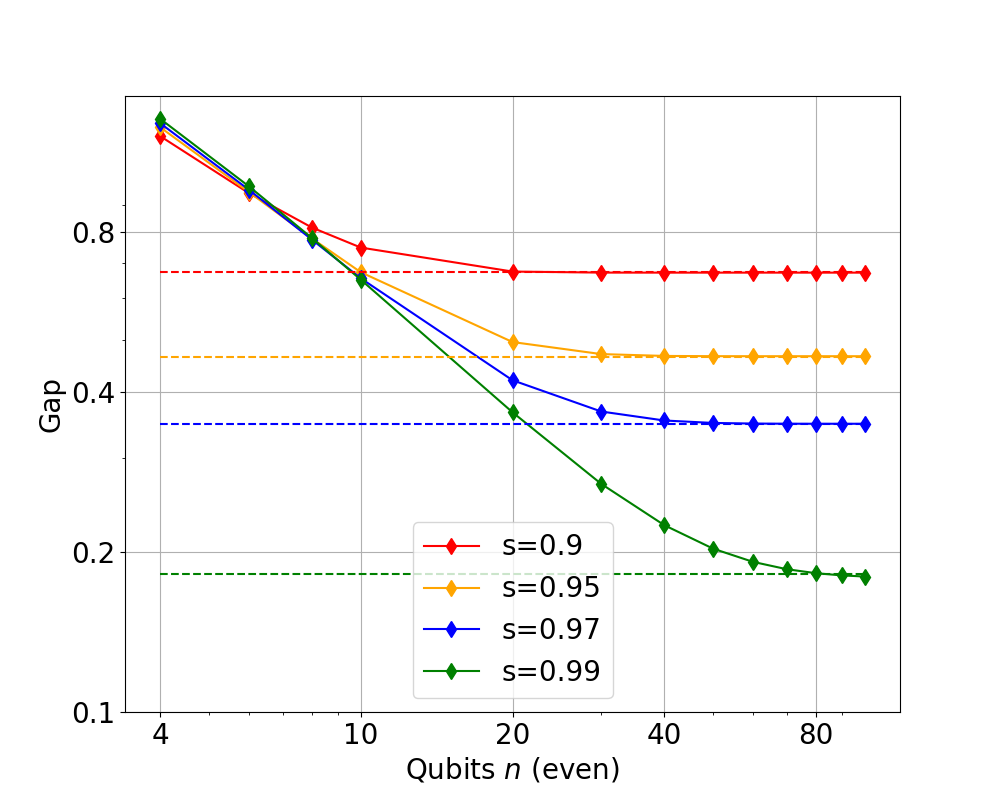}
    \includegraphics[width=0.49\linewidth]{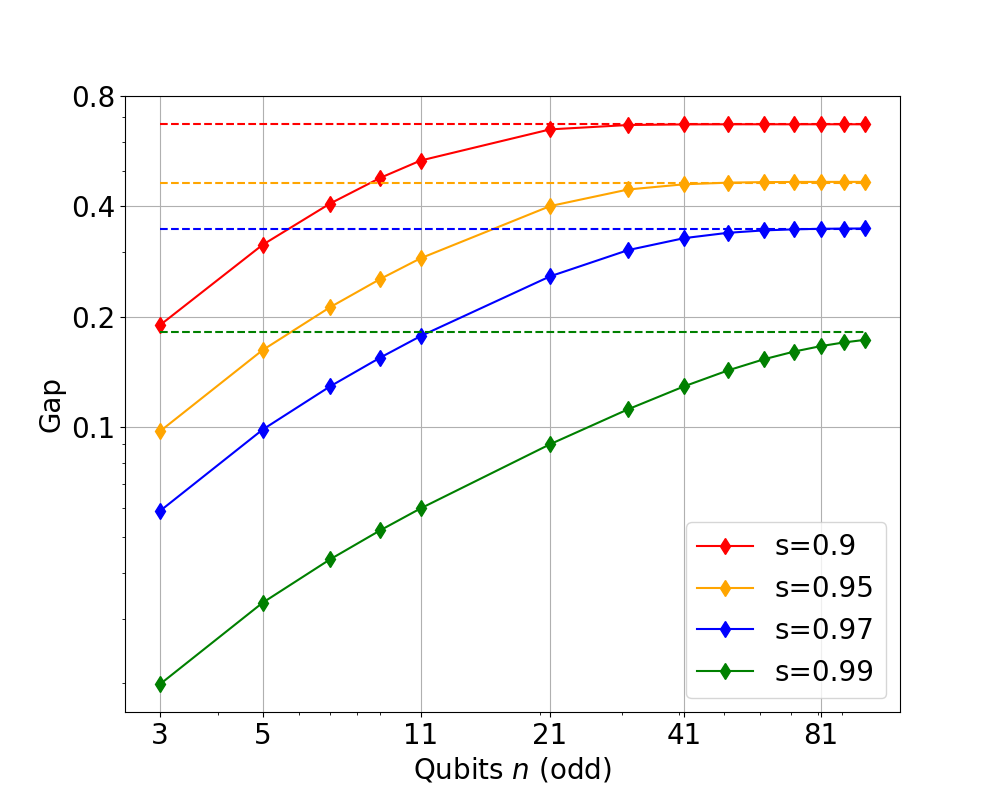}
\caption{Plots from Numerical Study~\ref{num:sgap}.
Spectral gap of  1D chains with $3\le n\le 100$ qubits and open boundary conditions.  The lower bound $1-s^2$ is achieved at $n=3$ (when the chain is the star graph). Horizontal dashed lines on both plots show a fit $g(s) = 3.03 s (1-s)^{0.609}$ which corresponds to the large $n$ limit
(parity of $n$ does not matter in this limit).}
    \label{fig:epr_gap_mps}
\end{figure}

We note that Conjecture~\ref{conj:sgap} is easy to falsify
since it gives a precise bound on the spectral gap for any finite $n$, as opposed to an asymptotic scaling like $poly(1/n)$. 
At the same time, the conjecture is extremely bold. If true,
it would imply that the Bipartite Quantum Max-Cut  problem is contained in BQP
resolving a long standing open question. Indeed, the Quantum Max-Cut problem boils down
to approximating the ground energy of the Heisenberg anti-ferromagnetic Hamiltonian
$H_{AF} =\sum_{(i,j)\in E} X_i X_j + Y_i Y_j + Z_i Z_j$ on a given graph $G=(V,E)$
within an additive error $\epsilon=poly(1/n)$. 
As commented above, $H_{AF}$ coincides with the EPR Hamiltonian $H_1$ (up to a local basis change on some qubits)
whenever the interaction graph is bipartite. Ground energies of $H_1$ and $H_s$ are related by Weyl's inequality
\[
|\lambda_1(H_1)-\lambda_1(H_s)|\le \| H_1 - H_s\| \le \sum_{(i,j)\in E} \|\,  |\phi_s\ra\la \phi_s| - |\phi_1\ra\la \phi_1|\, \| \le 2|E|(1-s).
\]
Suppose $s=1-\epsilon/(4|E|)$ so that $|\lambda_1(H_1)-\lambda_1(H_s)|\le \epsilon/2$.
Let $\tilde{\lambda}$ be an $(\epsilon/2)$-approximation of 
the ground energy of $H_s$ found by the quantum adiabatic evolution algorithm.
Then $\tilde{\lambda}$ is the desired $\epsilon$-approximation of $\lambda_1(H_1)=\lambda_1(H_{AF})$.
Choose the adiabatic path as $\{ H_t\}_{0\le t\le s}$. Recall that $H_0$ has the trivial ground state 
$|0^n\ra$. Then the  quantum runtime scales
as $T=\epsilon^{-1} \cdot poly(g_{min}^{-1},n)$, where $g_{min}$ is the minimum spectral gap along the chosen adiabatic path, see~\cite{benseny2021adiabatic} and references therein. 
Conjecture~\ref{conj:sgap} gives $g_{min}=1-s^2 \sim \epsilon/|E|$. Hence  $T\le poly(n,\epsilon^{-1})$
scales polynomially with all parameters.

\subsection{Phase-shifted EPR-like Hamiltonians}

One limitation of EPR-like Hamiltonians $H_s$ defined in Section~\ref{subs:EPR} is that these Hamiltonians
are stoquastic (sign-problem-free). One might expect that the ground energy of such Hamiltonians can be estimated for 
a moderately large number of qubits (say, $n\le 100$)
on a classical computer using  Quantum Monte Carlo algorithms. In this section we briefly discuss a non-stoquastic version
of EPR-like Hamiltonians where local phase shifts are applied on every edge of the graph.
Such Hamiltonians may be more suitable for demonstrating a quantum advantage for the ground energy estimation problem. 
Define a two-qubit operator
\[
h(z) = |00\ra\la 00| + z |11\ra\la 00| + z^* |00\ra\la 11| + |z|^2 |11\ra\la 11|,
\]
where $z\in \CC$.
A phase-shifted EPR-like   Hamiltonian is defined as
\be
H_s(\theta) = -\sum_{(p,q)\in E} h_{p,q}(se^{i\theta_{p,q}}),
\ee
where $s\in [0,1]$ and 
$\theta_{p,q}\in [0,2\pi)$ are arbitrary edge-dependent angles (phase shifts). Setting $\theta_{p,q}=0$ 
gives the original EPR-like Hamiltonian $H_s$. If the interaction graph $G$ is a tree, phase shifts do not matter in the sense that 
$H_s(\theta)= U H_s U^\dag$, where $U$ is a tensor product of $n$ single-qubit $Z$-rotations\footnote{This can be proved using
the identity $h(se^{i\varphi})=(R_z\otimes I)h(s) (R_z^\dag \otimes I)$ with $R_z=e^{-i(\varphi/2)Z}$ and removing phase shifts 
by a sequence of $R_z$ rotations starting from root of the tree towards the leaves.}.
In particular, the eigenvalue spectrum 
of $H_s(\theta)$
and the Lee-Yang radius of the Gibbs state $\rho \sim e^{-\beta H_s(\theta)}$
do not depend on $\theta$. However, if the interaction graph contains cycles,
phase shifts cannot be generally removed by a global similarity transformation (this is analogous to magnetic fluxes that cannot be removed by a gauge transformation). 
In this case the eigenvalue spectrum and the Lee-Yang radius
of the Gibbs state  may depend on $\theta$.

Since $H_s(\theta)$ is  obtained from
a Hamiltonian $H_s\in \mathsf{SF}(n)$ by conjugating each local term in $H_s$ with single-qubit $Z$-rotations,
one concludes that 
\be
\label{phase_shifted_LY1}
e^{-\beta H_s(\theta)} \in LY(1)
\ee
for any angles  $\theta$, see  the discussion in Section~\ref{subs:SF}.
Taking the limit $\beta \to \infty$ one concludes that the ground state  projector for $H_s(\theta)$
is a tensor in $LY(1)$ and thus at least one ground state of $H_s(\theta)$ has a non-zero overlap with $|0^n\ra$.
As a consequence, one can choose a ground state of $H_s(\theta)$ as 
\[
|\psi\ra = \lim_{\beta \to \infty} \frac{e^{-\beta H_s(\theta)} |0^n\ra}{\| e^{-\beta H_s(\theta)} |0^n\ra\|}.
\]
Furthermore,  the proof of Lemma~\ref{lemma:radius_n2} works for any phase shift  implying that 
$e^{-\beta H_s(\theta)} \in LY(r)$ for $n=2$, where the radius $r$ is given by Eq.~(\ref{radius_n2}).
We leave as an open question whether Conjectures~\ref{conj:lygap},\ref{conj:lambda2},\ref{conj:sgap} hold
for phase-shifted EPR-like  Hamiltonians.  If this is the case, Theorem~\ref{thm:degeneracy} implies that the ground state of $H_s(\theta)$ is non-degenerate
and the adiabatic path connecting $H_0$ with $H_s(\theta)$ has the minimum spectral gap $1-s^2$.
Accordingly, the problem of estimating the ground energy of $H_1(\theta)$ is contained in BQP, see the discussion at the end of Section~\ref{subs:EPRgap}

\begin{numerics}[\textbf{Spectral gap of phase-shifted EPR-like Hamiltonians}]\label{num:sgap_theta}
For each connected graph with $n\leq 10$ vertices we selected phase shifts $\theta_{p,q}$ uniformly and independently at random to construct a Hamiltonian $H_s(\theta)$. We computed the spectral gap of $H_s(\theta)$ using exact diagonalization and found that it is lower bounded
by $1-s^2$,  see Fig.~\ref{fig:phase_shifted}.
\end{numerics}

\begin{figure}[h]
    \centering
    \makebox[\paperwidth][l]{\hspace{-1.9cm}\includegraphics[width=0.53\paperwidth]{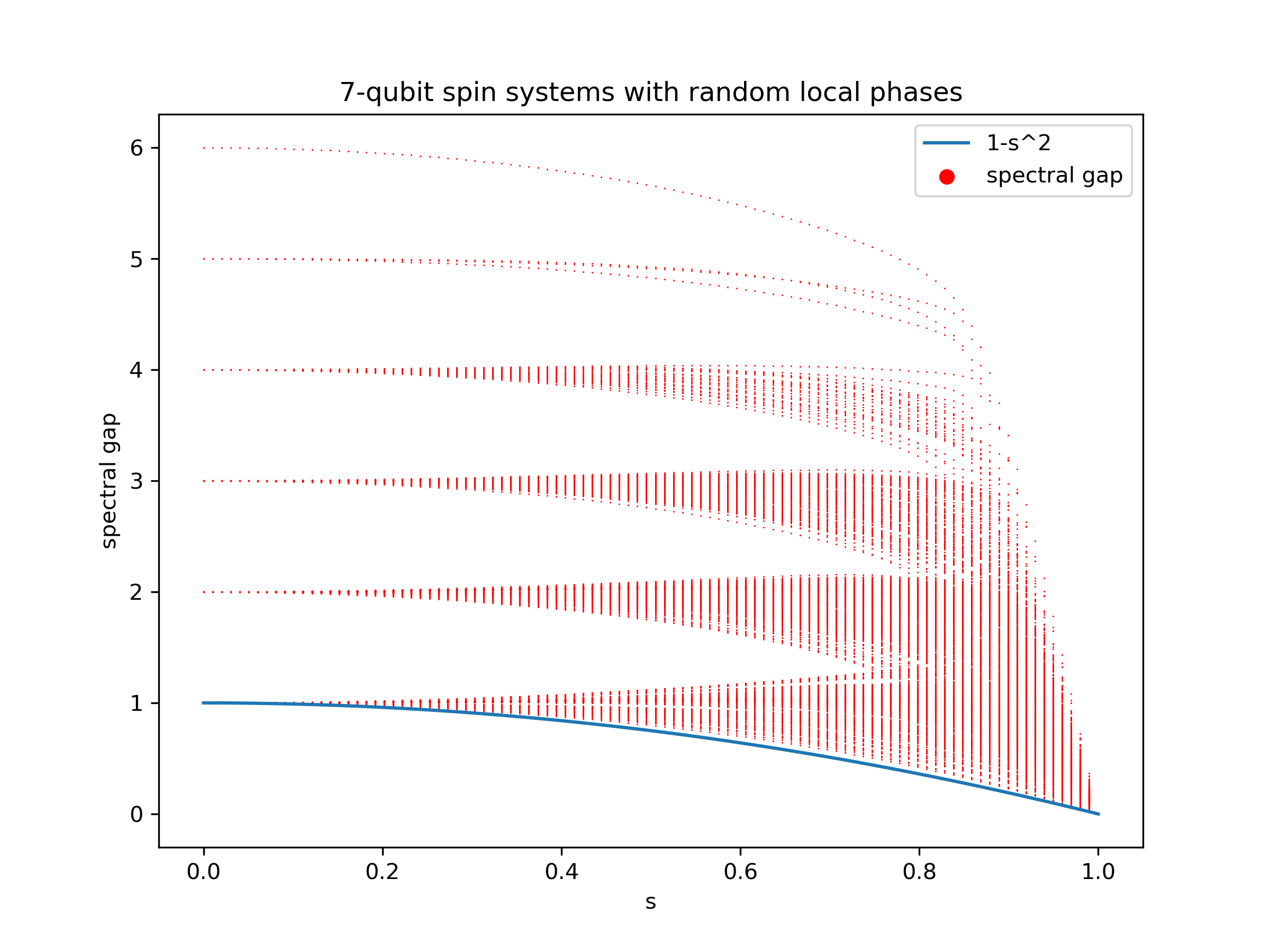}
    \hspace{-1cm}\includegraphics[width=0.53\paperwidth]{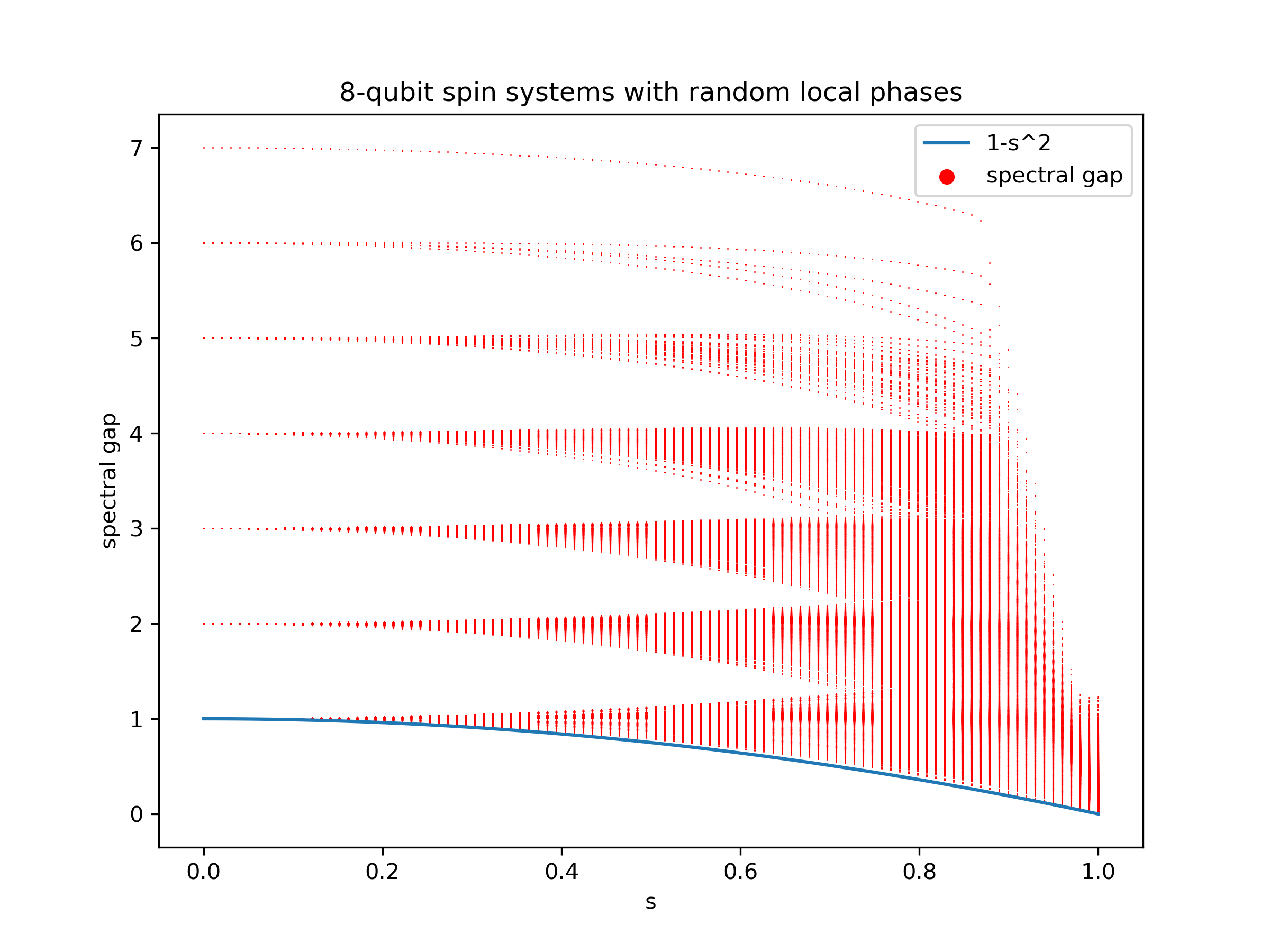}}
\caption{Plots from Numerical Study~\ref{num:sgap_theta}. For each connected graph on $7,8$ vertices, random phase shifts $\theta_{p,q}$ are sampled and the spectral gap of $H_s(\theta)$ is plotted as a red dot for $100$ uniformly spaced $s\in(0,1)$.}
    \label{fig:phase_shifted}
\end{figure}

\section{Open questions}
\label{sec:open}

\noindent\textbf{Beyond Suzuki-Fisher Hamiltonians} All examples of many-body systems with the Lee-Yang property considered above fall in the class of
Suzuki-Fisher Hamiltonians, possibly augmented with local phase shifts. Are there other examples of local Hamiltonians $H$
such that the Gibbs state $\rho \sim e^{-\beta H}$ has Lee-Yang radius $r\ge 1$ 
for any inverse temperature  $\beta \ge 0$ ? Of particular interest are $k$-local Hamiltonians with $k\ge 3$.
Proving the containment $e^{-\beta H}\in LY(1)$ using Trotter approximation of the Gibbs state for $k\ge 3$
appears to require a stronger version of Theorem~\ref{thm:sf71}. 
For example, consider a single Trotter gate $g=I+\epsilon h$ for some
$k$-qubit interaction $h$ and a Trotter step $\epsilon>0$.
The last condition of  Theorem~\ref{thm:sf71} demands
\be
\label{SFcondition_restated}
|\langle 0^{k}|g|0^k\rangle|\ge \frac{1}{4}\sum_{x,y\in \{0,1\}^{k}} |\langle x|g|y\rangle|.
\ee
However, this is impossible  if $k\ge 3$. Indeed,
the righthand side of Eq.~(\ref{SFcondition_restated}) is at least  $2^{k-2} -O(\epsilon)$ which is strictly larger than the lefthand side
$|\la 0^{2k} |\psi\ra| \le 1+O(\epsilon)$ if $k\ge 3$ and $\epsilon$ is small enough. 
Meanwhile, Trotter approximation becomes exact only in the limit $\epsilon \to 0$.

\vspace{0.5cm}

\noindent\textbf{Breaking the $r=1$ barrier.}  To unlock the full power of Lee-Yang tensors, such as the analogue of the Perron-Frobenius theorem (Theorem~\ref{thm:degeneracy})
or quasi-polynomial size  state preparation circuits of Section~\ref{sec:state-prep},
one needs sharper lower bounds on the Lee-Yang radius breaking the $r=1$ barrier. 
Let $U=U_\ell \cdots U_2 U_1$ be a Trotter circuit approximating the Gibbs state $\rho \sim e^{-\beta H}$
of some local Hamiltonian $H$. Suppose  we have already proved that each individual gate $U_j$ is a Lee-Yang tensor
with a certain radius $r=r(\ell)>1$, that is, $U_j\in LY(r)$ for all $j$. What can be said about the Lee-Yang radius of the full circuit $U$ ?
The naive application of the tensor contraction lemma (Lemma~\ref{lemma:contraction}) would give $U\in LY(r)$, assuming 
each qubit participates in at least one gate. However, this completely ignores  structure present in the Trotter circuit.
Furthermore, this is not strong enough to break the $r=1$ barrier since
Trotter approximation is exact only in the limit $\ell\to \infty$ and $\lim_{\ell \to \infty} r(\ell)=1$
due to  $\|U_j-I\|=O(1/\ell)$. One might be able to exploit the structure of Trotter circuits by 
contracting some properly chosen sub-circuits of $U$ exactly and applying Lemma~\ref{lemma:contraction} only at the end
to merge the sub-circuits. 
One would need to prove that each sub-circuit is described by a tensor with the Lee-Yang radius
$r\ge 1$ on the internal wires of $U$ and $r>1$ on the external wires of $U$.
Various methods of cutting a large quantum circuit into sub-circuits were investigated in~\cite{peng2019simulating,tang2021cutqc}.

\vspace{0.5cm}
\noindent\textbf{Lee-Yang radius vs spectral gap.}  Theorem~\ref{thm:degeneracy} provides   a link between Lee-Yang radius and spectral properties of 
quantum many-body systems. 
Namely, if the Gibbs state $\rho \sim e^{-\beta H}$ associated with some Hamiltonian $H$
has Lee-Yang radius $r>1$ then $H$ has a positive gap $\delta>0$
between the smallest and the second smallest eigenvalues. Is there a lower bound on the spectral gap $\delta$
in terms of the Lee-Yang gap $r-1$ ? If this is the case, quantum many-body systems breaking the $r=1$ barrier 
might be good targets for quantum simulation algorithms based on the adiabatic evolution approach, as discussed in Section~\ref{subs:EPRgap}.
One may also ask whether the Lee-Yang gap has any implications for rapid mixing of quantum walks used for the Gibbs state preparation,
see e.g.~\cite{chen2023efficient}. A classical analogue of this question has been  recently resolved in~\cite{chen2024spectral}.
This work used the spectral independence method to prove that the existence of a sufficiently large zero-free region for the partition function 
of certain graph-based spin models implies rapid mixing of the Metropolis-Hastings random walk.

\vspace{0.5cm}
\noindent\textbf{Hamiltonian complexity.} 

What is the computational complexity of approximating the ground energy and the partition function 
of  Suzuki-Fisher Hamiltonians?
For concreteness, let us discuss the
complexity landscape for the 
Heisenberg XXZ model.
Given an interaction graph $G=(V,E)$ with $|V|=n$ and constants $d,f\in[-1,1]$, consider the Hamiltonian 
\begin{equation}
\label{app_XXZ_eq1}
    H=-\sum_{(i,j)\in E}w_{i,j}h_{i,j}
\end{equation}
with nonnegative weights $0\leq w_{i,j}\le poly(n)$
and Heisenberg XXZ local terms
\begin{equation}
    h=\frac{1}{2}(f(X\otimes X+Y\otimes Y)+d(I -Z\otimes Z))
    = \left[\ba{cccc}
    0 & 0 & 0 & 0 \\
    0 & d & f & 0 \\
    0 & f & d & 0 \\
    0 & 0 & 0 & 0 \\
    \ea
    \right]. 
\end{equation}
Note that when $d=f=1/2,$ $h$ is the projector onto 
$(\ket{01}+\ket{10})/\sqrt{2}$ and $H$ is unitarily equivalent to the EPR Hamiltonian. 
We consider the complexity of estimating the ground energy  of $H$ to $\eps$-additive error
and estimating the partition function $\mathrm{Tr}(e^{-\beta H})$ to $\epsilon$-multiplicative error,
where $\epsilon \ge poly(1/n)$ and $\beta\le poly(n)$. 
It is well known that the latter problem  is at least as hard as the former (since the free energy at $\beta = O(n\epsilon^{-1})$
provides an $\epsilon$-additive approximation of the ground energy).

Suppose first that $d\leq 0$ and $|f|\leq |d|$. Then $H$ is a Suzuki Fisher Hamiltonian 
since $Z_i Z_j$ terms in $H$ are ferromagnetic and dominating, see Section~\ref{subs:SF}.
Since $H$ is particle number preserving, its ground state can be chosen as $|0^n\ra$,
see Section~\ref{subs:XXZ}. Accordingly, one can compute the ground energy of $H$ classically in time $poly(n)$.
The partition function in the presence of a uniform magnetic field $\tr{(e^{-\beta (H-\mu \sum_{i}Z_i)})}$ can be estimated 
for any constant $\mu>0$
in quasipolynomial time
 using Barvinok's polynomial interpolation method~\cite{barvinok2016book},
as shown by Harrow, Mehraban, and Soleimanifar~\cite{harrow2020classical}. 

Suppose now that $f\ge |d|$. Conjugating $H$ by the Clifford gate $e^{i (\pi/4) X}$
on each qubit gives a Suziki Fisher Hamiltonian 
$H'=-\sum_{(i,j)\in E}w_{i,j}h'_{i,j}$
with the interactions 
\[
h'=\frac{1}{2}(f(X\otimes X+Z\otimes Z)+d(I -Y\otimes Y)).
\]
Clearly, the ground energy and the partition function problems for $H$ and $H'$ are equivalent. 
The Hamiltonian $H'$ is not particle number preserving unless $f=-d$. 
As commented above, $H'$ coincides with the EPR Hamiltonian if $f=d>0$.
Assuming Conjecture~\ref{conj:sgap} of Section~\ref{subs:EPRgap} is true, the ground energy problem for $H'$ (and thus for $H$)  is contained
in the class BQP (for uniform edge weights $w_{i,j}\equiv 1$) since one can use the quantum adiabatic evolution algorithm
to prepare a low-energy state for $H'$, see the discussion at the end of Section~\ref{subs:EPRgap}.

Piddock and Montanaro~\cite{piddock2015complexity} classify the complexity of  ground state problems for a wide variety of $2$-local terms using perturbative gadgets~\cite{kempe2006perturbativegadgets,oliveira20082dlattice}. 
For the Heisenberg XXZ model, they show that if $f<\min\{0,d\},$ then estimating the ground energy  of $H$ to within inverse polynomial additive error is QMA-hard. This implies that approximating the partition function to a given relative error is also QMA-hard. Otherwise (for any values of $d,f$), the Hamiltonian is sign-problem free and the corresponding ground energy problem is contained in StoqMA \cite{bravyi2006merlin}. In this case (a decision version of) the problem of approximating its partition function is in the complexity class AM~\cite{bravyi2006complexity}.

Bravyi and Gosset~\cite{bravyi2016polynomial} show that ferromagnetic XY models admit efficient classical simulation by quantum Monte Carlo methods. This technique yields an efficient algorithm for $d=0$ and $f\geq 0.$ These results are summarized in Figure~\ref{fig:xxzmodels}.

\begin{figure}[h!]\label{fig:xxzmodels}
    \centering
    \includegraphics[width=0.8\linewidth]{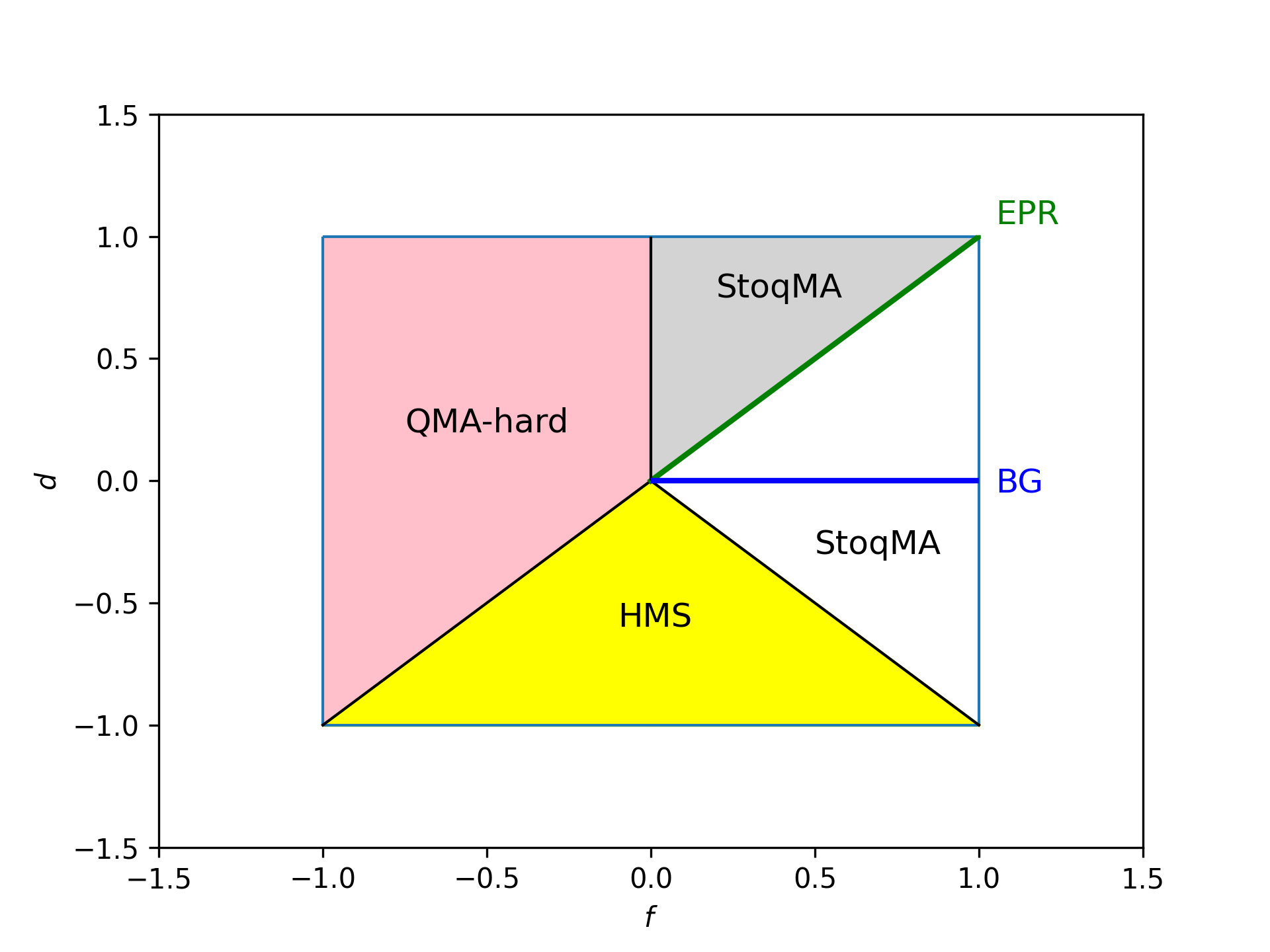}
    \caption{Complexity landscape for XXZ Heisenberg Hamiltonians.
    Ground energy problems for models in the pink region are QMA-hard, while those in the grey and white regions are contained in StoqMA~\cite{piddock2015complexity}. 
Models in the yellow region have the trivial ground state $|0^n\ra$ while the partition function admits 
a quasipolynomial time approximation algorithm in the presence of a uniform magnetic field~\cite{harrow2020classical}. 
The green line represents EPR Hamiltonians.
Assuming Conjecture~\ref{conj:sgap} of Section~\ref{subs:EPRgap} is true, the ground energy problem on the green line is in BQP in the uniformly weighted case ($w_{ij}=1$ for all $i,j$ in Eq.~\eqref{app_XXZ_eq1}). The models on the blue line admit efficient randomized approximation~\cite{bravyi2016polynomial}.
    When $f>0$ the partition functions of these models can be approximated by the partition function of the six-vertex model. On the blue line one obtains a six vertex model in a parameter regime where it admits efficient approximation ~\cite{cai2019approximability}. In contrast, the grey region maps to a  six-vertex model in a parameter regime where it is likely intractable to approximate~\cite{cai2019approximability}. The white region (including the boundaries shown in black and green) maps to a six-vertex model in a parameter regime where its complexity is a mystery.}
\end{figure}

When $f> 0$, the partition function $\tr{(e^{-\beta H})}$ can be approximated by the partition function of the classical $6$-vertex model~\cite{pauling1935ice} on a related edge-ordered graph.  
Assume for simplicity that the edge weights are uniform, $w_{i,j}\equiv 1$.
The Trotter-Suzuki approximation yields $J=\text{poly}(n,\beta,\eps^{-1})$ and $\delta>0$ such that the $Z_J=\tr(G_J\cdots G_2G_1)$, where each $G_i=e^{\delta h}$, is an $\eps$-relative error estimate for the partition function $\tr(e^{-\beta H})$. Introducing a parameter $t=e^{\delta d}\sinh(\delta f)$ one gets

\begin{equation}
e^{\delta h}=\begin{bmatrix}
1 & 0 & 0 & 0\\
0 & t+e^{\delta(d-f)} & t & 0\\
0 & t & t+e^{\delta(d-f)} & 0\\
0 & 0 & 0 & 1
\end{bmatrix}.
\end{equation}
The Trotter circuit $G_J\cdots G_2G_1$ is comprised entirely of gates which act non-trivially on exactly $2$ qubits, hence the graph $\Gamma$ representing the tensor network description of $\tr(G_J\cdots G_2G_1)$ is $4$-regular. The $6$-vertex model with parameters $a,b,c\geq 0$ is defined on $4$-regular graphs where each vertex has an ordering of its incident edges. The states of the six-vertex model are the Eulerian orientations of the underlying graph denoted $\calE$. Given an Eulerian orientation $\tau$ each vertex can be assigned a type based on the direction of the incident edges as shown in Figure~\ref{fig:six-vertex}. Then for each $\tau\in\calE$ assign a weight $w(\tau)=a^{n_1}b^{n_2}c^{n_3}$ where $n_i$ denotes the number of vertices of type $i$ in $\tau$.
\begin{figure}[h] \label{fig:six-vertex}
    \centering
    \includegraphics[width=0.65\linewidth]{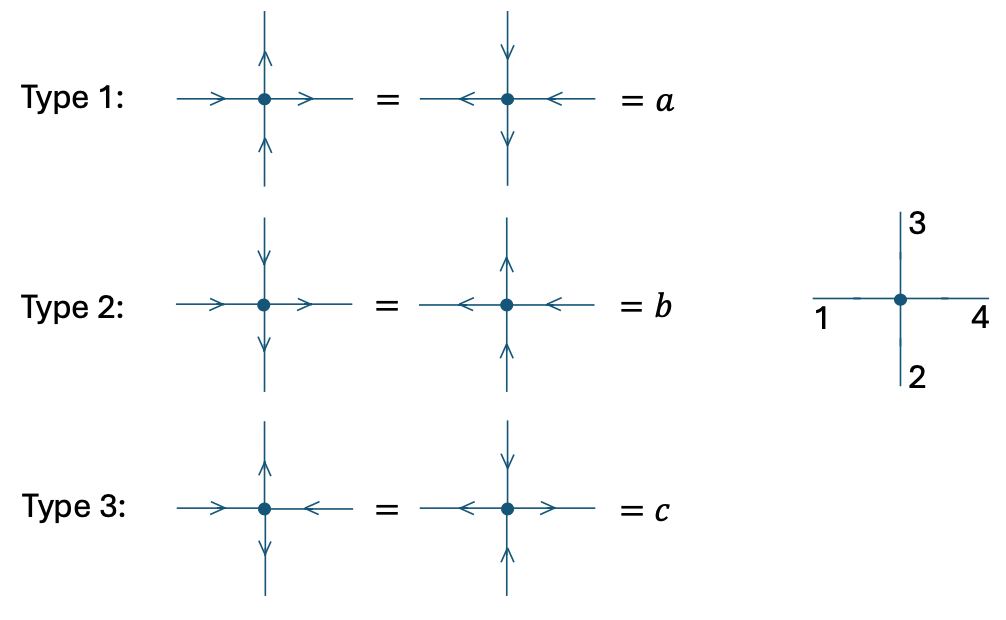}
    \caption{Each vertex is assigned a type $1,2,$ or $3$ depending the orientation of its incident edges and a fixed ordering (shown on the right).}
\end{figure}
Then the partition function is
\begin{equation}
    Z(\Gamma,a,b,c)=\sum_{\tau\in \calE}a^{n_1}b^{n_2}c^{n_3}.
\end{equation}
A bit string $x\in\bits{|E(\Gamma)|}$ defines an Eulerian orientation of $\Gamma$ such that an edge $e\in E(\Gamma)$ is oriented forward in time if $x_e=0$ and backward in time if $x_e=1.$ Then there is a one-to-one correspondence between the three types of nonzero matrix elements in $e^{\delta h}$ and the three vertex types
of the six-vertex model. Hence we have agreement of the partition functions
\begin{align}
    Z_J(G_J\ldots G_2G_1)&=\sum_{x\in\bits{|E(\Gamma)|}}\prod_{u\in V(\Gamma)}\bra{x_{e_1(u)}x_{e_2(u)}}G_u\ket{x_{e_3(u)x_{e_4(u)}}}\\
    &=\sum_{\tau\in\calE}1^{n_1}t^{n_2}(t+e^{\delta(d-f)})^{n_3}=Z(\Gamma,1,t,t+e^{\delta(d-f)})
\end{align}
where $e_i(u)$ is edge $i$ incident to $u$ as determined by the ordering on $\Gamma.$
Cai, Lu, and Liu~\cite{cai2019approximability} showed that when $a^2\leq b^2+c^2$, $b^2\leq a^2+c^2$, and $c^2\leq a^2+b^2$, the partition function 
$Z(\Gamma,a,b,c)$
can be estimated efficiently using the Markov chain Monte Carlo method for any $4$-regular graph $\Gamma$.
They also showed that when $a>b+c$ or $b>a+c$ or $c>a+b$ the six-vertex model cannot be approximated efficiently unless $\mathrm{RP}=\mathrm{NP}$, which is believed to be extremely unlikely.

This means that when $d=0$ and $f\geq0,$ the corresponding XXZ Heisenberg models can be simulated classically. In this case, Cai et. al recover the classical simulation of this model demonstrated by Bravyi and Gosset~\cite{bravyi2016polynomial}. On the other hand, when $d>f>0$, we are in a parameter regime where $c>a+b$ and the six-vertex model is likely intractable to approximate.

\section{Acknowledgments}
  Research at Perimeter Institute is supported by the Government of Canada through the Department of Innovation, Science, and Economic Development, and by the Province of Ontario through the Ministry of Colleges and Universities. DG is a fellow of the Canadian Institute for Advanced Research, in the quantum information science program. DG and YL acknowledge support from IBM Research. DG, YL, and BW acknowledge the support of the Natural Sciences and Engineering Research Council of Canada.

\appendix 

\section{Proof of Lemma~\ref{lemma:contraction}}
\label{app:A}

Assume wlog that $\phi$ is obtained from $\psi$ by contracting the first two indices  (otherwise, permute the indices). 
 Write
 $z=(z_1,z_2,z')\in \CC^n$, where $z'\in \CC^{n-2}$.  The generating polynomials of $\psi$ and $\phi$ can be written as
\be
\label{contraction_eq0}
f_\psi(z) = g_{00}(z') + z_1 g_{10}(z') + z_2 g_{01}(z') + z_1 z_2 g_{11}(z') \quad \mbox{and} \quad f_\phi(z') = g_{00}(z') + g_{11}(z')
\ee
for some  polynomials $g_{ab}\, : \, \CC^{n-2}\to \CC$.

Suppose $r_1r_2>1$.
Let us assume that  $f_\phi\notin LY_{n-2}(\rr')$ and show that this assumption leads to a contradiction.
Indeed, if $f_\phi\notin LY_{n-2}(\rr')$ then  $f_\phi(z')=0$ for some $z'\in \odisk{\rr'}$.
By a slight abuse of notations, let us identify the polynomials $g_{ab}$ and their evaluations
at the point $z'$. Then Eq.~(\ref{contraction_eq0}) gives
\be
\label{contraction_eq0'}
g_{00}+g_{11}=0.
\ee
Condition  $\psi \in LY_n(\rr)$ implies $f_\psi(z)\ne 0$ for $z=(z_1,z_2,z')$ with $z_1\in \odisk{r_1}$, $z_2\in \odisk{r_2}$.
From Eqs.~(\ref{contraction_eq0},\ref{contraction_eq0'}) one gets
\be
\label{contraction_eq1}
(1-z_1z_2)g_{00} + z_1g_{10} + z_2g_{01} \ne 0 \quad \mbox{for all $z_1\in \odisk{r_1}$ and $z_2\in \odisk{r_2}$}.
\ee
We shall need  the following lemma due to Asano~\cite{asano1970theorems} and Ruelle~\cite{ruelle2010characterization}.
\begin{lemma}
\label{lemma:Asano}
Let $K_1$ and $K_2$ be any closed subsets of $\CC$ such that $0\notin K_1\cup K_2$.
Suppose $a,b,c,d$ are complex numbers such that 
$a + bz_1 + cz_2 + dz_1z_2\ne 0$
whenever $z_1\notin K_1$ and $z_2\notin K_2$. Then $a+dz\ne 0$ whenever $z\notin -K_1 K_2$.
\end{lemma}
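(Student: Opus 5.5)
The plan is to reduce to a one-variable statement by solving for $z_1$, and then use a continuity/connectedness argument on the image of the complement of $K_2$. Fix $z\notin -K_1K_2$ and suppose, for contradiction, that $a+dz=0$.

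\textbf{Step 1: the degenerate case $a=d=0$.} Then $bz_1+cz_2\ne 0$ for all $z_1\notin K_1$ and $z_2\notin K_2$. The complements of $K_1$ and $K_2$ are open sets containing $0$. Choosing $z_1=z_2=0$ already contradicts the hypothesis. So at least one of $a,d$ is nonzero. Since $z=-a/d$ must hold (if $d=0$ then $a=0$, a contradiction), we get $d\ne 0$, and $z\neq 0$ unless $a=0$.

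\textbf{Step 2: the case $a=0$.} Here $z=0$ is never in $-K_1K_2$. Then $d\ne0$, $a=0$, and the hypothesis at $z_1=z_2=0$ gives $a\neq 0$, a contradiction. So $a\ne0$, $d\neq0$, and $z=-a/d\ne 0$.

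\textbf{Step 3: the main step, with $a,d\ne0$.} Since $z\notin -K_1K_2$, for every $w\notin K_2$ with $w\neq 0$ we have $-z/w\notin K_1$. Set $z_2=w$ and $z_1=\zeta(w):=-z/w$. The hypothesis gives
\[
a+b\zeta+cw+d\zeta w\ne0.
\]
Using $d\zeta w=-dz=a$, this becomes
\[
2a - bz/w + cw\ne 0,
\]
that is, $cw^2+2aw-bz\ne 0$ for all $w\in\CC\setminus(K_2\cup\{0\})$.

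The quadratic $q(w)=cw^2+2aw-bz$ therefore has all its roots in $K_2\cup\{0\}$. We also know $q(0)=-bz$.

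\textbf{Step 4: using the hypothesis near $z_1=0$ to contradict this.} The expected obstacle is extracting a contradiction from the hypothesis without further structure on $K_1,K_2$. My plan is to use the symmetric substitution, with $z_1=u\notin K_1$ and $z_2=-z/u$, which gives $bu^2+2au-cz\ne0$ off $K_1\cup\{0\}$.

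I would then combine these using both $z_1\notin K_1$ and $z_2\notin K_2$ near $0$. Take $z_2$ small and not in $K_2$; then $z_1=-(a+cz_2)/(b+dz_2)$ must lie in $K_1$. As $z_2\to0$ this forces $-a/b\in K_1$ (when $b\ne0$), and similarly $-a/c\in K_2$.

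Plugging in $z_1=-a/b$ and $z_2=-a/c$ relates them to $z$: the product $(-a/b)(-a/c)$ is compared against $-z=a/d$.

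I expect this last algebraic step to be where the genuine Asano–Ruelle content lies. If it does not close directly, I would fall back on the standard argument: regard $z_1\mapsto -(a+cz_2)/(b+dz_2)$ as a M\"obius map. It must send the complement of $K_2$ into $K_1$. Composing with $w\mapsto -z/w$ then yields a M\"obius self-map of $\CC\setminus K_2$ that fixes structure at $0$ and $\infty$, and this should force $ad=bc$ or an equivalent contradiction with $a+dz=0$.
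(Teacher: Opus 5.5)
There is a genuine gap: the nondegenerate case $ad\neq bc$, where the content of the lemma lies, is never proved. The paper gives no proof of its own here and cites Ruelle, so the comparison is with the standard argument.

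\textbf{What fails.} Steps 1--2 are fine; they amount to $a\neq 0$ (set $z_1=z_2=0$), hence $d\neq 0$ and $z=-a/d\neq 0$.

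Step 3 rests on a reversed implication. The condition $z\notin -K_1K_2$ says that no factorization $-z=k_1k_2$ has $k_1\in K_1$ \emph{and} $k_2\in K_2$. It therefore gives $w\in K_2\Rightarrow -z/w\notin K_1$, not $w\notin K_2\Rightarrow -z/w\notin K_1$. For a counterexample, take $K_1=K_2=\CC\setminus\DD_1$, $z=1/2$, $w=1/10$: then $w\notin K_2$ but $-z/w=-5\in K_1$. On the hyperbola $z_1z_2=-z$ you only know that at least one coordinate lies outside its $K_i$, and that alone never lets you invoke the hypothesis. So the claimed zero-freeness of $cw^2+2aw-bz$, and of its mirror $bu^2+2au-cz$, is unfounded.

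The facts you do obtain correctly in Step 4, $-a/b\in K_1$ and $-a/c\in K_2$, give $-a^2/(bc)\in -K_1K_2$. This equals $z$ only when $ad=bc$. That degenerate case is handled more directly by the factorization $a+bz_1+cz_2+dz_1z_2=d(z_1+c/d)(z_2+b/d)$. It forces $-c/d\in K_1$ and $-b/d\in K_2$, hence $z=-(-c/d)(-b/d)\in -K_1K_2$. For $ad\neq bc$, your final sentence about a M\"obius self-map that ``should force'' a contradiction is a hope, not an argument.

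\textbf{The missing idea for $ad\neq bc$.} Work on the Riemann sphere $\widehat{\CC}$ with
\begin{itemize}
\item $U_i=\CC\setminus K_i$ and $\widehat{K}_2=K_2\cup\{\infty\}$;
\item $M(w)=-(a+bw)/(c+dw)$;
\item $J(w)=-z/w=a/(dw)$.
\end{itemize}
The hypothesis gives $M(U_1)\subseteq \widehat{K}_2$, since for $w\in U_1$ the only zero in $z_2$ is $M(w)$. The condition $z\notin -K_1K_2$, together with $J(\infty)=0\in U_1$, gives $J(\widehat{K}_2)\subseteq U_1$.

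The relation $a+dz=0$ is exactly what makes $T=J\circ M$ an involution. Its matrix is $\bigl(\begin{smallmatrix}-ad&-ac\\ bd&ad\end{smallmatrix}\bigr)$, which is traceless with determinant $-ad(ad-bc)\neq 0$, so $T$ is a M\"obius involution. Now $T(U_1)\subseteq J(\widehat{K}_2)\subseteq U_1$ and $T\circ T=\mathrm{id}$ together force $J(\widehat{K}_2)=U_1$. Hence $\widehat{K}_2=J(U_1)$ is both closed and open in $\widehat{\CC}$. It contains $\infty$ and misses $0$, which contradicts the connectedness of $\widehat{\CC}$.

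Your instinct to use connectedness was right, but it has to be applied to this involution rather than to the hyperbola substitution.
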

\noindent
Here $-K_1 K_2=\{-z_1z_2\, : \, z_1\in K_1,\; z_2\in K_2\}$.
A proof of Lemma~\ref{lemma:Asano} can be found in~\cite{ruelle2010characterization}.
Choose $K_j$ as the complement of $\odisk{r_j}$ with $j=1,2$. Since $r_j>0$, one has 
$0\notin K_1\cup K_2$. Choose $a=g_{00}$, $b=g_{10}$, $c=g_{01}$, $d=-a$.
Then Eq.~(\ref{contraction_eq1}) is equivalent to $a + bz_1 + cz_2 + dz_1z_2\ne 0$
for $z_1\notin K_1$ and $z_2\notin K_2$.  Lemma~\ref{lemma:Asano} then implies
$a(1-z)\ne 0$ for $z\notin -K_1 K_2$. Obviously, $-K_1K_2=K_1 K_2$ is the complement of $\odisk{r_1 r_2}$.
From $r_1r_2>1$ one gets $1\in \odisk{r_1 r_2}$ and thus  $1\notin -K_1 K_2$ arriving at a contradiction since $a(1-z)=0$ for $z=1$.
This proves $f_\phi\in LY_{n-2}(\rr')$.

Suppose now that $r_1r_2=1$. Choose an arbitrary sequence $\{s_k\}_{k=1}^\infty$ such that $s_k\in [0,1)$ for all $k$
and $\lim_{k\to \infty} s_k=1$.
Consider a sequence of tensors 
\[
|\psi_k\ra = (|0\ra\la 0| + s_k |1\ra\la 1|)^{\otimes n} |\psi\ra.
\]
It follows directly from the definitions that 
$\psi_k \in LY_n(\rr/s_k)$ and $\lim_{k\to \infty} \psi_k=\psi$.
 Let $\phi_k$ be the tensor obtained from $\psi_k$ by contracting the first two indices.
Clearly, $\lim_{k\to \infty} \phi_k=\phi$. Since $(r_1/s_k)(r_2/s_k)>1$, we can apply the first part of the lemma
to get $\phi_k\in LY_{n-2}(\rr'/s_k)$. Since $s_k\in (0,1)$ we also have $\phi_k\in LY_{n-2}(\rr')$ for all $k\geq 1$. From Corollary \ref{corol:seq}
 we infer that either $\phi$ is identically zero, or $\phi\in LY_{n-2}(\rr')$.

\section{Proof of Theorem \ref{thm:sf71}}
\label{app:sf71}

For completeness we include a proof of Theorem \ref{thm:sf71}. We follow the strategy from Ref.~\cite{sf71}.
We shall first establish the result in the special case where Eq.~\eqref{eq:abs} holds with a strict inequality, i.e.
\begin{equation}
|\langle 0^{m}|\psi\rangle|> \frac{1}{4}\sum_{y\in \{0,1\}^{m}} |\langle y|\psi\rangle|.
\label{eq:abs_strict}
\end{equation}
The case where equality holds in Eq.~\eqref{eq:abs_strict} then follows via a continuity argument based on Hurwitz's theorem. In particular, if equality holds in Eq.~\eqref{eq:abs_strict} for $\psi$,  we can choose a sequence $\{\psi_j\}_{j\geq 1}$ defined by $|\psi_j\rangle=|\psi\rangle+2^{-j} (|0^m\rangle\pm |1^m\rangle)$. The corresponding vectors all satisfy Eqs. (\ref{eq:sf1},\ref{eq:abs_strict}) and $\psi_j \rightarrow \psi$ entrywise. By Corollary \ref{corol:seq}, $\psi\in LY(1)$.

Firstly note that using the fact that $X^{\otimes m}|\psi\rangle=\pm |\psi^*\rangle$ we have
\[
\sum_{y\in \{0,1\}^{m-1}, y\neq 0^m} |\langle 0 y|\psi\rangle|=\sum_{y\in \{0,1\}^{m-1}, y\neq 1^m} |\langle 1 y|\psi\rangle|\quad \text{and}\quad  |\langle 0^m|\psi\rangle|=|\langle 1^m|\psi\rangle|.
\]
Plugging into Eq.~\eqref{eq:abs_strict} we infer
\begin{equation}
|\langle 0^{m}|\psi\rangle|> \sum_{y\in \{0,1\}^{m-1}:y\neq 0^m} |\langle 0 y|\psi\rangle|.
\label{eq:abs3}
\end{equation}

Write 
\begin{equation}
f_{\psi}(s_1,\ldots, s_m)=\sum_{x\in \{0,1\}^m} \langle x|\psi\rangle s_1^{x_1}\ldots s_m^{x_m}= s_1 A_1(s_2,\ldots, s_m)+A_0(s_2,\ldots, s_m).
\label{eq:ppsi}
\end{equation}
We first show that 
\begin{equation}
A_0(s_2,\ldots, s_m)\neq 0 \quad \text{ whenever } \quad |s_2|, |s_3|,\ldots, |s_m| \leq 1.
\label{eq:a0}
\end{equation}
To see this note that 
\begin{align}
|A_0(s_2,\ldots, s_m)|&=\big|\sum_{x\in \{0,1\}^{m-1}} \langle 0 x|\psi\rangle s_2^{x_2} s_3^{x_3}\ldots s_m^{x_m}\big|\\
&\geq |\langle 0^{m}|\psi\rangle|- \sum_{y\in \{0,1\}^{m-1}, y\neq 0^m} |\langle 0 y|\psi\rangle|\\
&> 0
\end{align}
where in the first inequality we used the fact that $|s_j|\leq 1$ for $j\in \{2,3,\ldots, m\}$ and in the second inequality we used Eq.~\eqref{eq:abs3}.

Now suppose that $(t_1, t_2,\ldots, t_m)$ is a root of $f_{\psi}$ and that
\begin{equation}
|t_j| < 1 \quad j\in \{2,3,\ldots, m\}
\label{eq:sleq}
\end{equation}
To complete the proof of the theorem it suffices to show that $|t_1|\geq 1$. Firstly, note that 
\begin{equation}
A_1(t_2,t_3,\ldots, t_m)\neq 0.
\label{eq:a1}
\end{equation}
To see this, note that otherwise we would have $f_{\psi}(t_1,t_2,\ldots, t_m)=0=A_0(t_2,\ldots, t_m)$ which contradicts Eq.~\eqref{eq:a0}.  Using Eqs.~(\ref{eq:ppsi},\ref{eq:a1}) and the fact that $f_{\psi}(t_1,t_2,\ldots, t_m)=0$ gives
\[
|t_1|=\left|\frac{A_0(t_2,t_3,\ldots, t_m)}{A_1(t_2,t_3,\ldots, t_m)}\right|\geq \left(\max_{|s_2|,|s_3|,\ldots, |s_m|\leq 1 }\left|\frac{A_1(s_2,s_3,\ldots, s_m)}{A_0(s_2,t_3,\ldots, s_m)}\right| \right)^{-1}=1,
\]
where the last inequality follows from the lemma below.
\begin{lemma}
\begin{equation}
\max_{|s_2|,|s_3|,\ldots, |s_m|\leq 1 }\left|\frac{A_1(s_2,s_3,\ldots, s_m)}{A_0(s_2,t_3,\ldots, s_m)}\right|=1
\label{eq:smax}
\end{equation}
\end{lemma}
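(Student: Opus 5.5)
The plan is to show that $|A_1|=|A_0|$ on the distinguished boundary (the torus $|s_2|=\cdots=|s_m|=1$), and then to transfer this to the whole closed polydisk with the maximum modulus principle. Throughout I read the denominator in Eq.~\eqref{eq:smax} as $A_0(s_2,s_3,\ldots,s_m)$; the $t_3$ there is a typo.

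\textbf{Step 1: rewrite $A_1$ using the symmetry.} Write $\bar x$ for the bitwise complement of $x\in\{0,1\}^{m-1}$. Taking the $\langle 1x|$ component of Eq.~\eqref{eq:sf1} gives
\[
\langle 1x|\psi\rangle=\pm\,\overline{\langle 0\bar x|\psi\rangle}.
\]
Hence
\[
A_1(s)=\sum_x \langle 1x|\psi\rangle\, s^x=\pm\sum_x \overline{\langle 0\bar x|\psi\rangle}\, s^x ,
\]
where $s^x=\prod_{j\ge2}s_j^{x_j}$.

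\textbf{Step 2: compare $|A_1|$ and $|A_0|$ on the torus.} Suppose $|s_j|=1$ for all $j\ge 2$. Then $s_j^{x_j}=s_j\,\bar s_j^{\,1-x_j}$, so $s^x=\big(\prod_{j\ge2}s_j\big)\,\bar s^{\,\bar x}$. Substituting into Step 1 and reindexing by $y=\bar x$ gives
\[
A_1(s)=\pm\Big(\prod_{j\ge 2}s_j\Big)\,\overline{\sum_y\langle 0y|\psi\rangle s^y}
=\pm\Big(\prod_{j\ge 2}s_j\Big)\,\overline{A_0(s)}.
\]
Therefore $|A_1(s)|=|A_0(s)|$ on the torus. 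Eq.~\eqref{eq:a0} says $A_0$ does not vanish there, so the ratio $A_1/A_0$ has modulus exactly $1$ on the torus.

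\textbf{Step 3: extend to the closed polydisk.} By Eq.~\eqref{eq:a0}, $A_0$ has no zeros on the compact closed polydisk $\{|s_j|\le 1\}$. By continuity it therefore has no zeros on a slightly larger open polydisk, so $A_1/A_0$ is holomorphic there. The maximum modulus principle on a polydisk places the maximum of a holomorphic function on the distinguished boundary. To see this concretely, apply the one-variable maximum principle in $s_2$ with the other variables fixed, push $|s_2|$ to $1$, and repeat for $s_3,\ldots,s_m$. Combined with Step 2, this gives a maximum of at most $1$. The value $1$ is attained on the torus, so the maximum equals $1$.

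\textbf{Main obstacle.} The only real content is the reflection identity in Step 2. Its bookkeeping must be done carefully: the complement map $x\mapsto\bar x$, the sign $\pm$, and the unimodular prefactor $\prod_j s_j$. The rest is the standard maximum principle, which is legitimate because Eq.~\eqref{eq:a0} already guarantees that $A_0$ has no zeros on the closed polydisk, which is exactly where the strict inequality \eqref{eq:abs_strict} is used.
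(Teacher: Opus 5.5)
Your proof is correct and follows essentially the same route as the paper: the one-variable-at-a-time maximum modulus principle pushes the maximum to the torus $|s_2|=\cdots=|s_m|=1$, where the symmetry $X^{\otimes m}|\psi\rangle=\pm|\psi^*\rangle$ gives $A_1=\pm\bigl(\prod_{j\ge 2}s_j\bigr)\overline{A_0}$ and hence $|A_1/A_0|=1$. You are also right that the $t_3$ in the denominator is a typo, and your explicit use of Eq.~\eqref{eq:a0} to make $A_1/A_0$ holomorphic on a neighborhood of the closed polydisk fills in a step the paper leaves implicit.
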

\begin{proof}
Suppose that $(r_2,r_3,\ldots, r_m)$ maximizes the LHS of Eq.~\eqref{eq:smax} and such that $|r_j|<1$ for some $j$. In this case one can choose $w_j\in \mathbb{C}$ such that $|w_j|=1$ and such that the tuple
\[
(r_2,\ldots, r_{j-1},w_j,r_{j+1},\ldots, r_m)
\]
maximizes the LHS of Eq.~\eqref{eq:smax}. This follows by using the maximum modulus principle from complex analysis, on the univariate function
\[
F(w_j)=\frac{A_1(r_2,\ldots, r_{j-1},w_j, r_{j+1},\ldots, r_m)}{A_0(r_2,\ldots, r_{j-1},w_j, r_{j+1},\ldots, r_m)}
\]
from which we infer that the maximum of $|F(w_j)|$ over the unit circle is achieved on its boundary. Applying this fact recursively shows that the maximum is achieved by a vector with all coordinates on the unit circle:
\[
\max_{|s_2|,|s_3|,\ldots, |s_m|\leq 1 }\left|\frac{A_1(s_2,s_3,\ldots, s_m)}{A_0(s_2,s_3,\ldots, s_m)}\right|=\max_{|s_2|,|s_3|,\ldots, |s_m|=1 }\left|\frac{A_1(s_2,s_3,\ldots, s_m)}{A_0(s_2,s_3,\ldots, s_m)}\right|.
\]
Finally, using the symmetry $X^{\otimes m}|\psi\rangle=\pm |\psi^*\rangle$ and Eq.~\eqref{eq:ppsi} we see that
\[
\left(\prod_{j=1}^{m} s_j\right)f_{\psi^{*}}(1/s_1,\ldots, 1/s_m)=\pm f_{\psi}(s_1,\ldots, s_m)
\]
and therefore
\[
\left(\prod_{j=2}^{m} s_j\right) \left(A_1(s_2,\ldots, s_m)\right)^*=\pm A_0(s_2,\ldots, s_m)  \qquad \text{whenever} \quad |s_2|=|s_3|=\ldots =|s_m|=1.
\]
This gives
\[
\max_{|s_2|,|s_3|,\ldots, |s_m|=1 }\left|\frac{A_1(s_2,s_3,\ldots, s_m)}{A_0(s_2,s_3,\ldots, s_m)}\right|=\max_{|s_2|,|s_3|,\ldots, |s_m|=1 }\left|\pm \frac{\left(A_0(s_2,s_3,\ldots, s_m)\right)^*}{A_0(s_2,s_3,\ldots, s_m)}\right|=1.
\]
\end{proof}

\bibliographystyle{unsrt}
\bibliography{mybib}

\end{document}